\numberwithin{equation}{section}
\newtheorem{proposition}{Proposition}[subsection]
\newtheorem{lemma}[proposition]{Lemma}
\newtheorem{corollary}[proposition]{Corollary}
\newtheorem{theorem}{Theorem}[section]
\newtheorem{claim}{Claim}[section]
\theoremstyle{definition}
\newtheorem{definition}{Definition}[subsection]
\newtheorem{remark}{Remark}[subsection]
\newcommand{\lie}{\mathcal{L}}
\newcommand{\nbar}{\bar \nabla}
\newcommand{\he}{\hat{e}}
\newcommand{\ha}{\hat{a}}
\newcommand{\hb}{\hat{b}}
\newcommand{\hc}{\hat{c}}
\newcommand{\hd}{\hat{d}}
\newcommand{\vertiiii}[1]{{\left\vert\kern-0.25ex\left\vert\kern-0.25ex\left\vert\kern-0.25ex\left\vert #1 \right\vert\kern-0.25ex\right\vert\kern-0.25ex\right\vert\kern-0.25ex\right\vert}}
\newcommand{\vertiii}[1]{{\left\vert\kern-0.25ex\left\vert\kern-0.25ex\left\vert #1 \right\vert\kern-0.25ex\right\vert\kern-0.25ex\right\vert}}
\newcommand{\tE}{{\mathcal{E}}}
\newcommand{\nnb}{\nonumber}
\newcommand{\ulg}{\underline{g}}
\newcommand{\ulh}{\underline{h}}
\newcommand{\ulR}{\underline{R}} 
\newcommand{\Rbb}{\mathbb{R}}
\newcommand{\Zbb}{\mathbb{Z}}
\newcommand{\Pbb}{\mathbb{P}}
\newcommand{\udl}{\underline}
\newcommand{\udn}[1]{\underline{\nabla}_{#1}}
\newcommand{\del}[1]{{\partial_{#1}}}
\newcommand{\AND}{{\quad\text{and}\quad}}
\newcommand{\Li}{L^\infty}
\newcommand{\la}{\langle}
\newcommand{\ra}{\rangle}
\newcommand{\A}{\mathcal{A}}
\newcommand{\al}[2]{
	\begin{align}\label{E:#1}
		#2
	\end{align}
}
\newcommand{\gat}[1]{
	\begin{gather}
		#1
	\end{gather}
}
\newcommand{\als}[1]{
	\begin{align*}
		#1
	\end{align*}
}
\newcommand{\p}[1]{
	\begin{pmatrix}
		#1
	\end{pmatrix}
}
\newcommand{\ts}{\tensor}
\newcommand{\di}{\mathrm{d}} 
\newcommand{\tr}{\text{tr}}
\newcommand{\nb}{\underline{\nabla}}
\newcommand{\be}{\begin{equation}}
	\newcommand{\ee}{\end{equation}}
\DeclareMathOperator{\diag}{diag}
\DeclareMathOperator{\Ima}{Im}
\begin{document}

	\title[Local Einstein--Yang--Mills system in the temporal gauge]{A new symmetric hyperbolic formulation and the local Cauchy problem for the Einstein--Yang--Mills system in the temporal gauge}

	\author{Chao Liu}
	\address[Chao Liu]{Center for Mathematical Sciences and School of Mathematics and Statistics, Huazhong University of Science and Technology, Wuhan 430074, Hubei Province, China.}
	\email{chao.liu.math@foxmail.com}
	
	\author{Jinhua Wang}
\address[Jinhua Wang]{School of Mathematical Science, Xiamen University,
	Xiamen 361005, Fujian Province, China. }
\email{wangjinhua@xmu.edu.cn}

\begin{abstract}
	\textit{Motivated} by the future stability problem of solutions of the Einstein--Yang--Mills (EYM) system \textit{with arbitrary dimension}, we \textit{aim} to $(1)$ construct a tensorial symmetric hyperbolic formulation for the $(n+1)$-dimensional EYM system in the \textit{temporal gauge}; $(2)$ establish the local well-posedness for the Cauchy problem of EYM equations in the \textit{temporal gauge} using this tensorial symmetric hyperbolic system.  By introducing certain auxiliary variables, we extend essentially the $(n+1)$-dimensional Yang--Mills system to a tensorial symmetric hyperbolic system. On the contrary, this symmetric hyperbolic system with data satisfying some constraints (extending the Yang--Mills constraints) reduces to the Yang--Mills system. Consequently, an equivalence between the EYM and the tensorial symmetric hyperbolic system with a class of specific data set is concluded. Furthermore, a general symmetric hyperbolic system over tensor bundles is studied, with which, we conclude the local well-posedness of the EYM system. 	
	It turns out the idea of symmetric hyperbolic formulation of the Yang--Mills field is very useful in prompting a tensorial Fuchsian formalism and  proving the future stability for the EYM system with arbitrary dimension (i.e., this new symmetric hyperbolic formulation of EYM manifests well behaved lower order terms for long time evolution), see our companion article \cite{LW2021a} with Todd A. Oliynyk.
	


		\vspace{2mm}

{{\bf Mathematics Subject Classification:} Primary 35Q75; Secondary 35Q76, 83C05, 70S15}
\end{abstract}

\maketitle

\setcounter{tocdepth}{2}

\pagenumbering{roman} \pagenumbering{arabic}

\section{Introduction}
In general relativity, the Einstein equations describe the developments of the spacetime and interactions with the source fields. Meanwhile, the Yang--Mills theory, as the foundation of the standard model of particle physics, is a gauge theory with a non-Abelian local symmetry group $G$ and the Yang--Mills field can serve as a (matter) \textit{field source} of the Einstein equations. Particularly, in the case that the group $G$ is an Abelian group $U(1)$, it reduces to the electromagnetic theory in terms of the Maxwell equations.

It is physically and mathematically important to understand the  local and  global behaviours of solutions of the fully nonlinear Einstein--Yang--Mills (EYM, for short) system. This article and the companion one \cite{LW2021a} serve these purposes of the local and global ones, respectively. Although the local result of the EYM system is relatively simple and has been studied in some contexts by Choquet-Bruhat and Friedrich, due to the needs of developing a framework for the global results with \textit{arbitrary dimension}, we, in this article, \textit{aim} to
$(1)$ propose a new tensorial symmetric hyperbolic formulation for the $(n+1)$-dimensional EYM system in \textit{temporal gauge} and $(2)$ establish the local well-posedness of the Cauchy problem of EYM equations in the \textit{temporal gauge}. Let us briefly recall some previous works and state our main motivations.

The \textit{main motivation} of this article is to construct a symmetric hyperbolic system of the EYM equations in the temporal gauge and ensure that this formulation manifests nice behaviour for the global problem. The local well-posedness for the Einstein--Yang--Mills  equations with any dimension has been implied by Choquet-Bruhat \cite{Choquet-Bruhat2009,ChoquetBruhat1992} through the obvious hyperbolicity in the \textit{Lorentz gauge} (similar to the Einstein--Maxwell system), while she pointed out ``the global results may be quite different'', since it requires well behaved lower order terms for the global problem and the formulation in the Lorentz gauge simply may not serve for it. We attempt, as the \textit{first motivation}, to find a formulation of the EYM which may be proper for the global problem as well.
Although Friedrich \cite{Friedrich-hyperbolicity, Friedrich1991} was able to show the hyperbolicity of the Einstein equations and other gauge field such as Yang--Mills field, and further yielded the global existence of solutions of the \textit{four dimensional} Einstein--Maxwell--Yang--Mills system, the conformal method he applied to this work has some difficulties for higher dimensional cases. The formulation that we try to construct initially in this paper and to be developed for global picture in \cite{LW2021a} will be suitable to work for any dimension $n+1\geq4$. This is the \textit{second motivation} for pursuing this new formulation.

To begin with, we briefly recall the Einstein and Yang--Mills theory (for more details, please refer to, for instance, \cite{Ghanem2014,Choquet-Bruhat2009,Ginibre1982,Segal1979,ChoquetBruhat1991,Eardley1982,Ginibre1981} and references therein). Suppose $(\mathcal{M}^{n+1},  g_{a b})$ is a connected Lorentzian manifold.		
Let $G$ be an $(n+1)$-dimensional connected Lie group with Lie algebra $\mathcal{G}$. Throughout we  assume
that $\mathcal{G}$ admits an Ad-invariant positive scalar product denoted by a dot ``$ \cdot $''.
A connection $A$ on a $G$-principal bundle over $\mathcal{M}^{n+1}$ can be expressed as a $\mathcal{G}$-valued $1$-form
and the curvature $ F$ is the $\mathcal{G}$-valued $2$-form given in terms of $A$ by (with $[\cdot,\cdot]$ the Lie bracket in
$\mathcal{G}$ and $\nabla$ the covariant derivative associated to ${g}_{a b}$),
\be\label{def-F}
F = \di A + [A, A ], \quad \text{i.e.,} \quad  F_{ab} = \nabla_{a} A_b-\nabla_{b} A_a + [A_a, A_b ].
\ee  
The equation of motion of the free Yang--Mills field is governed by the conservation law,
\begin{align}
	D^a  F_{a c}  
	& =0,  \label{eq-maxwell-div-o0}
\end{align}
where the symbol $ D =  \nabla + [ A, \cdot]$ denotes the gauge covariant derivative of a $\mathcal{G}$-valued tensor.
In addition, by the construction \eqref{def-F}, the Yang-Mills \textit{Bianchi identities} hold true
\be
D_{[a}  F_{b c]} = 0.  \label{eq-YM-bianchi-0}
\ee

The Einstein--Yang--Mills system with a cosmological constant $\Lambda\in \Rbb$, is given by 
\begin{align}
	G_{a b} + \Lambda  g_{a b}& =  \mathcal{T}_{a b}, \label{eq-einstein} \\
	D^a  F_{a b} & =0,  \label{eq-YM-div-0}  \\
	D_{[a}  F_{b c]}& = 0.  \label{eq-YM-bianchi-1} 
\end{align}
where ${G}_{ab}:={R}_{ab}-\frac{1}{2}{R} {g}_{ab}$ is the Einstein tensor of the metric ${g}_{ab}$, and $T_{a b}$ is the stress energy tensor of a Yang--Mills field defined by
\begin{align}\label{e:stregy}
	\mathcal{T}_{a b} =  F_{a}{}^{ c} \cdot  F_{b c} - \frac{1}{4}  g_{a b}   F^{c d}\cdot  F_{c d}.
\end{align}
Here ``$\cdot$'' is the Ad-invariant positive scalar product of $\mathcal{G}$. The stress energy tensor $\mathcal{T}_{a b}$
is divergence free (the conservation law) if the Yang–Mills equations hold (Ref. \cite[\S $6.3$]{Choquet-Bruhat2009} or \cite[\S $3.2$]{Ghanem2014}).

\subsection{Notations and basic concepts}\label{sec-Pre} 	
\subsubsection{Abstract index notations and brackets}\label{s:AIN}
During this paper, we use the \textit{abstract index notations} $a, b, c \cdots$ to index the tensors (e.g., see \cite{Wald2010}).  
We always use $g^{ab}$ and $g_{ab}$ to raise or lower the indices of spacetime tensors (correspondingly, it is equivalent to use the \textit{induced metric} $h_{ab}$ and $h^{ab}$, see \S\ref{s:geohysf} below, to lower or raise for spatial tensors tangent to a spacelike hypersurface $\Sigma$) if there is no explicit indication. In order to express systems in this article in terms of \textit{coordinate-independent formulations}, we introduce an arbitrarily given reference metric $\ulg_{ab}$ on the spacetime $\mathcal{M}^{n+1}$. To simplify the calculations, without loss of generality, we assume $\ulg_{ab}=-(dt)_a(dt)_b+ \ulh_{ab}$ (with zero shift), and $\mathcal{M}^{n+1}=I\times\Sigma$, $I\subset\Rbb$ an open interval, $\Sigma$ is an $n$-dimensional closed Riemannian manifold with metric $\ulh_{ab}$. From now on, we will take this specific reference background throughout this article.  

Denote $\nabla_{c}$, $\tensor{R}{^a_{cde}}$ the covariant derivative and Riemann curvature tensor associating to the metric $g_{a b}$, and $R_{a b}$ the Ricci curvature.   
In addition, we add a bar to denote the ones with respect to the induced metric $h_{ab}$ (defined later) on $\Sigma$, for instance, $\nbar_c, \, \tensor{\bar R}{^a_{cde}}$ are the covariant derivative and Riemann curvature tensor of $h_{a b}$. Correspondingly, underlined notations $\nb_c$, $\underline{R}_{a b}$ refer to the covariant derivative, Ricci curvature associating to the reference metric $\ulg_{a b}$. 
Finally, let $\tensor{\Gamma}{^c_{a b}}$, $\tensor{\underline{\Gamma}}{^c_{a b}}$ be the Christoffel symbols of $g_{a b}$ and $\ulg_{a b}$ respectively.

The bracket on the index indicates the symmetric part of a tensor, for example, 
\begin{align*}
	A_{[a b c]} =   \frac{1}{6}(A_{a b c} + A_{b c a} + A_{c a b}).  
\end{align*}

\subsubsection{Connections}\label{s:conx}
We denote $\tensor{X}{^a_{bc}}$ the $(1,2)$-tensor\footnote{In a local coordinate system, $\tensor{X}{^c_{a b}}=\tensor{\Gamma}{^c_{a b}}-\tensor{\underline{\Gamma}}{^c_{a b}}$.  
} 
satisfying
\begin{equation*}
	\nabla_a  \tensor{\Phi}{^{b_1\cdots b_k}_{c_1\cdots c_l}}=\udn{a}  \tensor{\Phi}{^{b_1\cdots b_k}_{c_1\cdots c_l}}+\sum_i \tensor{X}{^{b_i}_{ad}}\tensor{\Phi}{^{b_1\cdots d \cdots b_k}_{c_1\cdots c_l}}-\sum_j \tensor{X}{^d_{ac_j}}\tensor{\Phi}{^{b_1\cdots b_k}_{c_1\cdots d \cdots c_l}}
\end{equation*}
for any $(k,l)$-tensor $\Phi$ on the spacetime. We define the vector field
\be\label{def-X}
X^a :=  g^{bc} \tensor{X}{^a_{bc}}.
\ee
For later use, direct calculations yield
\al{X2}{
	\tensor{X}{^a_{bc}}
	=  -\frac{1}{2}\bigl(g_{ec}\udn{b}g^{ae}+g_{be}\udn{c}g^{ae}-g^{ae}g_{bd}g_{cf}\udn{e}g^{df}\bigr)  \AND
	X^a =  -\udn{e}g^{ae}+\frac{1}{2} g^{ae}g_{df}\udn{e}g^{df}.
}

\subsubsection{Norms}
We denote by $|\cdot|$ the norm induced by the Ad-invariant scalar product $\cdot$ (recall this product in \eqref{e:stregy}), i.e., $|a|^2:=a\cdot a$ for any $a\in \mathcal{G}$. We also define an inner product $\la v,u \ra_{\ulh}$ with respect to the $\ulh$ whose detailed definitions will be given when we need it (see \eqref{e:inprod} in \S\ref{sec-sym}). The definitions of Sobolev spaces in this article comes from Choquet-Bruhat \cite[Appendix I]{Choquet-Bruhat2009} with the Riemmanian manifold being replaced by $(\Sigma,\ulh)$ and the covariant derivatives being the corresponding one. We remark that if the variable $a$ is a $\mathcal{G}$-valued tensor, then its Sobolev norms are defined by $\||a|\|_{W^{s,p}}$, and for simplicity, we still denote it as $\|a\|_{W^{s,p}}$, the Ad-invariant norm will be clear from the contexts.	

\subsubsection{Geometry of hypersurfaces}\label{s:geohysf}
Let $t$ be a global time function on the spacetime $\mathcal{M}^{n+1} = I \times \Sigma$.
The gradient $\nabla^a t=g^{ab}\nabla_b t=g^{ab}(dt)_b$ is normal to the slice $\Sigma_t:=\{t\} \times \Sigma$  with respect to $g_{a b}$. Then 
\be\label{def-T}
T^a := (-\lambda)^{-\frac{1}{2}} \nabla^a t, \quad \text{with} \quad \lambda = g_{ab} \nabla^a t \nabla^b t= g^{ab} \nabla_a t \nabla_b t
\ee is the unit timelike vector field normal to $\Sigma_t$. 	 
The induced spatial metric on $\Sigma_t$ is given by
\be\label{def-h}
h_{a b} := g_{a b} + T_a T_b, 
\ee
and $h^{a b} := g^{a b} + T^a T^b$ is the inverse, where $T_a:=T^b g_{ab}=(-\lambda)^{-\frac{1}{2}} \nabla_a t$. In addition,  we define the projection onto $\Sigma_t$, 
\begin{align}\label{e:def-h2}
	\tensor{h}{^c_d} := \tensor{\delta}{^{c}_{ d}} + T^c T_d.
\end{align}

Let us define the \textit{second fundamental form} \[k_{ab} := \frac{1}{2}\lie_T h_{ab}.\] It holds that 
\begin{align}\label{e:2ndfm}
	k_{ab}=\nabla_b T_a+T_b \nabla_T T_a = \nbar_a T_b = \nbar_b T_a,
\end{align}
where we use the notation $\nabla_T:=T^c\nabla_c$. We also denote $\tr k := h^{a b} k_{a b}$ the trace of $k_{ab}$. 

Let $\Psi$ be any $(r,s)$ tensor on the manifold $\mathcal{M}^{n+1}$. We call $\Psi$ \textit{$\Sigma$-tangent} (or \textit{spatial tensor}) if \[\tensor{\Psi}{^{a_1 \cdots a \cdots a_r}_{b_1\cdots b_i \cdots b_s}} T_a = 0 \quad \text{and} \quad \tensor{\Psi}{^{a_1 \cdots a_i \cdots a_r}_{b_1\cdots b \cdots b_s}} T^b = 0.\] As a remark, if a $(0, r)$ tensor $\Psi$ is $\Sigma$-tangent, then the Lie derivative $\lie_T \Psi$ is $\Sigma$-tangent as well (see \eqref{e:lienb} later).

In addition, in this article, we also need the normal vector $\nu_c:=\nb_c t=(dt)_c$ and $\nu^a:=\ulg^{ab}\nu_b=-(\partial/\partial t)^a$ with respect to the reference metric $\ulg_{ab}$, and we define the background projection by
\begin{equation}\label{e:ulhproj}
	\tensor{\ulh}{^c_d} := \tensor{\delta}{^{c}_{ d}} + \nu^c \nu_d. 
\end{equation}

\subsubsection{Variables}
In the new symmetric hyperbolic formulation, we employ the following variables 
\be\label{vari-metric}
g^{a b}, \quad \tensor{g}{^{a b}_d} := \nb_d g^{a b}, 
\ee
and
\be\label{vari-YM}
A_a, \quad E_b:= \tensor{h}{^a_b}  F_{a p} T^p, \quad H_{db}:=\tensor{h}{^c_d} F_{c a} \tensor{h}{^a_b}, \AND \tE^a := E_b g^{b c} \tensor{h}{^a_c}. 
\ee
Denote the compacting variable 
\be\label{def-U}
\mathbf{U}: 
= \left(- \nu^d \tensor{g}{^{a b}_d}, \,
\tensor{\ulh}{^d_{c}} \tensor{g}{^{a b}_d}, \, g^{ab}, \, \tE^a, \, E_a, \, H_{db}, \,  A_d \right)^T.
\ee

\subsubsection{Temporal gauges}\label{s:temgg}
The \textit{temporal gauge} is described as
\begin{equation}\label{e:temgg}
	A_a T^a = 0.
\end{equation} 
Note that, in the temporal gauge, $A_a =\bar{A}_a := A_b \tensor{h}{^b_a}$.

\subsection{Main result}\label{s:mainthm}
In this section, let us state the main theorem and its proof is postponed to \S\ref{s:mainpf}.

\begin{theorem}\label{t:mainthm}
	Let $(\Sigma,\, g_0,\, k_0, \, A_0, \, E_0)$  
	be an initial data set satisfying the Einstein and Yang--Mills constraints. Suppose $(g_0, k_0 ) \in H^{s+1} (\Sigma) \times H^{s} (\Sigma)$ and $( A_0, E_0) \in H^{s} (\Sigma) \times H^{s} (\Sigma)$ for $s\in\Zbb_{>\frac{n}{2}+1}$. Then 
	
	$(a)$ (Local existence) There is a constant $t_\ast >t_0$ and a  unique maximal development $(\mathcal{M}^{n+1}= I \times \Sigma, \, g, \, A, \, E)$, $I \subset \mathbb{R}$, for the Einstein--Yang--Mills system in the temporal gauge, such that 
	\begin{equation*}
		g \in \bigcap_{\ell=0}^{s}C^\ell([t_0,t_\ast),H^{s-\ell+1}(\Sigma ))\AND E, A\in \bigcap_{\ell=0}^{s }C^\ell([t_0,t_\ast),H^{s-\ell}(\Sigma )),
	\end{equation*}
where $E, \, H$ are defined by \eqref{vari-YM}. 	

	$(b)$ (Continuation principle) If
	\begin{equation*}
		\| \left( \nu^d \tensor{g}{^{ab}_d}, \, \tensor{\ulh}{^d_{c}} \tensor{g}{^{a b}_d}, \, g^{ab}, \, E_a, \, A_d, \, H_{a b} \right)\|_{\Li ([t_0,t_\star), W^{1,\infty}(\Sigma))}<\infty,
	\end{equation*}
	then the solution $\left( \nu^d \tensor{g}{^{ab}_d}, \, \tensor{\ulh}{^d_{c}} \tensor{g}{^{a b}_d}, \, g^{ab}, \, E_a, \, A_d \right)$ can be uniquely continued, as a classical solution with the same regularity, to a larger time interval $t\in[t_0, t^\star)$ where $t^\star\in(t_\star, + \infty)$.
\end{theorem}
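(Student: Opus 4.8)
The plan is to derive Theorem~\ref{t:mainthm} from two ingredients developed in the body of the paper: the tensorial symmetric hyperbolic reformulation of the EYM system in the temporal gauge in terms of the compacting variable $\mathbf{U}$ of \eqref{def-U}, and the general well-posedness theory for symmetric hyperbolic systems over tensor bundles together with the equivalence between such a constrained symmetric hyperbolic system and the EYM system. First I would record the reduction. Writing the Einstein equations in a suitable gauge for the metric (of generalized wave-coordinate type, formulated covariantly relative to the reference metric $\ulg_{ab}$ by prescribing the vector field $X^a$ of \eqref{def-X}) turns the evolution of $g^{ab}$ into a wave-type system whose first-order reduction uses the auxiliary variable $\tensor{g}{^{ab}_d}=\nb_d g^{ab}$, split via the background projection \eqref{e:ulhproj} into $-\nu^d\tensor{g}{^{ab}_d}$ and $\tensor{\ulh}{^d_c}\tensor{g}{^{ab}_d}$. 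The Yang--Mills equations \eqref{eq-YM-div-0}--\eqref{eq-YM-bianchi-1} in the temporal gauge \eqref{e:temgg} split into evolution equations for $(A_d,E_a,H_{db})$ and for the auxiliary variable $\tE^a$, plus the Yang--Mills constraint. One checks that the resulting principal part $\mathbf{A}^\mu\partial_\mu\mathbf{U}$ is symmetric and $\mathbf{A}^0$ positive definite with respect to the inner product $\langle\cdot,\cdot\rangle_{\ulh}$ of \eqref{e:inprod}, with source $\mathbf{F}(\mathbf{U})$ depending smoothly on $\mathbf{U}$; this is exactly the tensorial symmetric hyperbolic system constructed earlier.

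Next I would build the Cauchy data for this system. From $(\Sigma,g_0,k_0,A_0,E_0)$ one assigns $\mathbf{U}|_{t_0}=\mathbf{U}_0$: the components of $g^{ab}$ and the spatial part $\tensor{\ulh}{^d_c}\tensor{g}{^{ab}_d}$ are read off from $g_0$ and its spatial derivatives; the transversal part $-\nu^d\tensor{g}{^{ab}_d}|_{t_0}$ is fixed by $k_0$ through \eqref{e:2ndfm} together with the gauge choice (which also fixes the lapse); $E_a|_{t_0}=E_0$, while $\tE^a|_{t_0}$ and $H_{db}|_{t_0}$ are determined algebraically by $E_0$, $g_0$ and the spatial curvature of $A_0$ via \eqref{def-F} and \eqref{vari-YM}. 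The hypotheses $(g_0,k_0)\in H^{s+1}(\Sigma)\times H^s(\Sigma)$ and $(A_0,E_0)\in H^s(\Sigma)\times H^s(\Sigma)$ with $s>\tfrac n2+1$ then give $\mathbf{U}_0\in H^s(\Sigma)$ componentwise. Applying the general symmetric hyperbolic existence theorem of the paper yields $t_\ast>t_0$ and a unique solution $\mathbf{U}\in\bigcap_{\ell=0}^{s}C^\ell([t_0,t_\ast),H^{s-\ell}(\Sigma))$; integrating the relation $\tensor{g}{^{ab}_d}=\nb_d g^{ab}$ back recovers $g^{ab}\in\bigcap_{\ell=0}^{s}C^\ell([t_0,t_\ast),H^{s-\ell+1}(\Sigma))$, which is the metric regularity asserted in part~(a), while $A$ and $E$ inherit the $C^\ell H^{s-\ell}$ regularity directly.

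The core of the argument is the propagation of constraints. One must show that when $\mathbf{U}_0$ comes from data satisfying the Einstein and Yang--Mills constraints, the solution $\mathbf{U}$ genuinely solves the EYM system in the temporal gauge, i.e.\ that all the relations that are \emph{not} imposed as evolution equations are preserved: that $\tensor{g}{^{ab}_d}$ is really $\nb_d g^{ab}$, that the gauge condition on $X^a$ and the temporal-gauge condition $A_aT^a=0$ hold, the Hamiltonian and momentum constraints, and the Yang--Mills constraint $D^aF_{a c}T^c=0$. The mechanism is to derive, from the evolution equations, a closed homogeneous (linear, first-order symmetric hyperbolic or transport-triangular) system for the collection of constraint quantities, using the second Bianchi identity \eqref{eq-YM-bianchi-1}, the contracted Bianchi identity for the Einstein tensor (and the divergence-free property of $\mathcal{T}_{ab}$ recorded after \eqref{e:stregy}), and the commutation of $\nb$ with the evolution; since every constraint quantity vanishes at $t_0$ by the construction of $\mathbf{U}_0$, uniqueness for this auxiliary system forces them to vanish on all of $[t_0,t_\ast)$. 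This, combined with the equivalence established earlier, gives that $(\mathcal{M}^{n+1},g,A,E)$ solves the EYM system, and uniqueness of the EYM development then follows from the uniqueness for the symmetric hyperbolic system; the maximal development is obtained by patching all such local solutions.

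For part~(b) I would run the standard continuation argument: differentiating the symmetric hyperbolic system, commuting spatial derivatives through, and using Moser-type product and commutator estimates gives an energy inequality $\tfrac{d}{dt}\|\mathbf{U}\|_{H^s(\Sigma)}^2\lesssim P\!\left(\|\mathbf{U}\|_{W^{1,\infty}(\Sigma)}\right)\|\mathbf{U}\|_{H^s(\Sigma)}^2$ for a polynomial $P$; Grönwall then shows that if the listed components $\left(\nu^d\tensor{g}{^{ab}_d},\tensor{\ulh}{^d_c}\tensor{g}{^{ab}_d},g^{ab},E_a,A_d,H_{ab}\right)$ stay bounded in $\Li([t_0,t_\star),W^{1,\infty}(\Sigma))$, then $\|\mathbf{U}\|_{H^s(\Sigma)}$ remains finite up to $t_\star$, so $\mathbf{U}(t)$ for $t$ near $t_\star$ lies in $H^s$ and the existence step restarts the solution on a larger interval $[t_0,t^\star)$ with the same regularity; the constraints continue to propagate by the previous step, so the continued object is again an EYM solution. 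The main obstacle is the constraint-propagation step together with the requirement, already in the reduction, that the formulation be \emph{simultaneously} symmetric hyperbolic in the temporal gauge and constraint-propagating: the many constraint quantities sit at different differentiability levels (because $g$ carries one more derivative than $(A,E)$), so they must be organized into a system with the right symmetric-hyperbolic or subsystem-triangular structure for vanishing initial data to propagate, and one must verify that the chosen metric gauge — unlike the harmonic gauge — is genuinely propagated, which hinges on the compatibility of the first-order reduction with the contracted Bianchi identity.
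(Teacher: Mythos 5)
Your proposal matches the paper's proof: the paper likewise proves part (a) by (i) verifying the symmetry of $\mathbf{A}^c$ and the positivity of $\mathbf{A}^0$ for the compacted system \eqref{e:EYMmain} with respect to $\langle\cdot,\cdot\rangle_{\ulh}$, (ii) invoking the tensor-bundle well-posedness theorem (Theorem~\ref{t:lcexuqct}) to produce and continue a unique solution $\mathbf{U}$, and (iii) invoking the equivalence theorem (Theorem~\ref{t:depctrnt}), whose content is exactly your constraint-propagation step via the closed homogeneous transport system for $\alpha^d$, $\beta_{pq}$, $\gamma$ (Lemma~\ref{t:ctrtevl}) plus the standard propagation of the wave-gauge constraint. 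The only stray remark is your parenthetical that the metric gauge is ``unlike the harmonic gauge''--the paper's metric gauge \eqref{E:CONSTR1} is precisely a generalized wave (harmonic-type) gauge, and its propagation is handled by the usual contracted-Bianchi argument, as you yourself note.
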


\subsection{Related works}
Since the EYM system is a coupled system of the Einstein and Yang--Mills equations, let us briefly review relevant results respectively.
The local well-posedness for the Cauchy problem of vacuum Einstein equations was initiated by Choquet-Bruhat \cite{Choquetbruhat2021} via introducing the wave gauge and solving the reduced Einstein equations. Based on the wave gauge or variants of the generalized wave gauge, there were ample results concerning the long time stability of gravity along the flow of vacuum Einstein equations or Einstein equations coupled with various matter fields, for instance,   \cite{Ringstroem2008, Lindblad2005b, Lindblad2010,Liu2018,Liu2018b,Liu2018a,Wang2019}. A thorough review for the Cauchy problem of  the Einstein equations can be found in  \cite{Choquet-Bruhat2009}.

In regard to the Yang--Mills equations, the local well-posedness with different gauges, like Lorentz gauge, temporal gauge, had been known \cite{Kerner1974,Ginibre1982,Segal1979,ChoquetBruhat1991,Eardley1982,Ginibre1981,ChoquetBruhat1982a}. Among these results, we focus on the ones with temporal gauge. Using the nonlinear semi group theory, Segal \cite{Segal1979} proved the local well-posedness of the Yang--Mills equations with temporal gauge in the $4$-dimensional Minkowski spacetime. It was later advanced by Eardley-Moncrief \cite{Eardley1982} and was extended by Ginibre-Velo \cite{Ginibre1981} to the $(n+1)$-dimensional Minkowski spacetime. A more general proof that works on a non-flat manifold was carried out by Choquet-Bruhat and Segal \cite{ChoquetBruhat1982a} where a third order Leray hyperbolic system was derived  for the Yang--Mills connection in temporal gauge.

In this paper, we revisit the Cauchy problem for the EYM equations, but within the temporal gauge for the Yang--Mills field.  In the proof, a first order symmetric hyperbolic formulation for the $(n+1)$-dimensional Yang--Mills equations in the temporal gauge is essentially new, which, together with the Einstein equations in the wave gauge, establishes the local well-posedness of the $(n+1)$-dimensional EYM system in the wave gauge (for metric) and the temporal gauge (for Yang--Mills field).

\subsection{An overview of the proof}
To demonstrate a symmetric hyperbolic formulation for the $(n+1)$-dimensional EYM system with the temporal gauge, a crucial step is symmetrizing the Yang--Mills system. An innovation in this paper lies in that, instead of barely appealing to equations for the connection (or coupled with the ``electric'' Yang--Mills field equation), we additionally take the full ``electric'' and ``magnetic'' fields equations into accounts. Using this idea, the natural hyperbolic formulation for the $4$-dimensional Maxwell field equations \cite{Racke2015} will suggest a straightforward extension to the $4$-dimensional Yang--Mills case with temporal gauge. Moreover, adapted to the arbitrarily dimensional Yang--Mills case, we propose a new hyperbolic formulation that combines the field equations \eqref{eq-YM-bianchi-0}, \eqref{eq-YM-div-0} with the dynamic equation for the connection. It is crucial that this new hyperbolic formulation fits well into the long time scheme of our companion work \cite{LW2021a}  with Todd A. Oliynyk.
Apart from the difficulty of symmetrization, the ``electric'' and ``magnetic'' fields are strongly coupled in the Yang--Mills field equations (see Lemma \ref{lem-maxwell-hyperbolic-0}), which we eventually  formulate as a system over tensor bundles. Thus we have to extend the local well-posedness result of a hyperbolic system to the one over tensor bundles. Since the hyperbolic symmetrization for the Einstein equations follows from the standard procedure, we only focus on the Yang--Mills part in this overview.  Let us sketch the main idea of symmetrizing the $(n+1)$-dimensional Yang--Mills equations in the temporal gauge.

\underline{Step $1$:} 
The most important preparation for this procedure is the introduction of the symmetrizing tensors\footnote{ Private communications with Todd A. Oliynyk and introduced in our companion article \cite{LW2021a}.  } \[\underline{Q}^{edc}:= \nu^e g^{dc} + \nu^d g^{ec} -\nu^c g^{ed} \AND Q^{edc} := T^e g^{dc} + T^d g^{e c} - T^c g^{e d}, \] which turns out to be effective for the Einstein and Yang--Mills equations both. In order to symmetrize the Yang--Mills equations, we use the $Q^{edc}$ tensor to unify the Yang--Mills propagation equations (i.e., in \eqref{eq-YM-div-0} and \eqref{eq-YM-bianchi-1}) to a single first order system (see Lemma \ref{lem-maxwell-hyperbolic-0}), which combining with the dynamic equation for the connection $A$ constitutes a closed first order system. Note that at this step, we have included a new variable, the ``magnetic'' component of the Yang--Mills field $H_{db}:=\tensor{h}{^c_d} F_{c a} \tensor{h}{^a_b}$,  
and the dynamic Yang--Mills Bianchi identities in the first order system that has been stated in Lemma \ref{lem-maxwell-hyperbolic-0}.

\underline{Step $2$:} Unfortunately, this first order system fails to be symmetric when $n+1>4$ (in the $n+1=4$ case,  the Hodge dual variant of the ``magnetic'' field trivially leads to a symmetric system, which is in analogy with the $4$-dimensional Maxwell case in \cite{Racke2015}). To overcome this difficulty, we introduce a new variable $\tE^a := E_b g^{b c} \tensor{h}{^a_c}$ (recalling \eqref{vari-YM}) as one of the unknowns and add a dynamic equation of $\tE^a$ which is derived from the equation of $E_b$ (essentially no new equation is introduced). In fact, the dynamic equation of $\tE^a$ (derived from the above first order system) serves to compensate the non-symmetric part of the system in Step 1. The detailed information for this step is included in Lemma \ref{lem-FOSHS-YM}.

\underline{Step $3$:} After the preceding two steps, we achieve the target first order symmetric hyperbolic system \eqref{YM-FOSHS-1} on the variables $(A_a, \, E_b, \, \tE^e, \, H_{c d})$.  Then a key step next is to build the equivalence between the symmetric hyperbolic system and the EYM equations (enclosed in Theorem \ref{t:depctrnt}). To achieve this, a main aim is to conclude: 
\begin{claim}
	The solution of the target first order symmetric hyperbolic system \eqref{YM-FOSHS-1} uniquely determines a solution of the Yang--Mills system if the initial data are induced by Yang--Mills fields.
\end{claim}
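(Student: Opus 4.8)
The plan is to show that when the initial data for the symmetric hyperbolic system \eqref{YM-FOSHS-1} are induced by genuine Yang--Mills fields (so that the extra auxiliary variables $\tE^a$, $H_{cd}$ satisfy their defining relations and the Yang--Mills constraint holds on $\Sigma_{t_0}$), these relations and constraints \emph{propagate} along the flow. The strategy is the standard one of deriving a homogeneous linear system of evolution equations for the \emph{discrepancy quantities} — the quantities that measure the failure of the solution of \eqref{YM-FOSHS-1} to come from a Yang--Mills connection — and invoking uniqueness for that linear system with vanishing initial data.

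\medskip

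\noindent\emph{Step 1: Identify the discrepancy quantities.} Given a solution $(A_a, E_b, \tE^e, H_{cd})$ of \eqref{YM-FOSHS-1}, reconstruct the curvature $F_{ab}$ from $A_a$ via \eqref{def-F} and form the quantities
\[
\mathcal{E}_b := E_b - \tensor{h}{^a_b} F_{ap} T^p,\qquad
\mathcal{H}_{db} := H_{db} - \tensor{h}{^c_d} F_{ca} \tensor{h}{^a_b},\qquad
\mathcal{F}^e := \tE^e - E_b g^{bc} \tensor{h}{^e_c},
\]
together with the would-be Yang--Mills divergence $\mathcal{C}_b := D^a F_{ab}$ (its normal and tangential projections). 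By hypothesis all of these vanish at $t = t_0$. The goal is to show they vanish for all $t$.

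\medskip

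\noindent\emph{Step 2: Derive evolution equations for the discrepancies.} Using the definition of $F$ in terms of $A$, the dynamic equation for $A_a$ built into \eqref{YM-FOSHS-1}, the evolution equations for $E_b$, $\tE^e$, $H_{cd}$, and the (second) Bianchi identity \eqref{eq-YM-bianchi-0} — which holds identically since $F = \di A + [A,A]$ — I would compute $\lie_T \mathcal{E}_b$, $\lie_T \mathcal{H}_{db}$, $\lie_T \mathcal{F}^e$ and the propagation of $\mathcal{C}_b$. Because \eqref{YM-FOSHS-1} was engineered precisely so that a Yang--Mills field is a solution, each time derivative must close up as a linear combination (with coefficients built from $g$, $T$, $A$, $F$, $k$, $X$) of the discrepancy quantities and their first spatial derivatives — no inhomogeneous source terms survive. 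The constraint quantity $\mathcal{C}_b$ should satisfy a transport-type equation sourced only by $\mathcal{C}$ itself, analogous to the classical fact that $D^a F_{ab} = 0$ propagates from a constraint once the evolution equations hold; this uses the contracted Bianchi identity $D^a D^b F_{ab} = 0$ and the Ad-invariance of ``$\cdot$''.

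\medskip

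\noindent\emph{Step 3: Conclude by uniqueness.} The collection $(\mathcal{E}_b, \mathcal{H}_{db}, \mathcal{F}^e, \mathcal{C}_b)$ then satisfies a first-order linear symmetric hyperbolic system over tensor bundles with smooth (indeed $H^{s-1}$, hence Lipschitz) coefficients determined by the already-constructed solution, and with zero initial data. By the uniqueness part of the local well-posedness theory for symmetric hyperbolic systems over tensor bundles (the general result invoked earlier in the paper), this implies the discrepancies vanish identically on the interval of existence. Hence $E_b$, $H_{db}$, $\tE^e$ genuinely are the projections of the curvature of the connection $A_a$, and $D^a F_{ab} = 0$; together with the identically-valid Bianchi identity \eqref{eq-YM-bianchi-0}, this means $(A_a, F_{ab})$ solves the Yang--Mills system, and it is the \emph{unique} such solution with the prescribed Cauchy data by the same uniqueness statement.

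\medskip

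I expect the main obstacle to be \emph{Step 2}: showing that the time derivatives of the discrepancies close without leftover inhomogeneous terms. This requires carefully matching the precise form of the evolution equations in \eqref{YM-FOSHS-1} (in particular the $\tE^a$ equation that was introduced in Step 2 of the construction to symmetrize the system) against the geometric identities satisfied by $F = \di A + [A,A]$, keeping track of how the projection tensors $\tensor{h}{^a_b}$ and the Lie derivatives $\lie_T$ interact with $D$ and $\nabla$, and verifying that the temporal-gauge condition $A_a T^a = 0$ is consistent with the flow. The bookkeeping is delicate but, since the system was built from the Yang--Mills equations by adding only derived equations, no genuine source terms can appear — this is the structural fact that makes the claim true.
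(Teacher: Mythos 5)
Your strategy — define discrepancy quantities, show they satisfy a homogeneous linear system, conclude they vanish from zero initial data — is sound in outline, but it is a genuinely different route from the paper's own proof of this Claim, and it is aimed at more than the Claim asserts. The paper's argument for uniqueness is shorter and goes in the opposite direction: Lemma \ref{lem-FOSHS-YM} (the ``forward'' implication) shows that any Yang--Mills solution, once the auxiliary variables $\tE^a$ and $H_{ab}$ are defined from it by \eqref{def-tE} and \eqref{vari-YM}, solves \eqref{YM-FOSHS-1}; since \eqref{YM-FOSHS-1} is symmetric hyperbolic, its Cauchy problem has a unique solution; hence two Yang--Mills solutions with the same data would both give rise to the same \eqref{YM-FOSHS-1}-solution and so coincide. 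No constraint propagation is needed for that. What you are actually outlining is the companion \emph{existence} step — that a solution of \eqref{YM-FOSHS-1} really does yield a Yang--Mills solution — which the paper handles separately in Lemma \ref{t:ctrtevl}; and your $\mathcal{F}^e$, $\mathcal{H}_{db}$, and the normal part of $\mathcal{C}_b$ are, up to sign, the paper's $\alpha^d$, $\beta_{pq}$, $\gamma$. Your extra quantity $\mathcal{E}_b$ is redundant: once $\mathcal{H}\equiv 0$, the $A$-evolution built into \eqref{YM-FOSHS-1} already forces $E_b = \tensor{h}{^a_b}F_{ap}T^p$.

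There is also a concrete gap in Step 3: you assert that the discrepancies satisfy a \emph{symmetric hyperbolic} system and then quote the general uniqueness theorem. That is not what one gets, and you do not verify it. In Lemma \ref{t:ctrtevl} the constraint quantities obey
\[
\bar\nabla_T \alpha^d = 0,\qquad
\lie_T \beta_{pq} = -h_{qd}\,\nbar_p\alpha^d,\qquad
\lie_T \gamma = -\tr k\,\gamma + \tfrac12[\beta^{ab},H_{ab}],
\]
whose principal part is strictly lower-triangular: the spatial derivative $\nbar_p\alpha^d$ in the $\beta$-equation has no matching transpose term in the $\alpha$-equation, so the coupled system is not symmetric hyperbolic. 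It does not need to be. The structure is a cascade: first $\alpha\equiv 0$ from the ODE along $T^a$ with zero data; substituting this, the $\beta$-equation becomes a homogeneous transport equation, giving $\beta\equiv 0$; then $\gamma\equiv 0$. If you want to cite a black-box uniqueness theorem instead of exploiting this hierarchy, you must actually produce a symmetric hyperbolic (or otherwise well-posed) formulation of the discrepancy system — your outline silently assumes one exists.
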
	
In fact, from the definitions of $\tE^a$ and $H_{ab}$ (recalling \eqref{def-F} and \eqref{vari-YM}), we have shown that if $(A_a, \, E_b)$ is a solution of the Yang--Mills equations, then $(A_a, \, E_b, \, \tE^e, \, H_{c d})$ with the extra variables $\tE^e$ and $H_{c d}$ \footnote{They are defined in terms of $E_b$ and $A_a$ and we denote these transformations by $(E_b\leftrightarrow \tE^a)$ and $(A_b\leftrightarrow H_{ab})$ for short, respectively, see \eqref{def-F} and \eqref{vari-YM}. } satisfies the target first order symmetric hyperbolic system \eqref{YM-FOSHS-1} since the extra equations for $\tE^e$ and $H_{c d}$ are inferred from the Yang--Mills equations and the transformations of $(E_b\leftrightarrow \tE^a)$ and $(A_b\leftrightarrow H_{ab})$. This result tells that \textit{if there is} a solution to the Yang--Mills equations, then it must be unique since this solution solves the target system \eqref{YM-FOSHS-1} which has a unique solution due to the symmetric hyperbolicity. It therefore concludes the uniqueness of the Yang--Mills solution.

The rest of task is to confirm the Yang--Mills solution does \textit{exist}. For this purpose, we will use the solution $(A_a, \, E_b, \, \tE^e, \, H_{c d})$ of the target system\footnote{Note that the relations between $A_a$ and $H_{a b}$, $E_b$ and $\tE^a$ are a priori unknown in the hyperbolic system, although they will be recovered with more effort on their dynamics. 
} \eqref{YM-FOSHS-1} (whose existence and uniqueness are implied by the symmetric hyperbolicity)  to construct a solution of the Yang--Mills system. In doing this, we have to impose the Yang--Mills constraints on $A_a$ and $E_b$ to the data of the target system \eqref{YM-FOSHS-1}. Additionally, we complement the initial constraints $(E_b\leftrightarrow \tE^a)|_{t=t_0}$ and $(A_b\leftrightarrow H_{ab})|_{t=t_0}$ to the data on $\Sigma_{t_0}$. Then the developed lemma on these constraints (see Lemma \ref{t:ctrtevl}) confirms that the solution $(A_a, \, E_b, \, \tE^e, \, H_{c d})$ to the target system \eqref{YM-FOSHS-1} with the specific data makes sure the constraints are maintained during the propagation of the hyperbolic system. That is, in terms of Lemma \ref{t:ctrtevl}, the following constraints
\begin{align*}
	\alpha^d :={}& \tE^e \tensor{h}{^d_e} + h^{c d} E_{c}\equiv0, \\
	\beta_{p q}: ={}& \nbar_p A_{q} - \nbar_q A_p + [A_p, A_q] - H_{p q}\equiv0, \\
	\gamma :={}& \nbar^a E_a + h^{a b} [A_a, E_b]\equiv0
\end{align*}
hold true in the existence interval $t\in[t_0,t_\ast)$, and hence they reduce the target hyperbolic system to the Yang--Mills system.  
In other words, the target hyperbolic system \eqref{YM-FOSHS-1} with specific initial constraints being satisfied recovers the Yang--Mills system.

Eventually, we further analyze the local existence and uniqueness of the geometric solution of the hyperbolic system over tensor bundles, see \S\ref{s:local}. Moreover, using this model and  the equivalence Theorem \ref{t:depctrnt}, we concludes the solution of the local existence and uniqueness of solutions to the EYM equations.

\subsection{Outlines}
We arrange this paper as follows. 
The new symmetric hyperbolic formulation of the EYM system in the wave gauge for the metric and temporal gauge for the Yang--Mills field is carried out in \S \ref{s:hyper-EYM}. In \S \ref{s:local}, we prove the local well-posedness for a model of hyperbolic system over tensor bundles. In the end, we confirm  in \S\ref{s:mainpf} the local existence theorem of the EYM system. More useful calculations are collected in the appendix.


\section{A new symmetric hyperbolic formulation for the EYM system}\label{s:hyper-EYM}

\subsection{Symmetric hyperbolic formulations for the reduced Einstein equations} 
In this section, we recall the hyperbolic formulation for the Einstein equations based on wave gauge.
Before going into the EYM system, let us introduce the symmetrizing tensors, a useful tool for symmetrizing the EYM system, defined by
\be\label{def-Q}
\underline{Q}^{edc}:= \nu^e g^{dc} + \nu^d g^{ec} -\nu^c g^{ed} \AND Q^{edc} := T^e g^{dc} + T^d g^{e c} - T^c g^{e d}.
\ee  
The $Q^{e d c}$ are symmetric in the first two indices $e$ and $d$, i.e. $Q^{edc} = Q^{dec}$, and more of the computing properties are presented in Proposition \ref{lem-identity} (see Appendix \ref{s:App1}). Similar conclusions hold for  the one defined in terms of $\nu$, $\underline{Q}^{edc}$.

Since \cite{Choquetbruhat2021} when Choquet-Bruhat introduced the \textit{wave gauge}, it becomes a well known technique in solving the Einstein equations. Then the wave gauge in this article is chosen by requiring (recalling $X^a$ in \eqref{E:X2}) 
\al{CONSTR1}{X^a=f^a}
where $f^a$ is a given function.   
A standard argument shows that to solve the Einstein equations, it suffices to solve the reduced Einstein system with wave gauge and further the reduced system can be made into a hyperbolic system. We briefly include this standard procedure on the Einstein equations in the following Lemma \ref{t:rdein} and several variants can be found in many references (see, for example, \cite[Chapter VI, \S $7.4$]{Choquet-Bruhat2009} and \cite{LW2021a}). 

Let us recall that we have introduced an arbitrarily given reference metric $\ulg_{ab}$ and the compatible covariant derivative is denoted by $\nb_c$(see \S\ref{s:AIN}), which will be helpful in writing down a coordinate-independent formulation.  
We list the results in what follows without proof, and for the readers' convenience, brief proofs are provided in Appendix \ref{s:App1}. Firstly, using the wave gauge and the expansions of the Ricci tensor in terms of the reference metric, we arrive at the following lemma. 
\begin{lemma}\label{t:rdein}
	Suppose $\udn{c}$ is any covariant derivative compatible to the reference metric $\ulg_{ab}$, under the wave gauge \eqref{E:CONSTR1}, the reduced Einstein equations with the Yang--Mills source become
	\be\label{eq-redu-Einstein}
	g^{cd}\udn{c}\udn{d} g^{ab} = S^{a b}(\mathbf{U}), 
	\ee
	where
	\als{
		S^{a b}(\mathbf{U}) = {} & -2\nabla^{(a} f^{b)} -2 g^{bd} E^a E_d + 2 g^{bd}H^{a b} H_{d b} +  2 g^{bd}T_d H^{a b} E_b  + 2 g^{bd} g^{ac}T_c T_d |E|^2  \nnb \\
		&-\frac{1}{(n-1)} g^{ab} (-2 |E|^2 + |H|^2) -2\ulR^{ab}-2P^{ab}-2Q^{ab} +\frac{4\Lambda}{n-1}  g^{ab},    
	} 
	and $P^{a b}$ depending on $g^{-1} - \ulg^{-1}$ (up to quadratic), $Q^{ab}$ being quadratic in $\nb g^{-1}$ are described in Lemma \ref{t:Rdop}.
\end{lemma}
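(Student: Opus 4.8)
The plan is to run the standard wave-gauge reduction of the Einstein equations, adapted to the coordinate-free setting with reference metric $\ulg_{ab}$, and then to unpack the Yang--Mills source into the electric and magnetic variables of \eqref{vari-YM}. First I would trace-reverse \eqref{eq-einstein}: in $n+1$ dimensions, contracting $G_{ab}+\Lambda g_{ab}=\mathcal{T}_{ab}$ with $g^{ab}$ yields $R=\tfrac{2}{n-1}\bigl((n+1)\Lambda-\tr\mathcal{T}\bigr)$, hence $R_{ab}=\mathcal{T}_{ab}-\tfrac{1}{n-1}(\tr\mathcal{T})\,g_{ab}+\tfrac{2\Lambda}{n-1}\,g_{ab}$. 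Since $\tr\mathcal{T}=\tfrac{3-n}{4}\,F^{cd}\cdot F_{cd}$ for the Yang--Mills stress tensor \eqref{e:stregy}, the two $g_{ab}\,F\cdot F$ contributions combine into
\begin{equation*}
	R_{ab}=F_a{}^c\cdot F_{bc}-\frac{1}{2(n-1)}\,g_{ab}\,F^{cd}\cdot F_{cd}+\frac{2\Lambda}{n-1}\,g_{ab}.
\end{equation*}

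Next I would insert the $\Sigma$-orthogonal decomposition $F_{ab}=T_aE_b-T_bE_a+H_{ab}$, which is forced by the definitions in \eqref{vari-YM} together with $T^aE_a=0$, $T^aH_{ab}=0$ and the antisymmetry of $F$ (hence of $H$). This gives $F^{cd}\cdot F_{cd}=-2|E|^2+|H|^2$, and expanding $F_a{}^c\cdot F_{bc}$ while using $T^cT_c=-1$ produces the sum of an $E_aE_b$ term, an $H_a{}^c\cdot H_{bc}$ term, a $T_aT_b|E|^2$ term, and mixed $T_{(a}H_{b)}{}^c\cdot E_c$ terms. Raising both free indices with $g$---which on $\Sigma$-tangent tensors agrees with $h$, per \S\ref{s:geohysf}---then reproduces, up to the overall factor two appearing below, exactly the first line of $S^{ab}(\mathbf{U})$ together with its $|E|^2$, $|H|^2$ and cosmological contributions, the last of which becomes $\tfrac{4\Lambda}{n-1}g^{ab}$.

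Finally I would expand $R_{ab}$ via the Ricci comparison identity between $g$ and $\ulg$, which expresses $R_{ab}$ through $\ulR_{ab}$, the first $\nb$-derivatives of the difference tensor $\tensor{X}{^c_{ab}}=\tensor{\Gamma}{^c_{ab}}-\tensor{\underline{\Gamma}}{^c_{ab}}$, and quadratic expressions in $\tensor{X}{^c_{ab}}$. Substituting \eqref{E:X2} reorganises the derivative terms into the principal part $-\tfrac12 g^{cd}\udn{c}\udn{d}g_{ab}$, a gauge term $g_{c(a}\udn{b)}X^c$ with $X^a$ as in \eqref{def-X}, plus terms quadratic in $\nb g^{-1}$, while the quadratic-in-$X$ terms are polynomial of degree at most two in $g^{-1}-\ulg^{-1}$ and quadratic in $\nb g^{-1}$; these remainders, together with those produced below, are exactly what is packaged into $Q^{ab}$ (quadratic in $\nb g^{-1}$) and $P^{ab}$ (polynomial in $g^{-1}-\ulg^{-1}$) of Lemma \ref{t:Rdop}. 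Imposing the wave gauge \eqref{E:CONSTR1} replaces the gauge term by $g_{c(a}\udn{b)}f^c=\nabla_{(a}f_{b)}$. To pass from $g_{ab}$ to the unknown $g^{ab}$, I would differentiate $g^{ae}g_{eb}=\delta^a_b$ twice to obtain $g^{af}g^{bh}g^{cd}\udn{c}\udn{d}g_{fh}=-g^{cd}\udn{c}\udn{d}g^{ab}+(\text{quadratic in }\nb g)$; raising both indices of the comparison identity with this relation, and inserting the Yang--Mills expression for $R^{ab}$ from the first two paragraphs, then gives \eqref{eq-redu-Einstein} with the asserted $S^{ab}(\mathbf{U})$, the $-2\nabla^{(a}f^{b)}$, $-2\ulR^{ab}$, $-2P^{ab}$ and $-2Q^{ab}$ terms arising respectively from the gauge term, the reference Ricci, and the various quadratic remainders.

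The computation has no essential obstacle; the one delicate point is the bookkeeping of the lower-order terms. One must track the quadratic-in-$\nb g^{-1}$ and polynomial-in-$(g^{-1}-\ulg^{-1})$ remainders generated at each stage---from the Ricci comparison, from substituting \eqref{E:X2}, from raising indices, and from the $g_{ab}\leftrightarrow g^{ab}$ conversion---and verify that they all conform to the schematic form of $P^{ab}$ and $Q^{ab}$ in Lemma \ref{t:Rdop}, while keeping the numerical coefficients (the $\tfrac{1}{n-1}$ and $\tfrac{4\Lambda}{n-1}$ factors, the coefficients of $\nabla^{(a}f^{b)}$ and $\ulR^{ab}$, and the signs in the $E$, $H$ terms) correct throughout.
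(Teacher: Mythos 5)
Your outline reproduces the paper's route: trace-reverse the Einstein equations in $n+1$ dimensions, expand the Ricci tensor relative to the reference connection $\udn{}$ via the difference tensor $\tensor{X}{^c_{ab}}$, impose the wave gauge $X^a=f^a$, and decompose the Yang--Mills stress tensor through $F_{ab}=T_aE_b-T_bE_a+H_{ab}$. The coefficient bookkeeping you indicate is correct and matches the paper's intermediate display: $\tr\mathcal{T}=\tfrac{3-n}{4}F^{cd}\cdot F_{cd}$, the combination $\tfrac14+\tfrac{3-n}{4(n-1)}=\tfrac{1}{2(n-1)}$ for the $g_{ab}\,F\cdot F$ term, the $\tfrac{2\Lambda}{n-1}g_{ab}$ contribution, and the final factors of $-2$ in front of $\nabla^{(a}f^{b)}$, $\ulR^{ab}$, $P^{ab}$, $Q^{ab}$. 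The one genuine deviation is that the paper applies the contravariant Ricci expansion of Lemma \ref{t:Rdop} directly to $R^{ab}$, so the principal part $\tfrac12 g^{cd}\udn{c}\udn{d}g^{ab}$ appears immediately, whereas you expand the covariant $R_{ab}$ first and then convert to $g^{ab}$ by differentiating $g^{ae}g_{eb}=\delta^a_b$ twice. That route is equivalent and somewhat more self-contained (Lemma \ref{t:Rdop}'s own proof is omitted in the paper), but it produces extra quadratic-in-$\udn{}g^{-1}$ remainders from the conversion that must then be checked to fit the schematic form of $Q^{ab}$; you correctly flag this as the delicate point.
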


\begin{theorem}\label{thm-hyperb-gravity}
	The reduced Einstein equations \eqref{eq-redu-Einstein} can be rewritten as a hyperbolic system
	\al{Einstein-sym-thm}{
		-\bar{\mathbf{A}}^0 \nu^c \nb_c \p{  \tensor{g}{^{a b}_d}  \\ g^{a b} }+\bar{\mathbf{A}}^c \tensor{\ulh}{^	b_c} \nb_b \p{  \tensor{g}{^{a b}_d}  \\ g^{a b} } =  \bar G (\mathbf{U}),  
	} 
	where, (recalling \eqref{def-Q} for  $\underline{Q}^{edc}$), 
	\als{
		\bar{\mathbf{A}}^0 =
		\p{ \underline{Q}^{ed\hc}\nu_{\hc}  & 0 \\ 0 & 1 }
		\AND \bar{\mathbf{A}}^c \tensor{\ulh}{^	b_c} =
		\p{ \underline{Q}^{edc} \tensor{\ulh}{^b_c}  & 0 \\ 0 & 0 }
	}
	and  
	\[\bar G (\mathbf{U}) = \p{ \nu^e S^{a b}(\mathbf{U}) +\nu^d g^{ec}(\tensor{\underline{R}}{_{cdf}^a} g^{fb}+\tensor{\underline{R}}{_{cdf}^b} g^{fa})\\ \tensor{g}{^{a b}_d}\nu^d }. \]
\end{theorem}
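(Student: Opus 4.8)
The plan is to derive \eqref{E:Einstein-sym-thm} directly from the reduced Einstein equations \eqref{eq-redu-Einstein} of Lemma \ref{t:rdein} by the standard trick of introducing the first derivatives of $g^{ab}$ as new unknowns and then rewriting the principal (wave) operator in symmetric first-order form. First I would recall that $\tensor{g}{^{ab}_d} = \nb_d g^{ab}$ is one of our unknowns (see \eqref{vari-metric}), so the second line of \eqref{E:Einstein-sym-thm}, namely $\nu^d \nb_d g^{ab}$ being matched against $\tensor{g}{^{ab}_d}\nu^d$, is simply the definition of $\tensor{g}{^{ab}_d}$ contracted with $\nu^d$ — a tautology that records $g^{ab}$ as a transported variable. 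The real content is the first line, which must encode the scalar wave equation $g^{cd}\udn{c}\udn{d}g^{ab} = S^{ab}(\mathbf{U})$.

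The key step is to split the operator $g^{cd}\udn{c}\udn{d}$ along the reference foliation. Writing $\udn{c}$ in terms of the background normal $\nu^c$ and the background projection $\tensor{\ulh}{^c_d}$ from \eqref{e:ulhproj}, I would expand
\als{
g^{cd}\udn{c}\udn{d}g^{ab} = g^{cd}\udn{c}\tensor{g}{^{ab}_d} = \bigl(\nu^c\nu^d + \text{(mixed and spatial terms)}\bigr)\udn{c}\tensor{g}{^{ab}_d},
}
and observe that the combination of second-order terms that appears is exactly captured by contracting the symmetrizing tensor $\underline{Q}^{edc}$ of \eqref{def-Q} against $\nb_c \tensor{g}{^{ab}_d}$: indeed $\underline{Q}^{edc} = \nu^e g^{dc} + \nu^d g^{ec} - \nu^c g^{ed}$ is symmetric in $e,d$, and contracting once more with $\nu_{\hat c}$ produces the coefficient matrix $\bar{\mathbf{A}}^0 = \underline{Q}^{ed\hat c}\nu_{\hat c}$ (which is, up to sign, $-\lambda$ times the relevant block and hence positive-definite by timelikeness of $\nu$), while the purely spatial part of the operator gives $\bar{\mathbf{A}}^c\tensor{\ulh}{^b_c} = \underline{Q}^{edc}\tensor{\ulh}{^b_c}$. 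All first-order and zeroth-order contributions — the terms where $\udn{c}$ hits the Christoffel-difference tensor $X$, the commutator of $\udn{c}$ and $\udn{d}$ which brings in $\tensor{\underline{R}}{_{cdf}^a}$, and of course $S^{ab}(\mathbf{U})$ itself — get collected into the right-hand side $\bar G(\mathbf{U})$; this is where the curvature term $\nu^d g^{ec}(\tensor{\underline{R}}{_{cdf}^a}g^{fb} + \tensor{\underline{R}}{_{cdf}^b}g^{fa})$ comes from, via commuting the two background covariant derivatives. These manipulations are routine once the contraction identities for $\underline{Q}^{edc}$ — stated in Proposition \ref{lem-identity} — are invoked, so I would simply cite that proposition rather than redo the algebra.

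The symmetry and hyperbolicity claim is then immediate from the structure of $\underline{Q}^{edc}$: each $\bar{\mathbf{A}}$ coefficient is manifestly symmetric in the pair of indices that plays the role of the matrix indices (the $g^{ab}$ slot is scalar), and $\bar{\mathbf{A}}^0$ is positive-definite because $\underline{Q}^{ed\hat c}\nu_{\hat c}$ is a positive multiple of $g^{ed}$ restricted appropriately — this uses that $\nu$ is timelike for $g$ on a neighbourhood of the initial slice, which holds since $g$ is close to $\ulg$ there. The main obstacle, and the only place requiring genuine care, is bookkeeping: one must verify that the coefficient of $\nu^c\nu^d\udn{c}\udn{d}$ extracted via $\underline{Q}$ really does match $g^{cd}$ after the projector splitting, i.e. that $\underline{Q}^{edc}\nu_{\hat c}(\ldots) + \underline{Q}^{edc}\tensor{\ulh}{^b_c}\nb_b(\ldots)$ reassembles $g^{cd}\udn{c}\udn{d}$ with no leftover principal terms, and that every discarded term is genuinely lower order (no hidden second derivatives of $g^{ab}$ surviving in $\bar G$). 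This amounts to checking the identity $\nu^c\nu^d\,\underline{Q}^{edc}\nu_{\hat c}$-piece plus the spatial piece equals the full contraction, which follows from $g^{cd} = \tensor{\ulh}{^c_d}$-decomposition together with the defining formula for $\underline{Q}$; I would dispatch it by a direct index computation referencing Proposition \ref{lem-identity}.
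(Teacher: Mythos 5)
Your proposal follows the same overall route as the paper's short proof, and you correctly identify the decisive tools: the trivial transport identity $\nu^c\nb_c g^{ab}=\tensor{g}{^{ab}_c}\nu^c$ for the second row, the symmetrizing tensor $\underline{Q}^{edc}$ and its contraction identities (Proposition \ref{lem-identity}), the derivative commutator producing the background curvature $\tensor{\underline{R}}{_{cdf}^a}$, and the projector split $\tensor{\delta}{^{c'}_c}=-\nu_c\nu^{c'}+\tensor{\ulh}{^{c'}_c}$ applied to the remaining free $\nb_c$. Two points in your sketch are slightly misstated and worth tightening. First, the paper's actual mechanism is not to decompose $g^{cd}$ itself into time/space pieces, but to first multiply the reduced wave equation $g^{cd}\udn{c}\udn{d}g^{ab}=S^{ab}$ by $\nu^e$ (this is what introduces the free index $e$ needed for the $\underline{Q}$ structure), and then use the purely algebraic rearrangement $\nu^e g^{dc}=\underline{Q}^{edc}-\nu^d g^{ec}+\nu^c g^{ed}$; the residual $-\nu^d g^{ec}+\nu^c g^{ed}$ is exactly what becomes, after the index swap $c\leftrightarrow d$ in the second piece, the commutator $-\nu^d g^{ec}[\nb_c,\nb_d]g^{ab}$, hence the curvature term in $\bar G$. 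The split into $-\nu_c\nu^{c'}\nb_{c'}+\tensor{\ulh}{^{c'}_c}\nb_{c'}$ is applied only to the outer $\nb_c$ acting on $\tensor{g}{^{ab}_d}$, not to $g^{cd}$. Second, the remark about ``terms where $\udn{c}$ hits the Christoffel-difference tensor $X$'' is off: the reduced equation \eqref{eq-redu-Einstein} is already phrased with $\udn{}$ as the derivative, so no $X$ terms are generated by the symmetrization itself (those are already inside $S^{ab}$ via $Q^{ab}$ of Lemma \ref{t:Rdop}). Likewise the claim that $\underline{Q}^{ed\hc}\nu_{\hc}$ is ``$-\lambda$ times $g^{ed}$'' is not right — it equals $g^{ed}+\nu^e(g^{d\hc}\nu_{\hc})+\nu^d(g^{e\hc}\nu_{\hc})$, and positivity on the relevant subspace is only checked later (for the transformed system of Corollary \ref{t:Ein2}, in \S\ref{s:mainpf}). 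These are imprecisions in exposition rather than gaps; the key identities you cite are the right ones and would carry the computation through.
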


In order to fit into the framework of Theorem \ref{t:lcexuqct} conveniently, we transform the above reduced Einstein equations \eqref{E:Einstein-sym-thm} into the following form.
\begin{corollary}\label{t:Ein2}
	The reduced Einstein equations \eqref{E:Einstein-sym-thm} can be further rewritten as a hyperbolic system
	\al{Einstein-sym-thm2}{
		-\check{\mathbf{A}}^0 \nu^c \nb_c \p{  -\nu^d \tensor{g}{^{a b}_d}  \\
			\tensor{\ulh}{^d_{\hd}} \tensor{g}{^{a b}_d}  \\ g^{a b} }+\check{\mathbf{A}}^c \tensor{\ulh}{^b_c} \nb_b \p{  -\nu^d \tensor{g}{^{a b}_d}  \\
			\tensor{\ulh}{^d_{\hd}}\tensor{g}{^{a b}_d}  \\ g^{a b} } =  \check G (\mathbf{U}),  
	} 
	where
	\als{
		\check{\mathbf{A}}^0 =
		\p{ -\nu_eg^{ec} \nu_c & 0  & 0 \\ 
			0 & \tensor{\ulh}{_{e \hc}} g^{ed} \tensor{\ulh}{^{\he}_d} & 0 	\\
			0 & 0 & 1 }
		\AND \check{\mathbf{A}}^c \tensor{\ulh}{^b_c} =
		\p{ -\tensor{\ulh}{^b_c} g^{dc}\nu_d  & -g^{dc} \tensor{\ulh}{^{\he}_d} \tensor{\ulh}{^b_c} & 0  \\ -\ulh_{\hc e} g^{ec} \tensor{\ulh}{^b_c} & 0 & 0 \\
			0 & 0 & 0}
	}
	and  
	\[\check G (\mathbf{U}) =  \p{\nu_e & 0 \\ \ulh_{e \hc} & 0 \\ 0 & 1 }\bar{G}(\mathbf{U}). \]
\end{corollary}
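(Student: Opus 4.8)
The plan is to derive \eqref{E:Einstein-sym-thm2} from \eqref{E:Einstein-sym-thm} by a purely algebraic change of unknowns, namely by decomposing the first block $\tensor{g}{^{ab}_d}$ of the state vector in \eqref{E:Einstein-sym-thm} into its normal and tangential parts relative to the reference splitting $\mathcal M^{n+1}=I\times\Sigma$. Concretely, for every fixed pair $(a,b)$ the covector $\tensor{g}{^{ab}_d}$ satisfies $\tensor{g}{^{ab}_d}=-\nu^c\nu_c\,\tensor{g}{^{ab}_d}=\nu_d(-\nu^c\tensor{g}{^{ab}_c})+\tensor{\ulh}{^c_d}\tensor{g}{^{ab}_c}$ since $\nu^c\nu_c=-1$ and $\tensor{\ulh}{^c_d}=\tensor{\delta}{^c_d}+\nu^c\nu_d$ is the complementary projector; thus the linear map $L$ sending $(\tensor{g}{^{ab}_d},g^{ab})$ to $(-\nu^d\tensor{g}{^{ab}_d},\,\tensor{\ulh}{^d_{\hd}}\tensor{g}{^{ab}_d},\,g^{ab})$ is injective, and \eqref{E:Einstein-sym-thm} is equivalent to whatever system one obtains by first substituting this decomposition and then hitting the equation with the left factor $\p{\nu_e & 0 \\ \ulh_{e\hc} & 0 \\ 0 & 1}$ displayed in the statement of the corollary. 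First I would verify that this left multiplication is itself invertible (its ``inverse'' being contraction against $\nu^e$ and $\tensor{\ulh}{^{\hc}_e}$ together with the identity on the $g^{ab}$ slot), so that no information is lost; this is what makes \eqref{E:Einstein-sym-thm2} genuinely equivalent to \eqref{E:Einstein-sym-thm} rather than merely implied by it.

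Next I would carry out the matrix bookkeeping. Writing $V=(\tensor{g}{^{ab}_d},g^{ab})^T$ and $W=(-\nu^d\tensor{g}{^{ab}_d},\tensor{\ulh}{^d_{\hd}}\tensor{g}{^{ab}_d},g^{ab})^T=LV$, the point is that $\nu^c\nb_c$ and $\tensor{\ulh}{^b_c}\nb_b$ commute with the \emph{constant} projectors $\nu_d$ and $\tensor{\ulh}{^c_d}$ built from the reference metric, so that $\nb$ applied to $V$ can be re-expressed through $\nb$ applied to the components of $W$; the reference-metric derivatives of $\nu^c$, $\nu_d$, $\tensor{\ulh}{^c_d}$ all vanish by the chosen form $\ulg_{ab}=-(dt)_a(dt)_b+\ulh_{ab}$ with $\ulh$ independent of $t$ in the relevant sense, which is exactly why these projectors can be moved past $\nb_c$. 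Then $\bar{\mathbf A}^0$ and $\bar{\mathbf A}^c\tensor{\ulh}{^b_c}$ from Theorem \ref{thm-hyperb-gravity} get conjugated: one computes $\check{\mathbf A}^0$ as (left factor)$\cdot\bar{\mathbf A}^0\cdot$(the partial right factor reconstructing $\tensor{g}{^{ab}_d}$ from its two pieces) and similarly for $\check{\mathbf A}^c\tensor{\ulh}{^b_c}$, and one checks the entries match the block matrices in the statement. The entry $\underline Q^{ed\hc}\nu_{\hc}$ becomes $-\nu_e g^{ec}\nu_c$ in the $(1,1)$ slot and $\tensor{\ulh}{_{e\hc}}g^{ed}\tensor{\ulh}{^{\he}_d}$ in the $(2,2)$ slot after using the explicit form $\underline Q^{edc}=\nu^eg^{dc}+\nu^dg^{ec}-\nu^cg^{ed}$ and contracting against the appropriate projectors; this is just Proposition \ref{lem-identity}-type algebra. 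Finally $\check G(\mathbf U)=$(left factor)$\cdot\bar G(\mathbf U)$ by definition, which is the stated formula.

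The main obstacle — and really the only one requiring care — is confirming that the new $\check{\mathbf A}^0$ is still symmetric and positive-definite (so that \eqref{E:Einstein-sym-thm2} is a bona fide symmetric hyperbolic system and, more importantly, so that it fits the hypotheses of Theorem \ref{t:lcexuqct} invoked later): positivity of the $(1,1)$ block $-\nu_eg^{ec}\nu_c$ is positivity of $g^{00}$-type quantity, which holds because $\Sigma_t$ is spacelike for $g$, and positivity of the $(2,2)$ block $\tensor{\ulh}{_{e\hc}}g^{ed}\tensor{\ulh}{^{\he}_d}$ is positivity of the spatial part of $g^{-1}$, again from spacelikeness; but these must be checked against the reference inner product $\la\cdot,\cdot\ra_{\ulh}$ that Theorem \ref{t:lcexuqct} uses, and one should make sure the block-diagonal structure plus the partially-off-diagonal $\check{\mathbf A}^c\tensor{\ulh}{^b_c}$ is symmetric in that same pairing (the $(1,2)$ and $(2,1)$ blocks $-g^{dc}\tensor{\ulh}{^{\he}_d}\tensor{\ulh}{^b_c}$ and $-\ulh_{\hc e}g^{ec}\tensor{\ulh}{^b_c}$ are indeed transposes of one another after lowering/raising with $\ulh$). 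I would also note the minor point that the equivalence is at the level of $H^{s-\ell}$-regularity solutions, since $L$ and its inverse involve only the smooth fixed tensors $\nu,\ulh$ and lose no derivatives; hence solutions of \eqref{E:Einstein-sym-thm2} in the regularity class of Theorem \ref{t:mainthm} are in bijection with solutions of \eqref{E:Einstein-sym-thm}, which is all the corollary asserts. Everything else is routine tensor algebra, so I would present it compactly rather than expanding every contraction.
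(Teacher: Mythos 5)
Your proposal matches the paper's own proof: the paper decomposes $\tensor{g}{^{ab}_d}$ via the background splitting $\tensor{g}{^{ab}_d}=\nu_d(-\nu^e\tensor{g}{^{ab}_e})+\tensor{\ulh}{^{\he}_d}(\tensor{\ulh}{^e_{\he}}\tensor{g}{^{ab}_e})$, uses that $\nb$ annihilates $\nu_a$, $\nu^a$, $\tensor{\ulh}{^a_b}$ to pull the projectors through the principal part, and left-multiplies by the same $3\times 2$ block matrix $\p{\nu_e & 0 \\ \ulh_{e\hc} & 0 \\ 0 & 1}$. Your supplementary remarks on invertibility of the transformation and on symmetry/positivity of $\check{\mathbf A}^0$ are correct but not needed for this corollary; the latter the paper verifies separately in \S\ref{s:mainpf}.
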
 
\begin{proof}
	Direct calculations imply \al{NBNU}{
		\udn{a}\nu_b=0, \quad \udn{a}\nu^b=0  \AND \udn{c}\tensor{\ulh}{^a_b}=0,
	}
	and the decomposition
	\begin{align}\label{e:gdcp}
		\p{ \tensor{g}{^{ab}_d} \\ g^{ab}}=\p{\nu_d & \tensor{\ulh}{^{\he}_d} & 0 \\ 0 & 0 & 1}\p{-\nu^e \tensor{g}{^{ab}_e} \\ \tensor{\ulh}{^e_{\he}} \tensor{g}{^{ab}_e} \\ g^{ab}}. 
	\end{align}
	Using 
	\begin{align*}
		\p{1 & 0 & 0 \\ 0 & \ulh_{f\hc} & 0 \\
			0 & 0 & 1 }\p{\nu_e & 0 \\ \ts{\ulh}{^f_e} & 0 \\ 0 & 1 }
	\end{align*} 
	to act on \eqref{E:Einstein-sym-thm} and by \eqref{E:NBNU}-\eqref{e:gdcp}, we prove this corollary. 
\end{proof}

\subsection{Symmetric hyperbolic formulations of Yang--Mills equations}\label{sec-Max}
In this section, we conduct an ``extension'' for the Yang--Mills equations. Specifically, we derive a symmetric hyperbolic system which  reduces to the Yang--Mills equations if the data set is induced from a Yang–Mills field. Recall that the Yang–Mills curvature (see \eqref{eq-YM-div-0}--\eqref{eq-YM-bianchi-1}) satisfies
\begin{align}
	\nabla^a F_{a b} &= -  g^{a c} [A_c, F_{a b} ], \label{eq-YM-div} \\
	\nabla_{[a} F_{b c]} &= - [A_{[a}, F_{b c]} ]. \label{eq-YM-Bianchi}
\end{align}
Then, \eqref{eq-YM-div} and  \eqref{eq-YM-Bianchi} consist of \textit{propagation equations}
\begin{align}
	g^{b a} \nabla_b F_{a c} \tensor{h}{^{c}_{ e}} & = - g^{a d} \tensor{h}{^{c}_{ e}} [A_d, F_{a c} ] , \label{YM-div-dyn} \\
	\nabla_{[b} F_{a c ]} T^b \tensor{h}{^a_{p}} \tensor{h}{^c_q} &  = - T^b \tensor{h}{^a_{p}} \tensor{h}{^c_q} [A_{[b}, F_{a c]} ], \label{YM-bianchi-dyn}
\end{align}
and \textit{constraint equations} 
\begin{align}
	g^{b a} \nabla_b F_{a c} T^{c} & =  - g^{a d} T^{c} [A_d, F_{a c} ], \label{YM-div-F-const} \\
	\nabla_{[b} F_{a c ]} \tensor{h}{^b_d} \tensor{h}{^a_{p}} \tensor{h}{^c_q} &  = -  \tensor{h}{^b_d} \tensor{h}{^a_{p}} \tensor{h}{^c_q} [A_{[b}, F_{a c]} ]. \label{YM-bianchi-F-const}
\end{align}

We work within the \textit{temporal gauge}:
\be\label{temporal}
A_a T^a = 0,
\ee
i.e. \[A_a g^{a b} T_b = 0.\]

In the next two lemmas, we take advantage of the symmetric tensor $Q^{edc} := T^e g^{dc} + T^d g^{e c} - T^c g^{e d}$ to unify the Yang--Mills propagation equations \eqref{YM-div-dyn}-\eqref{YM-bianchi-dyn} and the dynamic equation for $A$ into a hyperbolic system. 
Recall the decomposition in \eqref{vari-YM}: \[E_b:= - T^p F_{p a} \tensor{h}{^a_b}, \quad \text{and} \quad H_{db}:=\tensor{h}{^c_d} F_{c a} \tensor{h}{^a_b}. \] As the first step, we deduce a complete first order system for $E_a$, $H_{ab}$ and $A_a$.

\begin{lemma}\label{lem-maxwell-hyperbolic-0}
	The Yang--Mills system \eqref{YM-div-dyn}--\eqref{YM-bianchi-dyn} for any metric $g_{ab}$ implies the following propagation equations
	\begin{align}\label{e:maineq1}
		\A^{c}  \nabla_c
		\begin{pmatrix}
			E_{f} \\  H_{e f} \\  A_{f}
		\end{pmatrix}
		= {}&
		\begin{pmatrix}  \widetilde{\Delta}_{1 a}  \\ \widetilde{\Delta}_{2 b}^{h} \\ \widetilde{\Delta}_{3 \hc} \end{pmatrix},
	\end{align}
	where
	\begin{align*}
		\A^c & =
		\begin{pmatrix}
			-T^c \tensor{h}{^f_{a}} & - h^{e c} \tensor{h}{^f_{a}}  & 0 \\
			\tensor{h}{^{c}_{b}} h^{h f} - h^{h c} \tensor{h}{^f_{b}}  & -T^c h^{h e} \tensor{h}{^f_{b}} & 0 \\
			0 & 0 & -T^c \tensor{h}{^f_{\hc}}
		\end{pmatrix}
	\end{align*}		
	and $\widetilde{\Delta}_i$, $i=1,2,3$, are defined as below
	\begin{gather*}
		\widetilde{\Delta}_{1 a} =  h^{c d} [A_d, H_{c a}] + h^{c d} \nabla_c T_d E_{a} - \tensor{h}{^f_{a}} h^{d c} \nabla_c T_f E_d + \tensor{h}{^f_{a}} \nabla_T T^e H_{e f}, \\
		\widetilde{\Delta}_{2 b}^{h} =  h^{h d} \tensor{h}{^a_{b}} ([A_{d}, E_{a} ] - [A_a, E_d])  - h^{h d}  \tensor{h}{^a_{b}} (\nabla_T T_a E_{d} -  \nabla_T T_{d} E_{a}) + ( \tensor{h}{^f_{b}} h^{h c} - \tensor{h}{^c_{b}} h^{h f}  ) \nabla_c T^e H_{e f}, \\
		\widetilde{\Delta}_{3 \hc}  =   E_{\hc} + \tensor{h}{^a_{\hc}} \nabla_a T^{d}  A_{d}. 
	\end{gather*}	
\end{lemma}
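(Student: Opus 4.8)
The plan is to derive the first order system \eqref{e:maineq1} by projecting the Yang--Mills propagation equations \eqref{YM-div-dyn}--\eqref{YM-bianchi-dyn} onto $\Sigma_t$ and rewriting them in terms of the decomposed unknowns $E_a$, $H_{ab}$, $A_a$. The key algebraic fact is the decomposition of the curvature $F_{ab}$ relative to the timelike normal $T^a$: using $g_{ab}=h_{ab}-T_aT_b$ repeatedly one writes
\begin{align*}
	F_{ab} = H_{ab} - T_a E_b + T_b E_a,
\end{align*}
where the sign conventions match $E_b=-T^pF_{pa}\tensor{h}{^a_b}$ and $H_{ab}=\tensor{h}{^c_a}F_{cd}\tensor{h}{^d_b}$ from \eqref{vari-YM}. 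Substituting this identity into \eqref{YM-div-dyn} and \eqref{YM-bianchi-dyn} converts every occurrence of $\nabla F$ into terms of the form $\nabla(\text{spatial field})$ plus terms where the derivative hits $T$, which are lower order and get absorbed into the $\widetilde\Delta_i$ via the second fundamental form $k_{ab}=\nbar_aT_b=\nabla_bT_a+T_b\nabla_TT_a$ (see \eqref{e:2ndfm}).

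First I would treat the divergence equation \eqref{YM-div-dyn}: expand $g^{ba}=h^{ba}-T^bT^a$ so that $g^{ba}\nabla_bF_{ac}\tensor{h}{^c_e}$ splits into a spatial-derivative piece and a $\nabla_T$ piece; insert the decomposition of $F$; collect the genuinely differentiated terms (which produce the $-T^c\tensor{h}{^f_a}\nabla_cE_f$ and $-h^{ec}\tensor{h}{^f_a}\nabla_cH_{ef}$ entries of the first row of $\A^c$) and move the rest — terms with $\nabla T$ acting on $E$ or $H$, plus the commutator $h^{cd}[A_d,H_{ca}]$ coming from the right side of \eqref{eq-YM-div} — into $\widetilde\Delta_{1a}$. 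Then I would do the same for the Bianchi propagation equation \eqref{YM-bianchi-dyn}: antisymmetrizing $\nabla_{[b}F_{ac]}$, contracting against $T^b\tensor{h}{^a_p}\tensor{h}{^c_q}$, and again substituting the decomposition of $F$; the $\nabla_TH$ term and the two $\nabla H$ terms with a free spatial index produce the second row of $\A^c$, while the $\nabla T$ corrections and the commutator $[A_{[b},F_{ac]}]$ projected appropriately give $\widetilde\Delta_{2b}^h$. Finally the third row is simply the definition $E_{\hc}=-T^pF_{p\hc}$ rewritten using $F_{p\hc}=\nabla_pA_{\hc}-\nabla_{\hc}A_p+[A_p,A_{\hc}]$ together with the temporal gauge $A_aT^a=0$: differentiating $0=T^aA_a$ gives $T^a\nabla_\hc A_a=-A_a\nabla_\hc T^a$, so $-T^c\tensor{h}{^f_\hc}\nabla_cA_f = E_\hc + \tensor{h}{^a_\hc}\nabla_aT^dA_d$, which is exactly $\widetilde\Delta_{3\hc}$ after moving $E_\hc$ to the right (note the commutator term $T^aT^\hc[A_a,A_\hc]$ vanishes by antisymmetry).

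The main obstacle is bookkeeping: keeping the projectors $\tensor{h}{^a_b}$, the normal $T^a$, and the antisymmetrizations straight so that exactly the right lower order terms land in each $\widetilde\Delta_i$, and in particular making sure all the $\nabla_cT_d$ terms are correctly reorganized — some contract into $\tr k$ or $\nabla_TT^e$ factors, others remain with free indices — matching the stated $\widetilde\Delta_{1a}$, $\widetilde\Delta_{2b}^h$ exactly. One should also be careful that the terms where $\nabla$ hits the projector $\tensor{h}{^a_b}$ itself (since $\nabla h\neq0$, only $\nbar h=0$) are accounted for; these again reduce to $k$-terms by \eqref{e:2ndfm} and \eqref{def-h}. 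No genuinely hard analytic input is needed here — this lemma is purely the algebraic repackaging step, and the symmetrization difficulty it exposes (the failure of $\A^c$ to be symmetric when $n+1>4$) is deferred to the subsequent lemmas.
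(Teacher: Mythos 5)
Your plan — decomposing $F_{ab}$ into electric and magnetic parts relative to $T^a$ and then projecting \eqref{YM-div-dyn} and \eqref{YM-bianchi-dyn} separately onto $\Sigma_t$ — is sound and arrives at the stated system, but it is organized differently from the paper's proof. The paper does not treat the two propagation equations independently; it contracts \eqref{YM-div-dyn} against $T^e$, inserts the symmetrizing tensor $Q^{edc}=T^eg^{dc}+T^dg^{ec}-T^cg^{ed}$, and uses the Bianchi identity to rewrite the residual $(-T^dg^{ec}+T^cg^{ed})\nabla_cF_{da}$ so as to obtain the single $e$-indexed identity \eqref{YM-0}; the first and second rows of $\A^c$ are then the $T_e$- and $\tensor{h}{^h_e}$-projections of that one identity. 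The two routes are equivalent in content (the $T_e$-projection is, up to an overall sign, the divergence equation, and the $\tensor{h}{^h_e}$-projection encodes the Bianchi propagation), but the paper's packaging through $Q^{edc}$ is precisely what makes the principal symbol in the next step, Lemma \ref{lem-FOSHS-YM}, fall out; going your direct route you would have to rediscover the role of $Q^{edc}$ there.

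There are a few bookkeeping slips in your writeup that would propagate wrong signs into the $\widetilde{\Delta}_i$ if carried through. With the paper's convention $E_b=-T^pF_{pa}\tensor{h}{^a_b}$, inserting $\tensor{\delta}{^c_a}=\tensor{h}{^c_a}-T^cT_a$ in both slots gives $F_{ab}=H_{ab}+T_aE_b-T_bE_a$, not $F_{ab}=H_{ab}-T_aE_b+T_bE_a$ as you wrote; to check, contract with $T^a$ and $\tensor{h}{^b_c}$: $T^aF_{ab}\tensor{h}{^b_c}=-E_c$. In your description of the second row you speak of ``the two $\nabla H$ terms with a free spatial index,'' but the off-diagonal entries $\tensor{h}{^c_b}h^{hf}-h^{hc}\tensor{h}{^f_b}$ multiply $\nabla_cE_f$, so those are $\nabla E$ terms, while the $H$ term in that row is the single $\nabla_T H$ piece. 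Finally, for the third row, the commutator contribution is $\tensor{h}{^a_{\hc}}[A_a,A_p]T^p=\tensor{h}{^a_{\hc}}[A_a,T^pA_p]$, which vanishes by the temporal gauge $T^pA_p=0$, not by antisymmetry of the bracket (the quantity $T^aT^{\hc}[A_a,A_{\hc}]$ you cite does not arise here). None of these invalidate the strategy, but they need to be fixed before the $\widetilde{\Delta}_i$ can match the paper's exactly.
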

\begin{remark}\label{rk-max-nonsym}
	Although the coefficient matrix 
	\begin{align*}
		\A^c T_c & =
		\begin{pmatrix}
			\tensor{h}{^f_b} & 0  & 0 \\
			0  &  h^{h l} & 0 \\
			0 & 0 &  \tensor{h}{^f_{\hc}} 
		\end{pmatrix}
	\end{align*}
	is positive definite and symmetric, unfortunately, $\A^c \tensor{h}{^b_c}$ is non-symmetric. Therefore, \eqref{e:maineq1} fails to constitute a hyperbolic system. Observe that the term $\tensor{h}{^{c}_{b}} h^{h f}$ in the first column and the second row of $\A^c$  breaks the symmetry and this term comes from $\tensor{h}{^{c}_{b}} h^{h f} \nabla_{c} E_f$ in the system \eqref{e:maineq1}. It motivates us to introduce a new variable $\tE^e:=-h^{e f} E_f$ in order to symmetrize this system in the next step.
\end{remark} 
\begin{proof}
	The Yang--Mills equation \eqref{YM-div-dyn} gives rise to
	\begin{equation*}
		Q^{edc} \nabla_c F_{d a} \tensor{h}{^{a}_{b}} + (- T^d g^{e c} + T^c g^{e d} ) \nabla_c F_{d a} \tensor{h}{^{a}_{b}}  =  - T^e g^{c d} [A_d, F_{c a} ] \tensor{h}{^{a}_{b}}.
	\end{equation*}
	Making use of the Yang--Mills Bianchi equation \eqref{YM-bianchi-dyn} and 	
	due to the anti-symmetry of the Yang--Mills field $F_{ab}$, yields
	\begin{align*}
		(- T^d g^{e c} + T^c g^{e d} ) \nabla_c F_{d a}  \tensor{h}{^{a}_{b}}  
		= &  T^d g^{e c} (\nabla_c F_{a d} + \nabla_d F_{c a} )  \tensor{h}{^{a}_{b}} \\
		= & - T^d g^{e c} \nabla_a F_{d c}  \tensor{h}{^{a}_{b}} - T^d g^{e c} \tensor{h}{^{a}_{b}} [A_{[d}, F_{c a]} ]  \\
		= & - T^d g^{e q} \tensor{h}{^f_q} \tensor{h}{^a_f} \nabla_c F_{d a} \tensor{h}{^{c}_{b}}  - T^d g^{e c} \tensor{h}{^{a}_{b}} [A_{[d}, F_{c a]} ],
	\end{align*}
	where in the last equality, we note that, 
	\begin{align*}	
		- T^d g^{e a} \nabla_c F_{d a} \tensor{h}{^{c}_{b}} = {}&  - T^d g^{e f} \tensor{h}{^a_f} \nabla_c F_{d a} h^c_b =  - T^d g^{e q} \tensor{h}{^f_q} \tensor{h}{^{a}_{f}} \nabla_c F_{d a} \tensor{h}{^{c}_{b}}. 
	\end{align*}
	As a result, we arrive at a first order system 
	\begin{align}
		& \tensor{h}{^{f}_{b}} Q^{edc} \nabla_c (F_{d a} \tensor{h}{^{a}_{f}})  - \tensor{h}{^{c}_{b}} T^d g^{e q} \tensor{h}{^f_q} \nabla_c (F_{d a} \tensor{h}{^{a}_{f}}) = \widetilde{\Delta}^{\prime e}_b, \label{YM-0}
	\end{align}
	where 
	\begin{align*}
		\widetilde{\Delta}_b^{\prime e} ={} & \tensor{h}{^{a}_{b}} \left( T^d g^{e c}  [A_{[d}, F_{c a]} ] -  T^e g^{c d}  [A_d, F_{c a} ] \right) + ( \tensor{h}{^{f}_{b}} Q^{edc} - \tensor{h}{^{c}_{b}} T^d g^{e q} \tensor{h}{^f_q} ) \nabla_c \tensor{h}{^{a}_{f}}  F_{d a}.
	\end{align*}
	We emphasis that \eqref{YM-0} has taken the full Yang--Mills propagation equations \eqref{YM-div-dyn}--\eqref{YM-bianchi-dyn} into accounts.
	
	Noting the decomposition
	\begin{align*}
		F_{d a} \tensor{h}{^a_f}  
		={}&
		\begin{pmatrix}
			T_d, \tensor{h}{^{\hd}_d}
		\end{pmatrix}
		\begin{pmatrix}
			-T^p F_{p a} \tensor{h}{^a_f} \\ \tensor{h}{^{\ha}_{\hd}} F_{\ha a} \tensor{h}{^a_f}
		\end{pmatrix}
		=
		\begin{pmatrix}
			T_d, \tensor{h}{^{\hd}_d}
		\end{pmatrix}
		\begin{pmatrix}
			E_f \\ H_{\hd f} 
		\end{pmatrix},
	\end{align*}
	and acting $\begin{pmatrix} T_e \\ \tensor{h}{^h_e} \end{pmatrix}$ on \eqref{YM-0}, we then  turn the Yang--Mills equations into the following form
	\begin{align}\label{eq-YM-1}
		&\quad
		\begin{pmatrix}
			\tensor{h}{^{f}_{b}} T_e Q^{edc} T_d  & \tensor{h}{^{f}_{b}} T_e Q^{edc} \tensor{h}{^{\hd}_d}  \\
			\tensor{h}{^{f}_{b}} \tensor{h}{^h_e} Q^{edc} T_d +  \tensor{h}{^{c}_{b}} h^{h f} & \tensor{h}{^{f}_{b}} \tensor{h}{^h_e} Q^{edc} \tensor{h}{^{\hd}_d} 
		\end{pmatrix}
		\nabla_c
		\begin{pmatrix}
			E_f \\ H_{\hd f} 
		\end{pmatrix} 
		= 
		\begin{pmatrix} \widetilde{\Delta}_{1 b} \\  \widetilde{\Delta}^{h}_{2 b}
		\end{pmatrix},
	\end{align}
	where
	\als{
		\widetilde{\Delta}_{1 b} ={}&  T_e \widetilde{\Delta}^{\prime e}_b - T_e ( \tensor{h}{^{f}_{b}} Q^{edc} - \tensor{h}{^{c}_{b}} T^d g^{e q} \tensor{h}{^f_q} ) (\nabla_c T_d E_f + \nabla_c \tensor{h}{^{\hd}_d} H_{\hd f}), \\
		={}& \tensor{h}{^{a}_{b}} \left( T^d T^{ c}  [A_{[d}, F_{c a]} ] + g^{c d}  [A_d, F_{c a} ] \right) + h^{c d} \nabla_c T_d E_b - \tensor{h}{^f_b} h^{d c} \nabla_c T_f E_d + \tensor{h}{^f_b} \nabla_T T^e H_{e f}, \\
		\widetilde{\Delta}^h_{2 b} ={}&  \tensor{h}{^h_e} \widetilde{\Delta}^{\prime e}_b - \tensor{h}{^h_e} (\tensor{h}{^{f}_{b}} Q^{edc} - \tensor{h}{^{c}_{b}} T^d g^{e q} \tensor{h}{^f_q} ) (\nabla_c T_d E_f + \nabla_c \tensor{h}{^{\hd}_d} H_{\hd f})\\
		={}& \tensor{h}{^h_e} \tensor{h}{^{a}_{b}} T^d g^{e c}  [A_{[d}, F_{c a]} ] - \tensor{h}{^f_b} \nabla_T T_f \tensor{h}{^{h \hd}} E_{\hd} + h^{h \hd} \nabla_T T_{\hd} E_b  + ( \tensor{h}{^f_b} h^{h c} - \tensor{h}{^{c}_{b}} h^{h f}  ) \nabla_c T^e H_{e f}. 	 
	}
	
	On the other hand, we substitute the temporal gauge \eqref{temporal} into the definition of the Yang--Mills curvature to derive a dynamic equation for the connection $A_a$,
	\als{
		\nabla_{T} A_a = {}&T^b (\nabla_a A_b -  [A_b, A_a] + F_{b a})  
		={} -E_a - \nabla_a T^{d}  A_{d}.
	} 
	That is, after taking the $\Sigma$-tangent part,
	\begin{equation}\label{eq-YM-potential}
		- \tensor{h}{^{a}_{b}} T^c \nabla_{c} A_a  = E_b + \tensor{h}{^{a}_{b}} \nabla_a T^{d}  A_{d}.
	\end{equation}			
	Putting \eqref{eq-YM-1} and \eqref{eq-YM-potential} together, we complete the proof.
\end{proof}

In correspondence with Remark \ref{rk-max-nonsym}, we  introduce an extra unknown variable
\be\label{def-tE}
\tE^a:=-g^{e b}\tensor{h}{^a_e}E_{b},
\ee
so that the final symmetric hyperbolic system for the Yang--Mills field concerns the unknowns $\tE^e$, $E_{d}$, $H_{a b}$ and $A_b$. 

Now we \textit{aim} to recast the Yang--Mills system in Lemma \ref{lem-maxwell-hyperbolic-0} into a symmetric hyperbolic system with regard to the variables $\tE^e$, $E_{d}$, $H_{a b}$ and $A_b$. 
For later computations, we list the following identities 
which are entailed by \eqref{def-tE}, 
\be\label{eq-E-tE} 
\tE^a=-h^{a b} E_{b}, \AND
E_b=-\tE^a g_{ab}.
\ee 	

\begin{lemma}\label{lem-FOSHS-YM}
	The Yang--Mills system \eqref{YM-div-dyn}--\eqref{YM-bianchi-dyn} for any metric $g_{ab}$ implies the following  hyperbolic system
	\begin{align}\label{YM-FOSHS-1}
		\tilde{\mathbf{A}}^{\prime c} \nabla_{c} \p{\tE^e  \\E_{\hd}  \\ H_{\ha\hb} \\ A_{\hc}} = &
		\p{\mathfrak{D}_{1 \he}  \\  \mathfrak{D}_{2}^{d} \\ \mathfrak{D}_{3}^{a b}  \\ \mathfrak{D}_{4}^{ r} }, 
	\end{align}
	where
	\begin{align*}
		\tilde{\mathbf{A}}^{\prime c} ={} & - \p{T^c h_{\he e} & 0 & - h^{\ha c} \tensor{h}{^{\hb}_{\he}} & 0\\
			0 &  T^c h^{\hd d}  & - h^{\ha d} h^{\hb c}  & 0\\
			-  h^{a c} \tensor{h}{^b_{e}} & - h^{a \hd} h^{b c} & h^{\ha a} h^{\hb b} T^c & 0 \\
			0 & 0 & 0 & h^{r \hc} T^{c} },		
	\end{align*}
	is symmetric and
	\begin{align*}
		\tilde{\mathbf{A}}^{\prime c} T_c ={} &  \p{ h_{\he e} & 0 & 0 & 0\\
			0 &  h^{\hd d}  & 0 & 0\\
			0 & 0 & h^{\ha a} h^{\hb b} & 0 \\
			0 & 0 & 0 & h^{r \hc}  }		
	\end{align*}
	is positive,
	and 
	\begin{align*}
		\mathfrak{D}_{1 \he}( \mathbf{U}) := {} & - \widetilde{\Delta}_{1 \he}, \\
		\mathfrak{D}_{2}^{d}(\mathbf{U}) := {}& h^{d \hd} \widetilde{\Delta}_{1 \hd}, \\
		\mathfrak{D}_{3}^{a b}(\mathbf{U}):={}& - \tensor{h}{^{a \ha}} \widetilde{\Delta}_{2\ha}^{b}, \\
		\mathfrak{D}_{4}^{r}(\mathbf{U}) :={}& h^{r \ha} \widetilde{\Delta}_{3 \ha}.
	\end{align*}	
	Note that $\widetilde{\Delta}_{i}, \, i=1,2,3$ are  given in Lemma \ref{lem-maxwell-hyperbolic-0}.
\end{lemma}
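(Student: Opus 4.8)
The plan is to start from the complete first order system \eqref{e:maineq1} established in Lemma \ref{lem-maxwell-hyperbolic-0}, whose coefficient matrix $\A^c$ is symmetric when contracted with $T_c$ but fails to be symmetric when contracted with $\tensor{h}{^b_c}$, the obstruction being precisely the block $\tensor{h}{^c_b}h^{hf}$ identified in Remark \ref{rk-max-nonsym}. The idea is that this offending term arises from a derivative $\tensor{h}{^c_b}h^{hf}\nabla_c E_f$, and by rewriting $h^{hf}E_f=-\tE^h$ (the new variable defined in \eqref{def-tE}, with the dictionary \eqref{eq-E-tE}) this becomes $-\tensor{h}{^c_b}\nabla_c\tE^h$ up to a term $-\tensor{h}{^c_b}E_f\nabla_c h^{hf}$ which is lower order and absorbed into the source. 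So first I would take the $H_{ef}$-equation (the second row of \eqref{e:maineq1}) and trade the $h^{hf}\nabla_c E_f$ contribution for $-\nabla_c\tE^h$ plus source terms, thereby producing a symmetric coupling between the $H$-block and a new $\tE$-block.

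Second, I need an evolution equation for $\tE^e$ itself. Rather than introducing genuinely new dynamics, I would differentiate the relation $\tE^a=-g^{eb}\tensor{h}{^a_e}E_b$ along $\nabla_c$ and substitute the $E_f$-equation (the first row of \eqref{e:maineq1}) for the $E$-derivatives. This yields $-T^ch_{\he e}\nabla_c(\text{something})$-type principal terms together with a $-h^{\ha c}\tensor{h}{^{\hb}_{\he}}\nabla_c H_{\ha\hb}$ term, which is exactly the transpose of the coupling produced in the previous step — this is the mechanism by which the $\tE$-equation ``compensates'' the non-symmetric part. One must be careful that derivatives of $g^{eb}$, $\tensor{h}{^a_e}$, $T$ etc. all land in the source $\mathfrak{D}_1$; the definitions $\mathfrak{D}_{1\he}=-\widetilde\Delta_{1\he}$, $\mathfrak{D}_2^d=h^{d\hd}\widetilde\Delta_{1\hd}$, $\mathfrak{D}_3^{ab}=-h^{a\ha}\widetilde\Delta_{2\ha}^b$, $\mathfrak{D}_4^r=h^{r\ha}\widetilde\Delta_{3\ha}$ tell me the precise bookkeeping: the $\tE$ and $E$ rows share the source $\widetilde\Delta_1$ up to raising an index and a sign, consistent with $\tE$ and $E$ carrying the ``same'' information.

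Third, I would assemble the four equations (for $\tE^e$, $E_{\hd}$, $H_{\ha\hb}$, $A_{\hc}$) into the single system \eqref{YM-FOSHS-1}, and then verify directly that the coefficient $\tilde{\mathbf A}^{\prime c}$ has the stated block form. The symmetry of $\tilde{\mathbf A}^{\prime c}$ is then a matter of inspecting the matrix: the $(1,1)$ and $(2,2)$ diagonal blocks are $-T^ch_{\he e}$ and $-T^ch^{\hd d}$, both manifestly symmetric in their free indices; the $(3,3)$ block is $h^{\ha a}h^{\hb b}T^c$, symmetric under the pair swap $(ab)\leftrightarrow(\ha\hb)$; the $(4,4)$ block is $h^{r\hc}T^c$; and the only off-diagonal blocks are the $(1,3)/(3,1)$ pair $-h^{\ha c}\tensor{h}{^{\hb}_{\he}}$ versus $-h^{ac}\tensor{h}{^b_e}$ and the $(2,3)/(3,2)$ pair $-h^{\ha d}h^{\hb c}$ versus $-h^{a\hd}h^{bc}$, which match upon relabelling. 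For positivity of $\tilde{\mathbf A}^{\prime c}T_c$ I would contract: using $T^cT_c=-1$ and $h^{ac}T_c=0$ (since $h^{ab}=g^{ab}+T^aT^b$ and $T$ is unit timelike), every off-diagonal block kills itself and the diagonal collapses to $\diag(h_{\he e},\,h^{\hd d},\,h^{\ha a}h^{\hb b},\,h^{r\hc})$, which is positive definite on $\Sigma$-tangent tensors because $h_{ab}$ is a Riemannian metric on $\Sigma_t$.

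I expect the main obstacle to be neither the symmetry check nor the positivity — those are essentially linear algebra once the system is written down — but rather the careful derivation of the $\tE^e$ evolution equation and the accompanying verification that, after moving all the connection-derivative and metric-derivative terms to the right-hand side, the remaining principal part is \emph{exactly} $\tilde{\mathbf A}^{\prime c}$ with no leftover asymmetric pieces, and that the sources organize precisely into $\mathfrak D_1,\dots,\mathfrak D_4$ as claimed. In particular I must track how $\nabla_c\tE^a$ relates to $\nabla_c E_b$ (via $\nabla_c h^{ab}$, which involves $\nabla_c T$ and hence the second fundamental form $k_{ab}$ through \eqref{e:2ndfm}), and confirm that the term $\tensor{h}{^c_b}h^{hf}E_f$ originally breaking symmetry in Lemma \ref{lem-maxwell-hyperbolic-0} is genuinely cancelled rather than merely relocated. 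Once this accounting is done, the lemma follows.
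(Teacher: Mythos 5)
Your plan is, in substance, exactly the paper's proof: starting from the non-symmetric system of Lemma \ref{lem-maxwell-hyperbolic-0}, you introduce $\tE^e$ via \eqref{def-tE}, derive a $\tE$-equation from the $E$-equation, trade the offending $\tensor{h}{^c_b}h^{hf}\nabla_c E_f$ in the $H$-equation for $-\tensor{h}{^c_b}\nabla_c\tE^h$ plus a lower-order remainder, and then verify symmetry and positivity of the resulting principal symbol by inspection and by contracting with $T_c$. That is the paper's route, and your symmetry/positivity check is correct.

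One small but worthwhile refinement. For the $\tE$-equation the paper substitutes $E_b=-\tE^a g_{ab}$ (the $g$-version of \eqref{eq-E-tE}) rather than $E_b=-\tE^a h_{ab}$; since $\nabla_c g_{ab}=0$, the $T^c\nabla_c$-derivative passes through $g_{ab}$ for free and \emph{no} extra source appears — which is precisely why $\mathfrak{D}_{1\he}=-\widetilde{\Delta}_{1\he}$ on the nose. Your phrasing, which tracks $\nabla_c h^{ab}$ through $k_{ab}$, is equivalent but makes the bookkeeping look heavier than it is. Similarly, in the $H$-equation the remainder you call ``absorbed into the source'', namely the $\nabla_c h^{ha}E_a$-type term, does not actually enter $\mathfrak{D}_3^{ab}$: after multiplying by $h^{\hc\ha}$ and projecting, it vanishes identically because $\nabla_c h^{ha}=T^h\nabla_c T^a+T^a\nabla_c T^h$ and both $T^aE_a=0$ and the final spatial projection kill $T^h$. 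This is consistent with the Lemma's claim that $\mathfrak{D}_3^{ab}=-\tensor{h}{^{a\ha}}\widetilde{\Delta}_{2\ha}^{b}$ with no leftover; if you did find a surviving remainder there you would have a sign that a projection step had been dropped.
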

\begin{remark}\label{t:ymhpsl}
	In other words, this lemma states if $(F_{ab},A_c)$ solves the Yang--Mills system \eqref{YM-div-dyn}--\eqref{YM-bianchi-dyn} for any metric $g_{ab}$, then $(\tE^e,E_{\hd},H_{\ha\hb},A_{\hc})$ defined in terms of $(F_{ab},A_c)$ in the preceding ways solves \eqref{YM-FOSHS-1}.
\end{remark}
\begin{remark}
	We observe that $\mathbf{A}^{\prime c}$ is symmetric in the following pairs of indices: $(e, \, \he)$, $(d, \, \hd)$, $(a, \, \ha)$, $(b, \, \hb)$, $(r, \, \hc)$. This property will be responsible to the symmetry of the Yang--Mills system \eqref{YM-FOSHS-1}, see Section \ref{sec-Model}. 
\end{remark}
\begin{proof}
	Keeping \eqref{eq-E-tE} in mind, we will derive a dynamic equation for $\tE^a$ through the first equation of \eqref{eq-YM-1}, which reads
	\begin{align}\label{e:eq2}
		& \tensor{h}{^a_{\ha}} T_e Q^{edc}T_d \nabla_c E_{a} + \tensor{h}{^{a}_{\ha}} T_e Q^{edc} \tensor{h}{^\hb_d} \nabla_c H_{\hb a} = \widetilde{\Delta}_{1 \ha}.
	\end{align}
	In details, we substitute \eqref{eq-E-tE} into the first term above, 
	\begin{equation*}
		\tensor{h}{^b_{\ha}} T_e Q^{edc}T_d \nabla_c E_{b} =  - \tensor{h}{^b_{\ha}} T_e Q^{edc} T_d \nabla_c (\tE^a g_{a b})   
		=   - \tensor{h}{^b_{\ha}} T_e Q^{edc} T_d g_{a b} \nabla_c \tE^a.
	\end{equation*}
	Noticing that $T_e Q^{edc} T_d = -T^c$, it then follows that 
	\begin{align}\label{e:tE-1}
		& \tensor{h}{^b_{\he}} g_{a b}  T^c \nabla_c  \tE^{a} - \tensor{h}{^b_{\he}} h^{\hb c} \nabla_c H_{\hb b}
		=  \widetilde{\Delta}_{1 \he},
	\end{align} 
	which further gives, noting that $\tensor{h}{^b_\he} g_{a b} T^c \nabla_c  \tE^{a} =  T^c h_{\he a} \nabla_c  \tE^{a}$,
	\begin{align}\label{Main-maxwell-tE-1}
		&  h_{\he e} T^c \nabla_c  \tE^{e} - \tensor{h}{^{\hb}_{\he}} h^{\ha c} \nabla_c H_{\ha \hb} = \widetilde{\Delta}_{1 \he}.
	\end{align} 
	
	Meanwhile, \eqref{e:eq2} itself is a dynamic equation for $E_{\hd}$, taking the form of
	\begin{align*}
		& \tensor{h}{^{\hd}_d} T^c \nabla_c E_{\hd}  + \tensor{h}{^{\hd}_d} h^{\hb c} \nabla_c H_{\hb \hd} = - \widetilde{\Delta}_{1 d},
	\end{align*}	
	which, after using the anti-symmetry of $H_{\ha\hb}$, is further rewritten as
	\begin{align}\label{Main-YM-E-1}
		& h^{d \hd} T^c \nabla_c E_{\hd}  - h^{d \ha} h^{\hb c} \nabla_c H_{\ha \hb} = - h^{d \hd} \widetilde{\Delta}_{1 \hd}.
	\end{align} 
	As a concluding remark, \eqref{e:eq2} is completely equivalent to \eqref{YM-div-dyn}.

	At last, we turn to the second equation of \eqref{eq-YM-1}, that is, the dynamic equation for $H_{ab}$,  
	\begin{align}\label{eq-Max-3-1}
		\tensor{h}{^{a}_{\ha}} \tensor{h}{^h_e} Q^{edc} T_d \nabla_c E_{a} +  \tensor{h}{^{c}_{\ha}} \tensor{h}{^{h a}}  \nabla_c E_{a} + \tensor{h}{^{a}_{\ha}} \tensor{h}{^h_e} Q^{edc} \tensor{h}{^{\hb}_d} \nabla_c H_{\hb a} = \widetilde{\Delta}^h_{2 \ha}.
	\end{align}
	In view of the term $- h^{\ha c} \nabla_c H_{\ha \hb}$ in \eqref{Main-maxwell-tE-1}, we have to extract a term such as $- h^{\hc c}  \nabla_c \tE^{a}$ from \eqref{eq-Max-3-1} for the sake of symmetry. For this purpose, the second term in \eqref{eq-Max-3-1} is rewritten as \[ \tensor{h}{^{c}_{\ha}} \tensor{h}{^{h a}}   \nabla_c E_{a}= -  \tensor{h}{^{c}_{\ha}} \nabla_c \tE^{h}- \tensor{h}{^{c}_{\ha}} \nabla_c h^{h a} E_{a}. \] 
	Then we multiply $h^{\hc \ha}$ on \eqref{eq-Max-3-1}, and make use of the identities (referring to Lemma \ref{lem-identity}), 
	\begin{equation*}
		h^{\hc \ha}\tensor{h}{^{\hd}_e} Q^{edc} T_d=  -h^{\hc \ha} h^{\hd c} \AND
		h^{\hc \ha} \tensor{h}{^{\hd}_e} Q^{edc} \tensor{h}{^\hb_d}= -h^{\hc \ha} T^c h^{\hb\hd}. 
	\end{equation*}		
	It follows that
	\begin{align*}
		- h^{\hc\ha} h^{\hd c}\nabla_c E_{\ha} - h^{\hc c}  \nabla_c \tE^{\hd} - & \tensor{h}{^{\hc \ha}} T^c h^{\hb\hd} \nabla_c H_{\hb\ha}   = \tensor{h}{^{\hc \ha}} \widetilde{\Delta}^{\hd}_{2\ha} + h^{\hc c}  \nabla_c h^{\hd \ha} E_{\ha}.
	\end{align*}
	Equivalently, making the rearrangement $h^{a \hb} h^{b \ha} H_{\ha\hb} = -h^{a \ha} h^{b \hb} H_{\ha \hb}$ and performing a projection, we obtain
	\begin{align}\label{Main-maxwell-H-1}
		- h^{a c} \tensor{h}{^b_e}  \nabla_c \tE^{e} - h^{a \hd} h^{b c} \nabla_c E_{\hd} + h^{a \ha} h^{\hb b} T^c \nabla_c H_{\ha \hb} ={}& \tensor{h}{^{a \ha}}  \widetilde{\Delta}^b_{2\ha}.
	\end{align}
	In fact, \eqref{Main-maxwell-H-1} is essentially equivalent to \eqref{YM-bianchi-dyn}.

	Collecting \eqref{Main-maxwell-tE-1},  \eqref{Main-YM-E-1} and \eqref{Main-maxwell-H-1} together, we achieve the following system (after changing the sign)
	\begin{align}\label{YM-main-1}
		& - \p{ T^c h_{\he e} & 0 & - h^{\ha c} \tensor{h}{^{\hb}_{\he}} \\
			0 &  T^c h^{\hd d}  & - h^{\ha d} h^{\hb c}  \\
			-  h^{a c} \tensor{h}{^b_{e}} & - h^{a \hd} h^{b c} & h^{\ha a} h^{\hb b} T^c }
		\nabla_c 
		\p{\tE^e  \\E_{\hd}  \\ H_{\ha \hb} } 
		=
		\p{ - \widetilde{\Delta}_{1 \he}  \\
			h^{d \hd} \widetilde{\Delta}_{1 \hd}  \\
			- \tensor{h}{^{a \ha}} \widetilde{\Delta}^b_{2\ha}}.
	\end{align}

	In the end,	adding the equation for $A$ \eqref{eq-YM-potential} into the above system, Lemma \ref{lem-FOSHS-YM} is then concluded.
\end{proof}

Inversely, the hyperbolic system derived in Lemma \ref{lem-FOSHS-YM} reduces to the Yang--Mills equations, if we enforce certain constraints on the data. Before verifying that, we remind ourselves $\nbar$, the induced connection on $\Sigma$, and give some practical information about $\nbar$. For any $\Sigma$-tangent tensor $\Psi$, there holds\footnote{With this notation, throughout this paper, we note that for any $\Sigma$-tangent tensor $\Psi$, $T^c \nbar_c \Psi$ vanishes, which is different from $\nbar_T \Psi$ in \eqref{def-bar-nbla-T}. } 
\begin{align}
	\nbar_c \tensor{\Psi}{^{a_1 \cdots a_s}_{b_1 \cdots b_r}} ={}&   \tensor{h}{^{\hc}_c} \tensor{h}{^{\hb_1}_{b_1}} \cdots \tensor{h}{^{\hb_r}_{b_r}} \tensor{h}{_{\ha_1}^{a_1}} \cdots \tensor{h}{_{\ha_s}^{a_s}} \nabla_{\hc} \tensor{\Psi}{^{\ha_1 \cdots \ha_s}_{\hb_1 \cdots \hb_r}}, \label{def-bar-nbla}
	\intertext{and}
	\nbar_T \tensor{\Psi}{^{a_1 \cdots a_s}_{b_1 \cdots b_r}} ={}& \tensor{h}{^{\hb_1}_{b_1}} \cdots \tensor{h}{^{\hb_r}_{b_r}} \tensor{h}{_{\ha_1}^{a_1}} \cdots \tensor{h}{_{\ha_s}^{a_s}} \nabla_T \tensor{\Psi}{^{\ha_1 \cdots \ha_s}_{\hb_1 \cdots \hb_r}}.  \label{def-bar-nbla-T}
\end{align}	
In particular, we note that, for a $(0, r)$ tensor $\Psi$ that is $\Sigma$-tangent, both of $\nbar_T \Psi$ and $\lie_T \Psi$ are $\Sigma$-tangent, and they are related as follows,
\begin{equation}\label{e:lienb}
	\lie_T \Psi_{b_1 \cdots b_r} = \nbar_T \Psi_{b_1 \cdots b_r} + \sum_i \nbar_{b_i} T^b \Psi_{b_1 \cdots b \cdots b_r}. 
\end{equation}
As a remark, $E_{d}$, $H_{a b}$, $A_b$ and their Lie derivatives $\lie_T A_a, \, \lie_T E_b, \, \lie_T H_{a b}$ are all $\Sigma$-tangent. 


The following lemma plays an important role in the proof of Theorem \ref{t:depctrnt} that states the equivalence between the EYM equations and the target symmetric hyperbolic system. 
\begin{lemma}\label{t:ctrtevl}
	Given a metric $g$, if $(\tE^e, \, E_d, \, H_{a b}, \, A_c)$ solves the hyperbolic system \eqref{YM-FOSHS-1} for $t\in[0,T_\star)$ and the initial data obeys the following constraints\footnote{These constraints include the Yang--Mills constraints \eqref{constraint-YM-1} and two extra constraints \eqref{id-tE-E}-\eqref{id-A-H} which fix the relations between $\tE^e$ and $E_{d}$, $A_a$ and $H_{a b}$ respectively. 
	}: 
	\begin{align}
		\bigl(\tE^e \tensor{h}{^d_e} + h^{\hd d} E_{\hd}\bigr)|_{t=0} ={}&0, \label{id-tE-E} \\
		\bigl(\nbar_a A_{ b} - \nbar_b A_{ a} + [A_{ a}, A_{  b}]-H_{ a b}\bigr)|_{t=0} ={}& 0, \label{id-A-H} \\
		\bigl(\nbar^a E_{a} + h^{a b} [A_{a}, E_{b}]\bigr)|_{t=0} = {}&0, \label{constraint-YM-1} 
	\end{align} 
	then these constraints holds for any future time $t\in[0,T_\star)$, 
	\begin{align}
		\tE^e \tensor{h}{^d_e} + h^{\hd d} E_{\hd} \equiv {}&0, \label{id-tE-E2} \\
		\nbar_a A_{b} - \nbar_b A_{a} + [A_{a}, A_{b}] \equiv {}& H_{a b}, \label{id-A-H2} \\
		\nbar^a E_{a} + h^{a b} [A_{a}, E_{b}] \equiv {}&0. \label{constraint-YM-2} 
	\end{align}
\end{lemma}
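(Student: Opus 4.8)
The plan is to show that each of the three constraint quantities --- call them $\alpha^d := \tE^e \tensor{h}{^d_e} + h^{\hd d} E_{\hd}$, $\beta_{ab} := \nbar_a A_{b} - \nbar_b A_{a} + [A_{a}, A_{b}] - H_{ab}$, and $\gamma := \nbar^a E_{a} + h^{a b} [A_{a}, E_{b}]$ --- satisfies a \emph{homogeneous} linear evolution equation along the $T$-direction, so that vanishing at $t=0$ forces vanishing for all $t \in [0, T_\star)$ by uniqueness for linear ODEs/transport equations (or a Gr\"onwall estimate). Concretely, I would first differentiate each constraint by $\nbar_T$ (or $\lie_T$, using \eqref{e:lienb} to pass between them for the $\Sigma$-tangent tensors $E$, $A$, $H$), substitute the evolution equations furnished by the hyperbolic system \eqref{YM-FOSHS-1} --- i.e.\ the component equations \eqref{Main-maxwell-tE-1}, \eqref{Main-YM-E-1}, \eqref{Main-maxwell-H-1}, and \eqref{eq-YM-potential} --- and show that all terms organize into expressions linear in $(\alpha, \beta, \gamma)$ and their spatial derivatives, with no inhomogeneous source. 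The order would be: (1) handle $\alpha^d$ first, since its evolution should be the most direct --- subtract $h^{\hd d}$ times the $E$-equation \eqref{Main-YM-E-1} from the $\tE$-equation \eqref{Main-maxwell-tE-1}, and the $H$-terms should cancel by antisymmetry, leaving $\nbar_T \alpha^d$ proportional to $\alpha$ plus curvature/$k_{ab}$ contractions of $\alpha$; (2) handle $\beta_{ab}$: differentiate $\lie_T \beta_{ab}$ using \eqref{eq-YM-potential} for $\lie_T A$ and \eqref{Main-maxwell-H-1} for $\lie_T H$, commute $\nbar$ with $\lie_T$ (picking up $\nbar k$ terms and $\bar R$ terms), and verify the principal parts cancel so that $\lie_T \beta_{ab}$ is algebraic in $\beta$ (and possibly $\alpha$); (3) handle $\gamma$ last, as it will likely depend on $\beta$ through the commutator structure --- differentiate $\nbar_T \gamma$, use the $E$-equation and the definition of $H$, and the Yang--Mills Bianchi-type cancellations (antisymmetrizations of $[A,[A,\cdot]]$ via the Jacobi identity and Ad-invariance) should reduce $\nbar_T \gamma$ to terms linear in $\gamma$ and $\beta$.

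Having derived a closed first-order linear homogeneous system $\nbar_T (\alpha, \beta, \gamma) = \mathcal{L}(\alpha, \beta, \gamma)$ where $\mathcal{L}$ is a first-order linear operator with coefficients built from $g$, $\nbar g$, $k$, $\bar R$, $A$, $E$, $H$ (all controlled on $[0,T_\star)$ by the assumed regularity of the solution to \eqref{YM-FOSHS-1}), I would then invoke uniqueness for such linear symmetric-hyperbolic-type (or simply transport, once one notes $T^c \nbar_c$ is differentiation along integral curves of $T$ combined with lower-order spatial operators) systems: the zero solution is the unique solution with zero initial data, hence $\alpha \equiv \beta \equiv \gamma \equiv 0$ on $[0, T_\star)$. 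Alternatively, if the system is not literally transport, one can contract with the constraint quantities themselves, integrate over $\Sigma_t$, integrate by parts to kill principal terms, and close a Gr\"onwall inequality for $\|(\alpha,\beta,\gamma)\|_{L^2(\Sigma_t)}$ (then bootstrap to higher Sobolev norms to recover the pointwise statement, using the Sobolev embedding guaranteed by $s > n/2 + 1$).

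\textbf{Main obstacle.} The delicate part is Step (2)--(3): verifying that the principal (top-derivative) terms genuinely cancel in the evolution equations for $\beta_{ab}$ and $\gamma$, rather than producing a source term in the \emph{other} constraints or an honest inhomogeneity. This requires carefully commuting $\nbar$ past $\lie_T$ (which introduces second-fundamental-form and intrinsic-curvature terms via the Ricci identity on $\Sigma$) and, crucially, exploiting the algebraic identities for the Lie-bracket nonlinearities --- the graded Jacobi identity $[A_{[a}, [A_b, A_{c]}]] = 0$ and the Ad-invariance of the inner product --- which encode the gauge covariance that makes the Yang--Mills constraint propagate. A secondary subtlety is bookkeeping the $h$-vs-$g$ index gymnastics and the difference between $\nbar_T$ and $\lie_T$ (equation \eqref{e:lienb}) consistently, so that the $\alpha$-constraint, which mediates between $\tE$ and $E$, does not spuriously appear in the principal part of the other two equations. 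I expect that once these cancellations are checked, the remaining terms are manifestly lower-order and linear in $(\alpha, \beta, \gamma)$, and the conclusion follows from the standard uniqueness argument.
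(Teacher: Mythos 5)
Your proposal follows the same route as the paper: define $\alpha^d,\beta_{ab},\gamma$, derive (via the evolution equations of the hyperbolic system together with $\lie_T$-$\nbar_T$ commutation, the Ricci identity on $\Sigma$, and the Jacobi identity for the bracket nonlinearities) a closed, hierarchical, homogeneous linear transport system --- in the paper it is precisely $\nbar_T\alpha^d=0$, $\lie_T\beta_{pq}=-h_{qd}\nbar_p\alpha^d$, $\lie_T\gamma=-\tr k\,\gamma+\tfrac12[\beta^{ab},H_{ab}]$ --- and conclude by uniqueness from vanishing initial data. Your anticipation of the ordering (first $\alpha$, then $\beta$ sourced by $\nbar\alpha$, then $\gamma$ sourced by $\beta$) and of the key cancellations matches the paper's proof; the only small over-hedge is that the $\alpha$-equation actually comes out exactly $\nbar_T\alpha^d=0$ rather than merely linear in $\alpha$.
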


\begin{proof}
	To simplify the presentation of the proof, we define 
	\begin{align}
		\alpha^d :={}& \tE^e \tensor{h}{^d_e} + h^{\hd d} E_{\hd}, \label{e:a}\\
		\beta_{p q}: ={}& \nbar_p A_{q} - \nbar_q A_p + [A_p, A_q] - H_{p q}, \label{e:b}\\
		\gamma :={}& \nbar^a E_a + h^{a b} [A_a, E_b]\label{e:g}.
	\end{align}		
	Let us denote the second fundamental form $k_{a b}$ and the relevant quantities by,  \[k_{a b} := \frac{1}{2} \lie_T h_{a b}, \quad \tr k := h^{a b} k_{a b} \AND \mathbf{k}_{a b} := \frac{1}{2} \lie_T g_{a b}.\] 
	Then, there follows the identities, \[k_{a b} = \nbar_a T_b = \nbar_b T_a, \quad \tr k =  \nbar^a T_a = h^{a b} \nbar_a T_b \AND \mathbf{k}_{a b} = k_{a b} - \frac{1}{2} T_a \nabla_T T_b - \frac{1}{2} T_b \nabla_T T_a.\] 
	We also denote
	$	\mathbb{A}_a$, $	\mathbb{B}^{\he}$, $	\mathbb{D}_d$ and $	\mathbb{C}_{p q}$ 	by
	\begin{align*}
		\mathbb{A}_a : ={}& \nbar_{T} A_a + E_a + \nbar_{a} T^{d}  A_{d}, 
		\\
		\mathbb{B}^{\he} : ={}& \nbar_T  \tE^{\he} - h^{\he e} \nbar^{\ha} H_{\ha e} - \nbar^a T_a h^{\he e} E_{e} - h^{\he e} h^{c d} [A_d, H_{c e}] + \nbar^d T^{\he} E_d - h^{\he f} \nabla_T T^e H_{e f}, 
		\\
		\mathbb{D}_d :={}& \nbar_T E_d + \nbar^a H_{a d} + \nbar^aT_a E_d - \nbar^a T_d  E_a + \nbar_T T^a H_{a d} +  h^{a b} [A_a, H_{b d}],  
		\\
		\mathbb{C}_{pq} :={}& \nbar_T H_{p q}  +  \nbar_p E_q - \nbar_q E_p -  h_{q d} \nbar_p \alpha^d + \nbar_p T^b H_{b q} + \nbar_q T^b H_{p b}  \nnb \\
		&- \nabla_T T_q E_p + \nabla_T T_p E_q + [A_p, E_q] - [A_q, E_p], 
	\end{align*}		
	which can be equivalently expressed, in terms of Lie derivatives, as 
	\begin{align}
		\mathbb{A}_a ={}& \lie_{T} A_a + E_a,\label{e:A2} \\
		\mathbb{D}_d ={}& \lie_T E_d + \nbar^a H_{a d} + \tr k E_d - 2 \tensor{k}{^a_d} E_a + \nbar_T T^a H_{a d} +  h^{a b} [A_a, H_{b d}], \label{e:D2}\\
		\mathbb{C}_{pq} ={}& \lie_T H_{p q}  +  \nbar_p E_q - \nbar_q E_p -  h_{q d} \nbar_p \alpha^d - \nabla_T T_q E_p + \nabla_T T_p E_q + [A_p, E_q] - [A_q, E_p]. \label{e:C2}
	\end{align}
	In terms of these definitions and notations, the system \eqref{YM-FOSHS-1} reads, respectively, as 
	\begin{align}
		\mathbb{A}_a ={}&0, \label{YM-eq-A-1} \\
		\mathbb{B}^{\he}={}& 0, \label{YM-tE-1} \\
		\mathbb{D}_d ={}&0, \label{YM-prop-1} \\
		\mathbb{C}_{pq} ={}&0. \label{eq-E-H}
	\end{align}  
	
	Next, we proceed with deducing several geometric identities that concern the dynamics of $\alpha^d$, $\beta_{pq}$ and $\gamma$. With the help of  \eqref{YM-eq-A-1}--\eqref{eq-E-H} (i.e. the hyperbolic system \eqref{YM-FOSHS-1}), these identities reduce to a closed and homogeneous system on $\alpha^d$, $\beta_{pq}$ and $\gamma$, which further implies that if  $\alpha^d=0$, $\beta_{pq}=0$ and $\gamma=0$ initially, then $\alpha^d=0$, $\beta_{pq}=0$ and $\gamma=0$ for all $t\in[0,T_\star)$.
	
	\underline{$(1)$ The equation of $\alpha^d$:} Using $\nbar_T$ to act on \eqref{e:a} and noting that $\nbar_c \tensor{h}{^d_e}=0$ and $\nbar_T h^{\hd d}=0$), lead to 
	\begin{align}\label{eq-T-alpha}
		\nbar_T \alpha^d ={}& \mathbb{B}^d + \mathbb{D}^d + \nbar_T \tensor{h}{^d_e} \tE^e +  \nbar_T h^{\hd d} E_{\hd}  
		=  \mathbb{B}^d + \mathbb{D}^d.
	\end{align}

	\underline{$(2)$ The equation of $\beta_{pq}$:} We note that
	\[ \lie_T \di A = \di \lie_T A  = \di \mathbb{A} - \di E. \] That is, 
	\begin{align} \label{e:hhl}
		\tensor{h}{^a_p} \tensor{h}{^b_q} \lie_T (\di A)_{a b} = \tensor{h}{^a_p} \tensor{h}{^b_q} (\di \mathbb{A})_{a b} - \nbar_p E_q + \nbar_q E_p.
	\end{align}
	Meanwhile, noting
	\begin{equation*}
		\lie_T \tensor{h}{^a_p} = T_p  \nabla_T T^a + T^a\nabla_T T_p -\nbar_p T^a+ \tensor{h}{^a_c} \nabla_p T^c =T^a\nabla_T T_p, 
	\end{equation*}
	then the term on the left hand side of \eqref{e:hhl} is related to $\lie_T (\nbar_p A_q - \nbar_q A_p)$ as follows,
	\begin{align}\label{e:hhl2}
		\tensor{h}{^a_p} \tensor{h}{^b_q} \lie_T (\di A)_{a b} = {}&  \lie_T (\tensor{h}{^a_p} \tensor{h}{^b_q} \di A)_{a b}-\tensor{h}{^b_q}\lie_T \tensor{h}{^a_p}  (\di A)_{a b}-\tensor{h}{^a_p}\lie_T  \tensor{h}{^b_q} (\di A)_{a b} \notag\\
		= {}& \lie_T (\nbar_p A_q - \nbar_q A_p)  - \nabla_T T_q E_p + \nabla_T T_p E_q  + \nabla_T T_q \mathbb{A}_p - \nabla_T T_p \mathbb{A}_q.
	\end{align}
	On the other hand, thanks to Lemma \ref{t:lieleb} (in Appendix \ref{s:lieleb}), there is the identity
	\begin{align}\label{e:llie}
		\lie_T ([A, A])_{p q} = {}& [\mathbb{A}_p, A_q] + [A_p, \mathbb{A}_q] + [-E_p, A_q] + [A_p, - E_q].
	\end{align}	
	Hence, by \eqref{e:hhl}, \eqref{e:hhl2} and \eqref{e:llie}, we obtain 
	\begin{align*}
		& \lie_T (\nbar_p A_q - \nbar_q A_p + [A, A]_{p q}) \\
		= &- \nbar_p E_q + \nbar_q E_p + [-E_p, A_q] + [A_p, - E_q] + \nabla_T T_q E_p - \nabla_T T_p E_q\\
		& +  \tensor{h}{^a_p} \tensor{h}{^b_q} (\di \mathbb{A})_{a b}  - \nabla_T T_q \mathbb{A}_p + \nabla_T T_p \mathbb{A}_q +[\mathbb{A}_p, A_q] + [A_p, \mathbb{A}_q].
	\end{align*}
	With the help of the definition of $\mathbb{C}_{p q}$ \eqref{e:C2}, $\lie_T H_{p q}$ naturally enters into the above formula,
	\begin{align*}
		\lie_T (\nbar_p A_q - \nbar_q A_p + [A, A])_{p q} 
		={}& \lie_T H_{p q} - \mathbb{C}_{p q} -  h_{q d} \nbar_p \alpha^d + \tensor{h}{^a_p} \tensor{h}{^b_q} (\di \mathbb{A})_{a b} \nnb \\
		& - \nabla_T T_q \mathbb{A}_p + \nabla_T T_p \mathbb{A}_q + [\mathbb{A}_p, A_q] + [A_p, \mathbb{A}_q].
	\end{align*}
	It then becomes the aiming identity on $\beta_{pq}$, for using the definition \eqref{e:b}, we arrive at
	\begin{align}
		\lie_T \beta_{p q} 
		={}& - \mathbb{C}_{p q} -  h_{q d} \nbar_p \alpha^d + \tensor{h}{^a_p} \tensor{h}{^b_q} (\di \mathbb{A})_{a b} \nnb \\
		& - \nabla_T T_q \mathbb{A}_p + \nabla_T T_p \mathbb{A}_q + [\mathbb{A}_p, A_q] + [A_p, \mathbb{A}_q]. \label{eq-beta-2}
	\end{align}

	\underline{$(3)$ The equation of $\gamma$:} It follows by straightforward calculations that
	\begin{align*}
		\lie_T \gamma ={}& h^{a b} \lie_T \nabla_a E_b  + \lie_T h^{a b} (\nabla_a E_b + [A_a, E_b] ) + h^{a b}( [\lie_T A_a, E_b] + [A_a, \lie_T E_b]) \\
		= {}& h^{a b} \nabla_b  \lie_T E_a -2\nbar^a \tensor{k}{^b_{a}} E_b +\nbar^b \tr k E_b + \tr k \nabla_T T^c E_c - 2 k^{a b} \nbar_b E_a - 2 k^{a b} [A_a, E_b]  \\
		&+ \nabla_T T^a \nabla_T E_a - \nabla_T T^b \tensor{k}{^a_b} E_a + h^{a b} [\mathbb{A}_a, E_b] + h^{a b} [A_a, \lie_T E_b], 
	\end{align*}
	where we use the commuting identity between $\lie_T$ and $\nabla$,
	\begin{align*}
		h^{a b} \lie_T \nabla_a E_b ={}& h^{a b} \nabla_b  \lie_T E_a - h^{a b} (\nabla_a \tensor{\mathbf{k}}{^c_b} + \nabla_b \tensor{\mathbf{k}}{^c_a} - \nabla^c \mathbf{k}_{a b}) E_c \\
		={}& h^{a b} \nabla_b  \lie_T E_a -2\nbar^a \tensor{k}{^{ b}_a} E_b +\nbar^b \tr k E_b + \tr k \nabla_T T^c E_c. 
	\end{align*}
	Making use of $\mathbb{D}_a$ \eqref{e:D2} to substitute the $\lie_T E_a$ and $\nabla_T E_a$ above leads to
	\begin{align}\label{e:ltg}
		\lie_T \gamma = {}& \nbar^a \mathbb{D}_a + \nabla_T T^a \mathbb{D}_a + h^{a b} [\mathbb{A}_a, E_b] + h^{a b} [A_a, \mathbb{D}_b]- \tr k \gamma + \nbar^a \nbar^b H_{a b} \notag \\
		&  - \nbar^a(\nabla_T T^b) H_{b a}  -[\nbar^a A^b, H_{b a}] - h^{a b} h^{\ha \hb} [A_a,[A_{\ha}, H_{\hb b}]]. 
	\end{align}

	In the following, we devote to simplifying \eqref{e:ltg}. We first verify $\nbar^a \nbar^b H_{ab}=0$ and \\
	$\nbar^a(\nabla_T T^b) H_{b a}=0$. This is confirmed by the identities
	\begin{align}\label{e:t10}
		\nbar^a \nbar^b H_{ab} =   \frac{1}{2} (\nbar^a \nbar^b H_{ab} - \nbar^b \nbar^a H_{ab} )  
		=  \frac{1}{2} (\bar R^{c b} H_{b c} - \bar R^{a c}H_{a c}) = 0,
	\end{align}
	and 
	\begin{align}\label{e:t20}
		&\nbar^a(\nabla_T T^b) H_{b a} = \tensor{k}{^a_c} k^{c b} H_{b a} + T^c \nabla^a \nabla_c T^b H_{b a} = T^c \nabla^a \nabla_c T^b H_{b a} \notag  \\
		& \hspace{1cm} = \nabla_T \nabla^a T^b H_{b a} + R_{aT b T} H^{a b} =\nbar_T k^{a b} H_{b a}-\nabla_T T^a \nabla_T T^b H_{ba} + R_{aT b T} H^{a b} =0,
	\end{align}	
	where the anti-symmetry of $H_{a b}$ is used and $\bar R_{ab}$ denotes the Ricci curvature associated to $\nbar$.

	In the next, let us justify the following identity,
	\begin{equation}\label{e:t50}
		-[\nbar^a A^b, H_{b a}] - h^{a b} h^{\ha \hb} [A_a,[A_{\ha}, H_{\hb b}]]=\frac{1}{2} [ \beta^{a b}, H_{a b}]. 
	\end{equation}
	We note, by the definition of $\beta^{ab}$ \eqref{e:b}  and using $[H^{ba},H_{ba}]\equiv 0$, the following identity, 
	\begin{align}\label{e:t30}
		& -[\nbar^a A^b, H_{b a}] = -\frac{1}{2} [\nbar^a A^b - \nbar^b A^a, H_{b a}] \notag  \\
		& \hspace{1cm} = -\frac{1}{2} [H^{a b} - [A^a, A^b] + \beta^{a b}, H_{b a}] 
		= \frac{1}{2} [H_{a b}, [A^a, A^b]] + \frac{1}{2} [ \beta^{a b}, H_{a b}]. 
	\end{align}
	Moreover, using the Jacobi identity, we obtain 
	\begin{align*}
		& \frac{1}{2} [H_{a b}, [A^a, A^b]] -[A^b, [A^{a}, H_{a b}]] \\
		={}&- \frac{1}{2} [A^a, [A^b, H_{a b}]] - \frac{1}{2} [A^b, [H_{a b}, A^a]]   -[A^b, [A^{a}, H_{a b}]] \\
		={}& \frac{1}{2} [A^b, [A^a, H_{a b}]] + \frac{1}{2} [A^b, [A^a, H_{a b}]]   -[A^b, [A^{a}, H_{a b}]] =0, 
	\end{align*}
	which, in turn, yields
	\begin{equation}\label{e:t40}
		[A^b, [A^{a}, H_{a b}]]=\frac{1}{2} [H_{a b}, [A^a, A^b]]. 
	\end{equation}	
	Then, the identities \eqref{e:t30} and \eqref{e:t40}, with the help of the temporal gauge \eqref{temporal}, imply \eqref{e:t50}. 
	
	Finally, inserting  \eqref{e:t10}, \eqref{e:t20} and \eqref{e:t50} into \eqref{e:ltg}, we arrive at
	\be\label{eq-T-gamma}
	\lie_T \gamma + \tr k \gamma = \nbar^a \mathbb{D}_a + \nabla_T T^a \mathbb{D}_a + h^{a b} [\mathbb{A}_a, E_b] + h^{a b} [A_a, \mathbb{D}_b] + \frac{1}{2} [ \beta^{a b}, H_{a b}].
	\ee
	
	Collecting \eqref{eq-T-alpha}, \eqref{eq-beta-2} and \eqref{eq-T-gamma} together, we express the propagating equations of $\alpha^d$, $\beta_{pq}$ and $\gamma$ in terms of $\mathbb{A}_a$, $\mathbb{B}^{\he}$, $\mathbb{D}_d$ and $\mathbb{C}_{pq}$, 
	\begin{align*}
		\bar \nabla_T \alpha^d ={}& \mathbb{B}^d + \mathbb{D}^d, \\
		\lie_T \beta_{p q} ={}&- \mathbb{C}_{p q} -  h_{q d} \nbar_p \alpha^d +\tensor{h}{^a_p} \tensor{h}{^b_q} (\di \mathbb{A})_{a b} - \nabla_T T_q \mathbb{A}_p + \nabla_T T_p \mathbb{A}_q + [\mathbb{A}_p, A_q] + [A_p, \mathbb{A}_q],  \\
		\lie_T \gamma + \tr k \gamma ={}& \nbar^a \mathbb{D}_a + \nbar_T T^a \mathbb{D}_a + h^{a b} [\mathbb{A}_a, E_b] + h^{a b} [A_a, \mathbb{D}_b] + \frac{1}{2} [ \beta^{a b}, H_{a b}].
	\end{align*}		
	Then, using \eqref{YM-eq-A-1}--\eqref{eq-E-H} (i.e., \eqref{YM-FOSHS-1}), we derive for $\alpha^d$, $\beta_{pq}$ and $\gamma$ a system of homogeneous equation, 
	\begin{align} 
		\bar \nabla_T \alpha^d ={}& 0,
		\label{e:aode}\\
		\lie_T \beta_{p q} ={}& - h_{q d} \nbar_p \alpha^d, \label{e:bode}\\
		\lie_T \gamma ={}& - \tr k \gamma +  \frac{1}{2} [ \beta^{a b}, H_{a b}], \label{e:gode}
	\end{align}
	with initial data $\alpha^d|_{t=0}=0$, $\beta_{pq}|_{t=0}=0$ and $\gamma|_{t=0}=0$. It follows that $\alpha^d, \, \beta_{pq}, \, \gamma$ all vanish during the propagation. The proof can be carried out by various approaches\footnote{An alternative method is to construct a local \textit{adapted coordinate system} (see \cite[\S $2.5$]{Gourgoulhon2012} for details) to the vector field $T^a$, then solve these equations in this local coordinates due to the fact that the Lie derivative reduces to a coordinate derivative along the $T^a$-integral curves. In fact, by virtue of the definition of Lie derivatives, we know that if the data vanish, then the variable vanishes along the integral curves. }, for instance, a hierarchy of energy estimates that begins with $\alpha^d$, then $\beta_{p q}$, and ends up with $\gamma$ will complete the proof. 
\end{proof}

In the end, adapted to the target formulation of hyperbolic system, \eqref{YM-FOSHS-1} is transformed into the following form.  
\begin{corollary}\label{thm-FOSHS-YM}
	Given any metric $g$, the Yang--Mills system \eqref{YM-div-dyn}--\eqref{YM-bianchi-dyn} implies the following propagation equations
	\begin{align}\label{YM-FOSHS}
		\tilde{\mathbf{A}}^{ c } \nb_{c} \p{\tE^{\check{e}}  \\E_{\check{d}}  \\ H_{\check{a}\check{b}} \\ A_{\check{s}}} =  -(\tilde{\mathbf{A}}^{\hc} \nu_{\hc})\nu^c\nb_c \p{\tE^{\check{e}}  \\E_{\check{d}}  \\ H_{\check{a}\check{b}} \\ A_{\check{s}}} +\tilde{\mathbf{A}}^{\hc} \tensor{\ulh}{^c_{\hc}} \nb_c \p{\tE^{\check{e}}  \\E_{\check{d}}  \\ H_{\check{a}\check{b}} \\ A_{\check{s}}} = \tilde G (\mathbf{U}), 
	\end{align}
	where
	\begin{align*}
		\tilde{\mathbf{A}}^{ c} ={} & - \p{T^c h_{\he e}\ulh^{b \he} \tensor{\ulh}{^{e}_{\check{e}}} & 0 & - h^{\ha c} \tensor{h}{^{\hb}_{\he}}\ulh^{b \he}\tensor{\ulh}{^{\check{a}}_{\ha}} \tensor{\ulh}{^{\check{b}}_{\hb}}  & 0\\
			0 &  T^c h^{\hd d} \ulh_{d f} \tensor{\ulh}{^{\check{d}}_{\hd}}  & - h^{\ha d} h^{\hb c} \ulh_{d f} \tensor{\ulh}{^{\check{a}}_{\ha}} \tensor{\ulh}{^{\check{b}}_{\hb}}  & 0\\
			-  h^{a c} \tensor{h}{^b_{e}} \ulh_{a \bar a} \ulh_{b \bar b} \tensor{\ulh}{^{e}_{\check{e}}} & - h^{a \hd} h^{b c} \ulh_{a \bar a} \ulh_{b \bar b} \tensor{\ulh}{^{\check{d}}_{\hd}} & h^{\ha a} h^{\hb b} T^c \ulh_{a \bar a} \ulh_{b \bar b} \tensor{\ulh}{^{\check{a}}_{\ha}} \tensor{\ulh}{^{\check{b}}_{\hb}}  & 0 \\
			0 & 0 & 0 & h^{r \hc} T^{c} \ulh_{\check{c} r} \tensor{\ulh}{^{\check{s}}_{\hc}}  },		
	\end{align*} 
	is symmetric and $\tilde{\mathbf{A}}^{c} \nu_c$ is positive, 
	and 
	\begin{align*}
		\tilde G (\mathbf{U}) =  
		-\diag \{\ulh^{b \he}, \ulh_{d f}, \ulh_{a \bar a} \ulh_{b \bar b}, \ulh_{\check{c} r} \} \tilde{\mathbf{A}}^{\prime c} \p{ \tensor{X}{^{e}_{c d}} \tE^d \\ - \tensor{X}{^d_{c \hd }} E_d \\ - \tensor{X}{^d_{c \ha }} H_{d \hb } - \tensor{X}{^d_{c \hb } } H_{\ha d}  \\ - \tensor{X}{^d_{c \hc}} A_{d} 
		} + 
		\p{ \ulh^{b \he} \mathfrak{D}_{1 \he}(\mathbf{U}) \\
			\ulh_{d f}\mathfrak{D}_{2}^{d}(\mathbf{U}) \\
			\ulh_{a \bar a} \ulh_{b \bar b} \mathfrak{D}_{3}^{a b}(\mathbf{U}) \\
			\ulh_{\check{c} r} \mathfrak{D}_{4}^{r}(\mathbf{U}) }.
	\end{align*}
	Here we recall that $\tensor{X}{^e_{ca}}$ is defined by \eqref{E:X2}, $\tilde{\mathbf{A}}^{\prime c}$  and $\mathfrak{D}_{i}(\mathbf{U}), \, i =1,\cdots,4$ are given in Lemma \ref{lem-FOSHS-YM}.
\end{corollary}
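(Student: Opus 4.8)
The plan is to derive \eqref{YM-FOSHS} from \eqref{YM-FOSHS-1} by a purely algebraic change of the differential operator, converting the $\nabla_c$-form of Lemma \ref{lem-FOSHS-YM} into a $\nb_c$-form that is manifestly symmetric and positive against $\nu_c$, which is the shape required by Theorem \ref{t:lcexuqct}. The two steps are: (i) replace $\nabla_c$ by $\nb_c$ everywhere using the connection difference tensor $\tensor{X}{^a_{bc}}$ of \S\ref{s:conx}, and (ii) conjugate the resulting coefficient matrix by the background projection/metric factors $\ulh_{ab}$, $\ulh^{ab}$, $\tensor{\ulh}{^a_b}$ so that all free indices on $\tilde{\mathbf A}^c$ are raised/lowered with the \emph{reference} metric, which is what makes the symmetrization persist under the split $\tilde{\mathbf A}^c = -(\tilde{\mathbf A}^{\hc}\nu_{\hc})\nu^c + \tilde{\mathbf A}^{\hc}\tensor{\ulh}{^c_{\hc}}$.

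Concretely, first I would expand each entry of $\tilde{\mathbf A}^{\prime c}\nabla_c(\cdot)$ from Lemma \ref{lem-FOSHS-YM} via the defining relation of $\tensor{X}{^a_{bc}}$, e.g. $\nabla_c \tE^e = \nb_c \tE^e + \tensor{X}{^e_{cd}}\tE^d$, $\nabla_c E_{\hd} = \nb_c E_{\hd} - \tensor{X}{^d_{c\hd}}E_d$, and similarly for $H_{\ha\hb}$ (two $X$-terms by the anti/co-variance of its indices) and $A_{\hc}$. The $\nb_c$-terms keep the operator $\tilde{\mathbf A}^{\prime c}\nb_c$, while the $X$-terms are zeroth order and are absorbed into the source; this produces exactly the first bracket in the displayed formula for $\tilde G(\mathbf U)$, namely $-\tilde{\mathbf A}^{\prime c}$ applied to the column $(\tensor{X}{^e_{cd}}\tE^d,\,-\tensor{X}{^d_{c\hd}}E_d,\,-\tensor{X}{^d_{c\ha}}H_{d\hb}-\tensor{X}{^d_{c\hb}}H_{\ha d},\,-\tensor{X}{^d_{c\hc}}A_d)^T$, after the $\diag$ conjugation below, while the second bracket is just $\mathfrak D_i(\mathbf U)$ conjugated by the same $\diag$ matrix. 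Next I would conjugate the whole equation by $\diag\{\ulh^{b\he},\,\ulh_{df},\,\ulh_{a\bar a}\ulh_{b\bar b},\,\ulh_{\check c r}\}$ on the left and insert the dual background factors $\tensor{\ulh}{^e_{\check e}}$, $\tensor{\ulh}{^{\check d}_{\hd}}$, etc.\ when renaming the unknowns; since $\nb_c\nu_a=0$, $\nb_c\nu^a=0$, $\nb_c\tensor{\ulh}{^a_b}=0$ by \eqref{E:NBNU}, all these factors commute past $\nb_c$, so no extra lower-order terms are generated.

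It then remains to check the two stated structural properties of $\tilde{\mathbf A}^c$: symmetry and positivity of $\tilde{\mathbf A}^c\nu_c$. Symmetry follows because Lemma \ref{lem-FOSHS-YM} already records that $\tilde{\mathbf A}^{\prime c}$ is symmetric in the index pairs $(e,\he),(d,\hd),(a,\ha),(b,\hb),(r,\hc)$, i.e.\ symmetric as an operator with respect to the $g$-pairings of the fibre; conjugating by the symmetric positive-definite $\ulg$-factors $\ulh_{ab}$ (here with zero shift, $\ulh_{ab}$ is just the Riemannian metric on $\Sigma$) preserves symmetry with respect to the $\ulg$-pairing, which is the sense needed for the tensor-bundle hyperbolic model of \S\ref{s:local}. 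For positivity, contract with $\nu_c$: since $\nu^a=-(\partial/\partial t)^a$ and $T^a=(-\lambda)^{-1/2}\nabla^a t$ are both timelike and future-directed, $\nu_c T^c<0$, so $\tilde{\mathbf A}^c\nu_c = -(\nu_cT^c)\,\tilde{\mathbf A}^{\prime c}T_c/(T^dT_d)$ up to the conjugation, and from the second display in Lemma \ref{lem-FOSHS-YM} we know $\tilde{\mathbf A}^{\prime c}T_c = -\diag\{h_{\he e},h^{\hd d},h^{\ha a}h^{\hb b},h^{r\hc}\}$ which is (negative of a) positive operator on $\Sigma$-tangent fibres; combined with the sign of $\nu_cT^c$ and the positive $\ulh$-conjugation this gives positivity of $\tilde{\mathbf A}^c\nu_c$. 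The main obstacle is bookkeeping rather than conceptual: one must carefully track how the index pairs on $\tilde{\mathbf A}^{\prime c}$ interact with the inserted $\tensor{\ulh}{^a_b}$ projectors and $\ulh_{ab}$ metrics so that the resulting $\tilde{\mathbf A}^c$ is symmetric in the \emph{reference}-metric sense and not merely the $g$-sense, and to confirm that the extra $X$-terms land precisely as written in $\tilde G(\mathbf U)$; the actual verification of these identities is a direct computation using \eqref{E:NBNU}, \eqref{E:X2}, and the algebraic identities for $Q^{edc}$ in Proposition \ref{lem-identity}.
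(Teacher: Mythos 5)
Your proposal matches the paper's proof: split $\nabla_c = \nb_c + (X\text{-terms})$ to absorb the zeroth-order connection difference into the source, left-multiply by $\diag\{\ulh^{b\he},\ulh_{df},\ulh_{a\bar a}\ulh_{b\bar b},\ulh_{\check c r}\}$, and insert the $\nb$-parallel projectors $\tensor{\ulh}{^a_b}$ using \eqref{E:NBNU} together with the orthogonality facts $\tensor{h}{^{\hd}_d}\nu_{\hd}=0$, $\nu_{\check e}\tE^{\check e}=0$. (One self-cancelling sign slip in your positivity discussion: Lemma \ref{lem-FOSHS-YM} gives $\tilde{\mathbf{A}}^{\prime c}T_c = +\diag\{h_{\he e},h^{\hd d},h^{\ha a}h^{\hb b},h^{r\hc}\}$, not its negative, and correspondingly $\tilde{\mathbf{A}}^{\prime c}\nu_c = (-\lambda)^{1/2}\tilde{\mathbf{A}}^{\prime c}T_c$ with a positive coefficient, but the two sign errors cancel and your conclusion is fine; in any case the paper defers the verification of symmetry and positivity of $\tilde{\mathbf A}^c$ to \S\ref{s:mainpf}, so this is extra work beyond the corollary's own proof.)
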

\begin{proof}
	We exhibit the difference between the two covariant derivatives $\nabla_c$ and $\nb_c$ by the $\tensor{X}{^a_{b c}}$ terms and its application to the principle part in the hyperbolic system \eqref{YM-FOSHS-1} (see Lemma \ref{lem-FOSHS-YM}),
	\[\tilde{\mathbf{A}}^{\prime c} \nabla_c  \p{\tE^e  \\E_{\hd}  \\ H_{\ha\hb} \\ A_{\hc}} =  \tilde{\mathbf{A}}^{\prime c} \nb_c  \p{\tE^e  \\E_{\hd}  \\ H_{\ha\hb} \\ A_{\hc}} + \tilde{\mathbf{A}}^{\prime c} \p{ \tensor{X}{^{e}_{c d}} \tE^d \\ - \tensor{X}{^d_{c \hd }} E_d \\ - \tensor{X}{^d_{c \ha }} H_{d \hb } - \tensor{X}{^d_{c \hb } } H_{\ha d}  \\ - \tensor{X}{^d_{c \hc}} A_{d} }, \]
	and act the matrix $\diag \{\ulh^{b \he}, \, \ulh_{d f}, \, \ulh_{a \bar a} \ulh_{b \bar b}, \, \ulh_{\check{c} r} \}$ on the resulted equation. Then due to \eqref{E:NBNU} and the facts that $\tensor{h}{^{\hd}_d} \nu_{\hd} = 0$, $\nu_{\check{e}} \tE^{\check{e}} = 0$\footnote{In fact, if $\xi^a \tensor{h}{^b_a} =\xi^b$, then $\xi^a \nu_a=0$. This can be proved by the following argument. $\xi^a \tensor{h}{^b_a}=\xi^b$ implies $\xi^a T_a=0$, which, due to $T_c = |\lambda|^{-\frac{1}{2}} \nu_c$, in turn yields $\xi^a \nu_a=0$. }, there are the identities for the principle part, such as,
	\als{
		T^c h_{\he e}\ulh^{b \he} \nb_c \tE^{e} ={}&  T^c h_{\he e}\ulh^{b \he} \nb_c \left( (\tensor{\ulh}{^{e}_{\check{e}}} - \nu^e \nu_{\check{e}}) \tE^{\check{e}} \right) \\
		={}& T^c h_{\he e}\ulh^{b \he} \tensor{\ulh}{^{e}_{\check{e}}} \nb_c \tE^{\check{e}}, \\
		T^c h^{\hd d} \ulh_{d f} \nb_c E_{\hd}  ={}& T^c h^{\hd d} \ulh_{d f} \nb_c \left( ( \tensor{\ulh}{^{\check{d}}_{\hd}} - \nu_{\hd} \nu^{\check{d}})  E_{\check{d}}  \right) \\
		={}&  T^c h^{\hd d} \ulh_{d f} \tensor{\ulh}{^{\check{d}}_{\hd}} \nb_c  E_{\check{d}},
	}
	and analogous identities hold true for the rest entries. Therefore, we are able to rewrite the principle part as (the definition of $\tilde{\mathbf{A}}^{ c }$ is given in the corollary)
	\[\diag \{\ulh^{b \he}, \,\ulh_{d f},\, \ulh_{a \bar a} \ulh_{b \bar b}, \,\ulh_{\check{c} r} \}  \tilde{\mathbf{A}}^{\prime c} \nb_c  \p{\tE^e  \\E_{\hd}  \\ H_{\ha\hb} \\ A_{\hc}} = \tilde{\mathbf{A}}^{ c } \nb_{c} \p{\tE^{\check{e}}  \\E_{\check{d}}  \\ H_{\check{a}\check{b}} \\ A_{\check{s}}}, \] and obtain \eqref{YM-FOSHS}. 	
\end{proof}

\subsection{Equivalence between the EYM equations and the hyperbolic system}\label{s:equiv}	
Let us gather the reduced Einstein equations  \eqref{E:Einstein-sym-thm2} and the Yang--Mills equations \eqref{YM-FOSHS} together (see Corollary \ref{t:Ein2} and \ref{thm-FOSHS-YM}), then obtain a symmetric hyperbolic formulation of the Einstein--Yang--Mills system, 
\begin{align}\label{e:EYMmain}
	- \mathbf{A}^0 \nu^c \nb_c \mathbf{U} + \mathbf{A}^c \tensor{\ulh}{^	b_c} \nb_b \mathbf{U} =  G (\mathbf{U}),
\end{align}
where we recall $\mathbf{U}$ is the collection of all variables defined by \eqref{def-U}, 
\begin{align*}
	\mathbf{A}^0:=\p{ \check{\mathbf{A}}^0 & 0 \\ 0 & \tilde{\mathbf{A}}^{\hc} \nu_{\hc}}, \quad \mathbf{A}^c \tensor{\ulh}{^b_c} :=\p{ \check{\mathbf{A}}^c \tensor{\ulh}{^b_c} & 0 \\ 0 & \tilde{\mathbf{A}}^{\hc} \tensor{\ulh}{^b_{\hc}} }  \AND G(\mathbf{U}):=\p{\check G (\mathbf{U})\\\tilde G (\mathbf{U}) }. 
\end{align*}

In what follows, we address the \textit{equivalence} between the specified Cauchy problems of the above symmetric hyperbolic system \eqref{e:EYMmain} and the EYM system \eqref{eq-einstein}--\eqref{eq-YM-bianchi-1}. By ``\textit{equivalence}'' in this article, we mean that, under the wave gauge \eqref{E:CONSTR1} and the temporal gauge \eqref{temporal}, if $(g, A)$ solves the Cauchy problem of the EYM equations \eqref{eq-einstein} and  \eqref{eq-YM-div}--\eqref{eq-YM-Bianchi} with the Einstein and Yang--Mills constraints, then $\mathbf{U}$ determined by $(g, A)$ via the definitions \eqref{vari-metric}--\eqref{def-U} solves the hyperbolic system \eqref{e:EYMmain}. Conversely, if $\mathbf{U}$ solves the Cauchy problem of the hyperbolic system \eqref{e:EYMmain} with the specified constraints (to be specified in Theorem \ref{t:depctrnt}), then we can construct from $\mathbf{U}$ a solution $(g, A)$ that solves the reduced Einstein--Yang--Mills equation \eqref{eq-einstein} and \eqref{eq-YM-div}--\eqref{eq-YM-Bianchi} under the wave gauge \eqref{E:CONSTR1} and the temporal gauge \eqref{temporal}.

\begin{theorem}\label{t:depctrnt}
	We consider the Cauchy problem of the hyperbolic system \eqref{e:EYMmain} for $\mathbf{U}$ with data obeying the Einstein constraints, the wave gauge constraints and the constraints in Lemma \ref{t:ctrtevl} (abbreviated as the Cauchy problem of the hyperbolic system), and at the same time the Cauchy problem of the EYM equations \eqref{eq-einstein} and \eqref{eq-YM-div}--\eqref{eq-YM-Bianchi} under the wave gauge \eqref{E:CONSTR1} and the temporal gauge \eqref{temporal} with data satisfying the Einstein and Yang--Mills constraints (abbreviated as the Cauchy problem of the EYM system). Then the Cauchy problems of these two systems are equivalent in the existence region of solutions $t\in[0,T_\star)$. 
\end{theorem}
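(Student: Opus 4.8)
The plan is to prove the equivalence in two directions, establishing a dictionary between the solutions of the two Cauchy problems. For the forward direction, suppose $(g,A)$ solves the EYM equations under the wave gauge \eqref{E:CONSTR1} and temporal gauge \eqref{temporal} with data satisfying the Einstein and Yang--Mills constraints. I would first invoke the standard reduction (Lemma \ref{t:rdein}, Theorem \ref{thm-hyperb-gravity}, Corollary \ref{t:Ein2}) showing that the Einstein equations in the wave gauge are equivalent to the reduced system \eqref{E:Einstein-sym-thm2}, so the metric block of $\mathbf{U}$ solves the Einstein part of \eqref{e:EYMmain}. For the Yang--Mills block, I would use the definitions \eqref{vari-YM} and \eqref{def-tE} to build $(\tE^e, E_d, H_{ab}, A_c)$ out of $(g,A)$, and then apply Lemma \ref{lem-FOSHS-YM} together with Corollary \ref{thm-FOSHS-YM} (see also Remark \ref{t:ymhpsl}) to conclude that this tuple solves \eqref{YM-FOSHS}. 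Since the Yang--Mills and auxiliary-variable definitions automatically force the constraints \eqref{id-tE-E}--\eqref{constraint-YM-1} to hold on $\Sigma_{t_0}$ (they are just the definitions of $\tE$, $H$ in terms of $E$, $A$ restricted to the initial slice, plus the Yang--Mills Gauss constraint), $\mathbf{U}$ satisfies the Cauchy problem of the hyperbolic system. Concatenating the two blocks yields a solution of \eqref{e:EYMmain} with the required data, which is the forward implication.

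For the converse, suppose $\mathbf{U}$ solves the Cauchy problem of the hyperbolic system \eqref{e:EYMmain}, i.e.\ it solves \eqref{e:EYMmain} with data satisfying the Einstein constraints, the wave gauge constraints, and the constraints \eqref{id-tE-E}--\eqref{constraint-YM-1} of Lemma \ref{t:ctrtevl}. From $\mathbf{U}$ I would first read off a metric $g_{ab}$ from the component $g^{ab}$ (and check, using the transport equation for $g^{ab}$ built into \eqref{E:Einstein-sym-thm2}, that $\tensor{g}{^{ab}_d} = \nb_d g^{ab}$, so the metric block is genuinely a second-order solution of the reduced Einstein equations). The crucial step is to show the Yang--Mills block actually comes from a Yang--Mills field: I would apply Lemma \ref{t:ctrtevl} to conclude that, because the constraints \eqref{id-tE-E}--\eqref{constraint-YM-1} hold initially, the propagated constraints \eqref{id-tE-E2}--\eqref{constraint-YM-2} hold for all $t\in[0,T_\star)$. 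Then \eqref{id-tE-E2} identifies $\tE^e$ with $-h^{ed}E_d$, \eqref{id-A-H2} identifies $H_{ab}$ with the spatial curvature $\nbar_a A_b - \nbar_b A_a + [A_a,A_b]$ of the connection $A$, and \eqref{constraint-YM-2} is precisely the Yang--Mills Gauss (momentum) constraint. With these identifications, the hyperbolic system \eqref{YM-FOSHS-1} collapses to the genuine Yang--Mills propagation equations \eqref{YM-div-dyn}--\eqref{YM-bianchi-dyn}, so setting $F_{ab}$ via $E_b$, $H_{ab}$ and $T^a$ produces a curvature $2$-form satisfying both \eqref{eq-YM-div} and \eqref{eq-YM-Bianchi} in the dynamical directions. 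Since the Gauss constraint holds on every slice, the full Yang--Mills system \eqref{eq-YM-div}--\eqref{eq-YM-Bianchi} holds, and the temporal gauge is built in by construction.

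Having recovered a genuine Yang--Mills field $(g, A, F)$ that solves the reduced Einstein equations with the wave gauge constraint satisfied, the last step is to run the standard argument that a solution of the reduced Einstein equations whose data satisfy the Einstein constraints (and whose gauge source satisfies $X^a=f^a$) is in fact a solution of the full Einstein equations: the divergence of the reduced equations, together with the twice-contracted Bianchi identity and the divergence-free property of the Yang--Mills stress-energy tensor $\mathcal{T}_{ab}$ (valid once \eqref{eq-YM-div} holds), yields a homogeneous wave equation for the gauge violation $X^a - f^a$, which vanishes identically since it and its normal derivative vanish initially (the latter following from the Hamiltonian and momentum constraints). Hence $G_{ab}+\Lambda g_{ab}=\mathcal{T}_{ab}$ holds, completing the converse direction and the equivalence on $[0,T_\star)$.

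I expect the main obstacle to be the converse direction, specifically the bookkeeping needed to verify that once the propagated constraints \eqref{id-tE-E2}--\eqref{constraint-YM-2} hold, the symmetric hyperbolic system \eqref{YM-FOSHS-1} really does reduce term-by-term to \eqref{YM-div-dyn}--\eqref{YM-bianchi-dyn}; this requires tracing back through the manipulations in the proofs of Lemmas \ref{lem-maxwell-hyperbolic-0} and \ref{lem-FOSHS-YM} and using each of the three constraints at exactly the right place (e.g.\ \eqref{id-tE-E2} to eliminate $\tE$ in favour of $E$ in the first block, and \eqref{id-A-H2} to recognize the spatial part of the curvature). The rest — the metric reduction and the vanishing of the gauge violation — is the classical Choquet-Bruhat argument, which I would merely cite or sketch, with the only EYM-specific input being that $\nabla^a \mathcal{T}_{ab}=0$ once the Yang--Mills equations hold, a fact already recorded after \eqref{e:stregy}.
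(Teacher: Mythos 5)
Your proposal is correct and follows essentially the same route as the paper: forward direction by citing Theorem \ref{thm-hyperb-gravity}, Lemma \ref{lem-FOSHS-YM} and the definitions; converse direction by invoking Lemma \ref{t:ctrtevl} to propagate the constraints, using \eqref{id-tE-E2}--\eqref{constraint-YM-2} to identify $\tE$, $H$ and the Gauss constraint and thereby reduce \eqref{YM-FOSHS-1} to the genuine Yang--Mills propagation equations, and then running the classical wave-gauge argument for the Einstein block. The only difference is one of detail: the paper simply cites \cite[\S 14.2]{Ringstroem2009} and \cite[\S 2.3]{Alinhac2009} for the Einstein part and omits the bookkeeping that $\tensor{g}{^{ab}_d}=\nb_d g^{ab}$ propagates, whereas you explicitly flag both the first-order auxiliary-variable consistency and the homogeneous wave equation for the gauge violation via the twice-contracted Bianchi identity and $\nabla^a\mathcal{T}_{ab}=0$; these are correct and fill a gap the paper leaves implicit.
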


\begin{proof}
	\underline{$(1)$ EYM equations $\Rightarrow$ \eqref{e:EYMmain}:} Theorem \ref{thm-hyperb-gravity},  Lemma \ref{lem-FOSHS-YM} and the corresponding proofs immediately tell that if $(g, A)$ solves the Cauchy problem of the EYM equations \eqref{eq-einstein} and \eqref{eq-YM-div}--\eqref{eq-YM-Bianchi} under the wave gauge \eqref{E:CONSTR1} and the temporal gauge \eqref{temporal}, then $\mathbf{U}$ determined by $(g, A)$ via the definitions \eqref{vari-metric}--\eqref{def-U} solves the hyperbolic system \eqref{e:EYMmain}.

	\underline{$(2)$ \eqref{e:EYMmain} $\Rightarrow$ EYM equations:} Let $\mathbf{U}$ solve the Cauchy problem of the hyperbolic system \eqref{e:EYMmain} and it naturally gives the metric $g$ and the potential $A$ (with $A_T = 0$).  
	If we can verify $(g, A)$ solves the EYM equation \eqref{eq-einstein} and \eqref{eq-YM-div}--\eqref{eq-YM-Bianchi} under the wave gauge \eqref{E:CONSTR1} and temporal gauge \eqref{temporal}, then we complete the proof. We recall that, by Lemma \ref{t:ctrtevl}, the variables $\tE^e$, $E_c$, $H_{a b}$ in $\mathbf{U}$ obey the following identities, for  $t\in[0,T_\star)$,
	\begin{align}
		\tE^e \tensor{h}{^d_e} + h^{\hd d} E_{\hd} \equiv {}&0, \label{id-tE-E2a} \\
		\nbar_a A_{b} - \nbar_b A_{ a} + [A_{a}, A_{b}] \equiv {}& H_{a b}, \label{id-A-H2a} \\
		\nbar^a E_{a} + h^{a b} [A_{a}, E_{b}] \equiv {}&0. \label{constraint-YM-2a}
	\end{align} 
	In this step, we work backwards on the original hyperbolic formulation of Einstein--Yang--Mills system \eqref{YM-FOSHS-1} and \eqref{E:Einstein-sym-thm}, which amounts to the hyperbolic system \eqref{e:EYMmain}.	
	
	\underline{$(2a)$ \eqref{YM-FOSHS} (or \eqref{YM-FOSHS-1}) $\Rightarrow$ Yang--Mills equations in temporal gauge:}  We remark that the Yang--Mills equations in temporal gauge involve \eqref{eq-YM-div}--\eqref{eq-YM-Bianchi} with the connection fulfilling $A_a T^a =0$ and the curvature $F_{a b}$ given in terms of $A$ by \eqref{def-F}. Moreover, \eqref{eq-YM-div}--\eqref{eq-YM-Bianchi} compose of the propagation equations \eqref{YM-div-dyn}--\eqref{YM-bianchi-dyn}, and the constraints \eqref{YM-div-F-const}--\eqref{YM-bianchi-F-const}. All of them can be recovered by the hyperbolic system \eqref{YM-FOSHS-1} together with the constraints \eqref{id-tE-E2a}--\eqref{constraint-YM-2a} that are maintained during the evolution. 
	
	First of all, let $A$ be the potential induced from the solution $\mathbf{U}$ which satisfies $A_a T^a =0$ and define the associated curvature $F_{a b}$ through \eqref{def-F}. By this definition, $F_{a b} \tensor{h}{^a_p} \tensor{h}{^b_q} = \nbar_a A_{b} - \nbar_b A_{ a} + [A_{a}, A_{b}]$, then the Yang--Mills Bianchi constraint equation \eqref{YM-bianchi-F-const} automatically holds true.
	In addition, as the definition of $F_{a b}$ \eqref{def-F} implies, the identity \eqref{id-A-H2a} and the dynamic equation for $A$, namely the fourth line in the hyperbolic system \eqref{YM-FOSHS-1} entail that 
	\begin{equation}\label{Recover-F-H-E}
		F_{a b} \tensor{h}{^a_p} \tensor{h}{^b_q} = H_{p q}, \quad F_{a b} T^b \tensor{h}{^a_c} = E_c.
	\end{equation}
	With the help of \eqref{Recover-F-H-E}, the second equation of \eqref{YM-FOSHS-1} recovers the Yang--Mills propagation equation \eqref{YM-div-dyn} (see Lemma \ref{lem-FOSHS-YM}), and \eqref{constraint-YM-2a} becomes exactly the Yang--Mills constraint \eqref{YM-div-F-const}.
	Moreover, \eqref{Recover-F-H-E} and the first relation \eqref{id-tE-E2a} imply that the third equation of \eqref{YM-FOSHS-1} recovers the Yang--Mills propagation equation \eqref{YM-bianchi-dyn} as well (see Lemma \ref{lem-FOSHS-YM}). 
	Finally, the remaining equation in \eqref{YM-FOSHS-1}, i.e. the first one, is identical to the second one in \eqref{YM-FOSHS-1} due to the relation \eqref{id-tE-E2a}. In other words, the hyperbolic system \eqref{YM-FOSHS-1} together with the constraints \eqref{id-tE-E2a}-\eqref{constraint-YM-2a} contains precisely the information of the Yang--Mills equations. 
	
	\underline{$(2b)$ \eqref{E:Einstein-sym-thm2} (or \eqref{E:Einstein-sym-thm}) $\Rightarrow$  Einstein equations:}  
	It follows from the evolution of the wave gauge (if initially the wave gauge holds, it holds in the developments. The details can be found in \cite[\S$14.2$]{Ringstroem2009}) and the standard theory on the transformations between a higher order hyperbolic operator and a first order symmetric hyperbolic system (see, for instance, \cite[\S$2.3$]{Alinhac2009}) in local charts. We omit the details.

	After confirming the above two aspects $(1)$ and $(2)$, we prove the equivalence between the Cauchy problems of the EYM system and the hyperbolic system and then finish this proof. 
\end{proof}


\section{Local well-posedness for hyperbolic systems over tensor bundles}\label{s:local}

\subsection{Symmetric hyperbolic systems over tensor bundles}\label{sec-Model}
Suppose $\mathbb{R} \times\Sigma$ is an $(n+1)$-dimensional Lorentzian manifold with the metric
\begin{equation*}
	\ulg_{ab}=-(dt)_a(dt)_b+\ulh_{ab},
\end{equation*}
and $\Sigma$ is an $n$-dimensional closed Riemannian manifold with the metric $\ulh_{ab}$, and $\nb$ is the Levi-Civita connection of the metric $\ulg_{ab}$ endowed on $[T_0,0) \times \Sigma$. 
The model equation of\footnote{For short, we use $T^{m_k}_{n_k}\Sigma$ to denote $T^{m_k}_{n_k}T\Sigma:=\coprod_{p\in\Sigma}T^{m_k}_{n_k}(T_p\Sigma)$, the tensor bundle of tangent spaces of $\Sigma$. } $u:=(u_{(1)},\cdots,u_{(\ell)})\in \bigoplus^{\ell}_{k=1}T^{m_k}_{n_k}\Sigma$ defined on this time evolutionary manifold $\mathbb{R}  \times \Sigma$ takes the following covariant form,
\begin{align}
	-\mathbf{A}^0(t,u) \nu^c\nb_c u+\mathbf{A}^c(t,u)\tensor{\ulh}{^b_c} \nb_b u={}& G(t,u), \quad &&\text{in }\mathbb{R} \times \Sigma, \label{e:modeq}\\
	u={}&u_0,   \quad &&\text{in }\{t_0\} \times \Sigma.  \label{e:moddt}
\end{align}
We require that the coefficient matrices $\mathbf{A}^0$ and $\mathbf{A}^c \tensor{\ulh}{^b_c}$ are all symmetric (defined in Section \ref{sec-sym}) and $\mathbf{A}^0$ is coercive, i.e., there is some constant $\gamma >0$, such that for all $v \in \bigoplus^{\ell}_{k=1}T^{m_k}_{n_k}\Sigma$,
\begin{align}\label{e:coefcp}
	\gamma \la v,v \ra_{\ulh} \leq \la v, \mathbf{A}^0(t, u)v \ra_{\ulh}.  
\end{align}

\subsubsection{Symmetric linear operators and inner products}\label{sec-sym}
Before stating this model equation, we first introduce some concepts to simplify the statements of the assumptions for the coefficients of the above system \eqref{e:modeq}--\eqref{e:moddt}.

Let \[V:=\bigoplus^{\ell}_{k=1}V_k :=\bigoplus^{\ell}_{k=1}T^{m_k}_{n_k}\Sigma, \quad \text{where} \,\, V_k:=T^{m_k}_{n_k}\Sigma.
\]  
Define the \textit{inner product} of $v:=(v_{(1)},\cdots,v_{(\ell)})\in V$ and $u:=(u_{(1)},\cdots,u_{(\ell)})\in V$,
\begin{align}\label{e:inprod}
	\la v,u \ra_{\ulh} := & \sum^{\ell}_{k=1}\Bigl(\prod_{i=1}^{m_k}  \ulh_{c_{i}b_{i}}\Bigr) \Bigl(\prod_{j=1}^{n_k}  \ulh^{d_{j}a_{j}}\Bigr) \tensor{(v_{(k)})}{^{c_1\cdots c_{m_k}}_{d_1\cdots d_{n_k}}} \tensor{(u_{(k)})}{^{b_1\cdots b_{m_k}}_{a_1\cdots a_{n_k}}}. 
\end{align}
Suppose $\Pbb_{(j)}:V \rightarrow V$ is a \textit{projection} and $\phi_{(j)}: \Ima \Pbb_{(j)} \rightarrow T^{m_j}_{n_j}\Sigma$ an isomorphism satisfying, respectively,
\begin{align} 
	&\Pbb_{(j)}u 
	= (0,\cdots,u_{(j)}, \cdots 0),  \label{e:Pphi1}\\
	&\phi_{(j)} (0,\cdots,u_{(j)}, \cdots 0)= u_{(j)}, \label{e:Pphi2}
\end{align}  
and we denote $\widetilde{\Pbb}_{(j)}:=\phi_{(j)}\circ \Pbb_{(j)}$. Note that $u 
=\sum^\ell_{k=1}\phi^{-1}_{(k)}\widetilde{\Pbb}_{(k)}u$. 

\begin{definition}\label{t:trps}
	Letting a linear map $L(V;V) \ni \mathbf{A}:  V \rightarrow V$, we define the transpose $\mathbf{A}^\star:  V \rightarrow V$ by
	\begin{align*}
		\la v, \mathbf{A} u\ra_{\ulh}=\la \mathbf{A}^\star v, u \ra_{\ulh},
	\end{align*}
	for any $u,v\in V$. 
\end{definition}

Let us denote
$\mathbf{A}_{(kl)}:=  \widetilde{\Pbb}_{(k)} \mathbf{A} \phi_{(l)}^{-1}$ for any $\mathbf{A}\in L(V;V)$ and calculate $\mathbf{A}^\star$,  
\als{
	\la v, \mathbf{A} u \ra_{\ulh}  
	= & \sum_{k=1}^\ell \Bigl(\prod_{i=1}^{m_k}  \ulh_{c_{i}b_{i}}\Bigr) \Bigl(\prod_{j=1}^{n_k}  \ulh^{d_{j}a_{j}}\Bigr) \tensor{(v_{(k)})}{^{c_1\cdots c_{m_k}}_{d_1\cdots d_{n_k}}} \tensor{((\mathbf{A} u)_{(k)})}{^{b_1\cdots b_{m_{k}}}_{a_1\cdots a_{n_{k}}}} \notag  \\ 
	= & \sum_{k=1}^\ell \Bigl(\prod_{i=1}^{m_k}  \ulh_{c_{i}b_{i}}\Bigr) \Bigl(\prod_{j=1}^{n_k}  \ulh^{d_{j}a_{j}}\Bigr) \tensor{(v_{(k)})}{^{c_1\cdots c_{m_k}}_{d_1\cdots d_{n_k}}} \tensor{\bigl(\widetilde{\Pbb}_{(k)} \mathbf{A} \sum_{l=1}^\ell \phi_{(l)}^{-1} u_{(l)} \bigr)}{^{b_1\cdots b_{m_{k}}}_{a_1\cdots a_{n_{k}}}}
	\notag  \\ 
	= &\sum_{k=1}^\ell \sum_{l=1}^\ell \Bigl(\prod_{i=1}^{m_k}  \ulh_{c_{i}b_{i}}\Bigr) \Bigl(\prod_{j=1}^{n_k}  \ulh^{d_{j}a_{j}}\Bigr)    \tensor{(v_{(k)})}{^{c_1\cdots c_{m_k}}_{d_1\cdots d_{n_k}} } \notag \\
	&\hspace{1cm} \cdot \tensor{\bigl(\mathbf{A}_{(kl)} \bigr)}{^{b_1\cdots b_{m_k}  \ha_1\cdots \ha_{n_l} }_{a_1\cdots a_{n_k}  \hb_1\cdots \hb_{m_l}}}   \tensor{ (u_{(l)})}{^{\hb_1\cdots \hb_{m_l}}_{\ha_1\cdots \ha_{n_l}} } \notag  \\
	= & \sum_{l=1}^\ell \Bigl[\sum_{k=1}^\ell \tensor{\bigl(\mathbf{A}_{(kl)} \bigr)}{^{b_1\cdots b_{m_k}  \ha_1\cdots \ha_{n_l}}_{a_1\cdots a_{n_k} \hb_1\cdots \hb_{m_l}} }  \Bigl(\prod_{i=1}^{m_k}  \ulh_{c_{i}b_{i}}\Bigr) \Bigl(\prod_{j=1}^{n_k}  \ulh^{d_{j}a_{j}}\Bigr) \tensor{(v_{(k)})}{^{c_1\cdots c_{m_k}}_{d_1\cdots d_{n_k}}} \notag  \\
	&\hspace{1cm} \cdot \Bigl(\prod_{i=1}^{n_l}  \ulh_{\ha_{i}e_{i}}\Bigr) \Bigl(\prod_{j=1}^{m_l}  \ulh^{\hb_{j}\he_{j}}\Bigr)
	\Bigr]\Bigl(\prod_{i=1}^{n_l}  \ulh^{\hd_{i}e_{i}}\Bigr) \Bigl(\prod_{j=1}^{m_l}  \ulh_{\hc_{j}\he_{j}}\Bigr) \tensor{ (u_{(l)})}{^{\hc_1\cdots \hc_{m_l}}_{\hd_1\cdots \hd_{n_l}} }   \notag  \\
	={} & \la \mathbf{A}^\star v, u \ra_{\ulh}. 
} 
That is
\begin{align}\label{e:Asym}
	& \tensor{ \bigl( (\mathbf{A}^\star)_{(lk)}\bigr)}{^{\he_1\cdots\he_{m_l} d_1\cdots d_{n_k}  }_{e_1\cdots e_{n_l} c_1\cdots c_{m_k}} }= \notag  \\
	& \hspace{1cm} \tensor{ \bigl(\mathbf{A}_{(kl)} \bigr)}{^{b_1\cdots b_{m_k}  \ha_1\cdots \ha_{n_l} }_{a_1\cdots a_{n_k}  \hb_1\cdots \hb_{m_l}} } \Bigl(\prod_{i=1}^{m_k}  \ulh_{c_{i}b_{i}}\Bigr)  \Bigl(\prod_{i=1}^{n_l}  \ulh_{\ha_{i}e_{i}}\Bigr) \Bigl(\prod_{j=1}^{n_k}  \ulh^{d_{j}a_{j}}\Bigr)   \Bigl(\prod_{j=1}^{m_l}  \ulh^{\hb_{j}\he_{j}}\Bigr). 
\end{align}

We call a linear map $\mathbf{A}$ \textit{symmetric} if $\mathbf{A}^\star=\mathbf{A}$. It is equivalent to require $\mathbf{A}^\star_{(lk)}=\mathbf{A}_{(lk)}$.

\subsection{Local existences, uniqueness, continuation principles}

In this section, we establish the local existence and uniqueness theorem, continuation principles of the model equation \eqref{e:modeq}--\eqref{e:moddt}.

\begin{theorem}[Local existence, uniqueness and continuation theorem]\label{t:lcexuqct}
	Suppose $\Sigma$ is a closed manifold, $s\in\Zbb_{\geq3}$, $u_0\in H^s(\Sigma; V)$. We recall that $V:=\bigoplus^{\ell}_{k=1}T^{m_k}_{n_k}\Sigma$ is the tensor bundle.  Then $(a)$ there is a constant $t_\star\in(t_0, + \infty)$ and a unique classical solution, 
	\begin{equation*}
		u\in \bigcap^{1}_{\ell=0}C^\ell([t_0,t_\star),H^{s-\ell}(\Sigma; V)),
	\end{equation*}
	to the equation  \eqref{e:modeq}--\eqref{e:moddt}. 
	
	Moreover, $(b)$ if 
	\begin{equation*}
		\|u\|_{\Li ([t_0,t_\star), W^{1,\infty}(\Sigma;V))}<\infty,
	\end{equation*}
	then the solution $u$ can be uniquely continued, as a classical solution with the same regularity, to a larger time interval $t\in[t_0, t^\star)$ where $t^\star\in(t_\star, + \infty)$.  
\end{theorem}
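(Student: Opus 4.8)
The plan is to reduce the geometric system \eqref{e:modeq}--\eqref{e:moddt} over tensor bundles to a standard first-order symmetric hyperbolic system on a Euclidean domain by passing to local coordinates, then invoke the classical local well-posedness and continuation theory (in the spirit of Kato, Majda, or Taylor), and finally patch the local solutions together using a partition of unity and the finite speed of propagation / uniqueness. First I would fix a finite atlas $\{(U_\alpha,\varphi_\alpha)\}$ of the closed manifold $\Sigma$ together with a subordinate partition of unity; in each chart the tensor bundle $V=\bigoplus_{k=1}^\ell T^{m_k}_{n_k}\Sigma$ trivializes, so $u$ becomes an $\mathbb{R}^N$-valued function (with $N$ the total number of tensor components), the covariant derivatives $\nb_c$ become $\partial_c$ plus zeroth-order Christoffel terms of the fixed reference metric $\ulg$, and the equation \eqref{e:modeq} becomes
\begin{equation*}
	\mathbf{A}^0(t,u)\,\partial_t u + \sum_{c=1}^n \mathbf{A}^c(t,u)\,\partial_c u = \tilde G(t,u,\partial\text{-background terms}),
\end{equation*}
after using $\nu^c\nb_c = -\partial_t$ and $\tensor{\ulh}{^b_c}\nb_b$ being the spatial part. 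Here the $\mathbf{A}^c$ are genuinely symmetric $N\times N$ matrices in the inner product $\la\cdot,\cdot\ra_{\ulh}$ by the symmetry hypothesis (Definition \ref{t:trps} and \eqref{e:Asym}), and after conjugating by the (smooth, invertible, $t$-independent) Gram matrix of $\ulh$ on the fibers we may take them symmetric in the ordinary Euclidean sense; coercivity \eqref{e:coefcp} gives $\mathbf{A}^0 \geq \gamma I > 0$. Thus in each chart we have a bona fide quasilinear symmetric hyperbolic system with smooth (in fact, analytic-in-$u$ where needed, but $C^\infty$ suffices) coefficients depending on the $H^s$ function $u$ and on fixed background data.

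Next I would apply the standard local existence theorem for quasilinear symmetric hyperbolic systems on $\mathbb{R}^n$ (or, more cleanly, on $\Sigma$ directly via the energy method with covariant derivatives): given $u_0\in H^s(\Sigma;V)$ with $s\in\Zbb_{\geq 3} > n/2+1$, Sobolev embedding gives $u_0\in W^{1,\infty}$, and the Friedrichs/Picard iteration scheme — freeze the coefficients at the previous iterate, solve the resulting linear symmetric hyperbolic system, derive uniform $H^s$ energy estimates of the form $\frac{d}{dt}\|u\|_{H^s}^2 \lesssim C(\|u\|_{W^{1,\infty}})\|u\|_{H^s}^2$ using the symmetry to control the top-order commutators, and show the iterates form a Cauchy sequence in a lower norm $H^{s-1}$ while staying bounded in $H^s$ — produces a solution $u\in C([t_0,t_\star),H^s)\cap C^1([t_0,t_\star),H^{s-1})$ on a time interval whose length depends only on $\|u_0\|_{H^s}$ and the structural constants. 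Uniqueness follows from an $L^2$ (i.e. $H^0$) energy estimate for the difference of two solutions, again exploiting symmetry of $\mathbf{A}^c$ and coercivity of $\mathbf{A}^0$; this also gives continuous dependence on data. For the continuation principle $(b)$: the $H^s$ energy inequality shows that on any interval on which $\|u\|_{L^\infty_t W^{1,\infty}_x}$ stays finite, $\|u(t)\|_{H^s}$ cannot blow up (Gr\"onwall), so the solution extends past $t_\star$; the standard bootstrap then yields a strictly larger maximal interval $[t_0,t^\star)$.

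The main obstacle — really the only nontrivial point beyond bookkeeping — is ensuring the geometric/coordinate-independent formulation genuinely matches the hypotheses of the scalar-valued theory after trivialization: one must check that the symmetry encoded in \eqref{e:Asym} with respect to $\la\cdot,\cdot\ra_{\ulh}$ transfers to ordinary matrix symmetry after the fiber-metric conjugation uniformly over the compact $\Sigma$, that the zeroth-order terms generated by moving $\nb$ to $\partial$ (Christoffel symbols of the fixed background $\ulg$, and the derivative of the conjugating matrix) are smooth and bounded with all derivatives — which holds because $\ulg$ and $\ulh$ are fixed smooth metrics on the closed manifold $\Sigma$ — and that the nonlinearity $G(t,u)$ together with these extra terms is a smooth map $H^s \to H^s$ satisfying the usual tame/Moser estimates, which is immediate from $s>n/2$ and Moser's inequalities. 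Once this compatibility is in place, the proof is a direct citation of the classical quasilinear symmetric hyperbolic theory (e.g. \cite[\S2.3]{Alinhac2009} adapted with covariant derivatives, or Majda/Taylor), carried out componentwise in charts and glued by uniqueness, and I would write it exactly in that order: (i) coordinate reduction and verification of structural hypotheses, (ii) quote/run the iteration for local existence and the $H^s$ estimate, (iii) $L^2$ difference estimate for uniqueness, (iv) Gr\"onwall on the $H^s$ norm for the continuation criterion.
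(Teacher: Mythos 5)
Your proposal follows essentially the same route as the paper's proof: localize in coordinate charts, use the Gram matrix $[[\ulh]]$ of the reference metric to upgrade the $\la\cdot,\cdot\ra_{\ulh}$-symmetry of $\mathbf{A}^0,\mathbf{A}^c$ to ordinary Euclidean matrix symmetry and positivity, invoke the classical quasilinear symmetric hyperbolic theory in each chart, and then glue via uniqueness and finite propagation speed (domains of determination) over a finite cover of the compact $\Sigma$. The paper simply cites the classical existence/iteration result rather than re-sketching it, and is explicit about cut-offs and domains of determination for the gluing, but the structure and key ideas coincide with yours (note only that it left-multiplies by $[[\ulh]]$ rather than ``conjugates'' -- the symmetry identity \eqref{e:Asym} makes $[[\ulh]][\mathbf{A}]$ symmetric directly).
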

\begin{remark}\label{t:mtrnt}
	In the proof of this theorem, let us denote $\la [v], [u]\ra_{[[\ulh]]}:=[v]^T[[\ulh]][u]$ the inner product of the vectors $[u]$ and $[v]\in \Rbb^N$ where $[v]^T$ is the transpose of $[v]$ and $[[\ulh]]$ is a matrix of the direct sums of the \textit{Kronecker products} of matrices $[\ulh]$ and $[\ulh^{-1}]$ (we denote $[\ulh]$ the matrix of the metric $\ulh$ in the coordinate $\{x^i\}$ and $\ulh^{-1}$ the inverse of $\ulh$), i.e., 
	\begin{align*}
		[[\ulh]]:=   \bigoplus_{k=1}^\ell \Bigl(\underbrace{[\ulh]\otimes\cdots \otimes[\ulh]}_{n_k}\otimes\underbrace{[\ulh^{-1}]\otimes\cdots \otimes[\ulh^{-1}]}_{m_k}\Bigr)
	\end{align*}
	where $\oplus$ denotes the direct sums of matrices and $\otimes$ the Kronecker products of matrices. Since $\ulh$ is symmetric, $[[\ulh]]$ is a \textit{symmetric} $N\times N$ matrix, where \be\label{def-N} 
	N:= \sum_{k=1}^\ell n^{n_k+m_k}. \ee 
\end{remark}

\begin{proof}
	The proof of this theorem boils down to the local existence and uniqueness theorem of a quasilinear symmetric hyperbolic system, by taking a basis locally in $V$ to evaluate the system  \eqref{e:modeq}--\eqref{e:moddt} at their components. 
	
	\underline{Step $1$. Localization:} For any point $p\in \mathcal{U}\subset \Sigma$ where $\mathcal{U}$ is an open subset of $\Sigma $, there are coordinates $\{x^1,x^2, \cdots, x^n\}$ on $\mathcal{U}$ and define coordinates $\{x^0,x^1,\cdots,x^n\}$ on $\Rbb\times \mathcal{U}$ by letting $x^0=t$. Let us use the notation $[u]$ to denote the vector which collects all the components of $u:=(u_{(1)},\cdots,u_{(\ell)})\in V$ (where $u_{(k)}\in T^{m_k}_{n_k}\Sigma$) in this local coordinate.  
	In specific, according to the coordinate system, the coordinate vector fields $\{\partial/\partial x^i\}$ form a smooth local frame on $\mathcal{U}$, then there is an ordered basis in $V_k:=T^{m_k}_{n_k}\Sigma$ with the \textit{lexicographic order} (also known as the \textit{dictionary order}\footnote{For instance, see \cite[Chapter $1$]{Munkres2015} for the details of the \textit{dictionary order}. In the current case, it means that, for example, $dx^{i_1}\otimes dx^{i_2}\prec dx^{k_1}\otimes dx^{k_2}$ if $i_1<k_1$, or if $i_1=k_1$ and $i_2<k_2$, one can define the general cases by  inductions. }),
	\begin{align*}
		\mathfrak{B}_{(k)}:=\Bigl\{\Bigl(dx^{i_1}\otimes\cdots\otimes dx^{i_{n_k}}\otimes \frac{\partial}{\partial x^{j_{1}}} \otimes\cdots& \otimes
		\frac{\partial}{\partial x^{j_{m_k}}}\Bigr)\;\Big|\;i_1,\cdots,i_{n_k}\in \Zbb, 1\leq i_1,\cdots,i_{n_k}\leq n , \notag  \\
		&j_1,\cdots,j_{m_k}\in \Zbb, 1\leq j_1,\cdots,j_{m_k}\leq n\Bigr\},
	\end{align*}
	for any $k\in \Zbb$ and $1\leq k\leq \ell$. Then we define an \textit{ordered basis} $\mathfrak{B}$ in $V=\bigoplus^{\ell}_{k=1}T^{m_k}_{n_k}\Sigma$ by the \textit{disjoint union} of $\mathfrak{B}_{(k)}$ with the natural order $(k)$, i.e., 
	\begin{equation}\label{e:basis}
		\mathfrak{B}:=\coprod_{k=1}^\ell \mathfrak{B}_{(k)}. 
	\end{equation}
	We list this ordered basis $\mathfrak{B}$ by the $N$-tuple, with $N$ defined in \eqref{def-N} , according to the above order and denote this \textit{row} tuple by $[\mathfrak{B}]$. Correspondingly, we denote the \textit{column} components of $u$ relative to the ordered basis $\mathfrak{B}$ by the $N$-tuple $[u]$, that is,  $u=[\mathfrak{B}][u]$. 
	
	A linear map $\mathbf{A}\in L(V;V)$ can be expressed by a matrix $[\mathbf{A}]$ with the help of the ordered basis $\mathfrak{B}$, i.e., for any $v\in V$, $\mathbf{A} v=[\mathfrak{B}][\mathbf{A}][v]$
	and note
	\begin{align*}
		\partial_b u=(dx^\mu)_b[\mathfrak{B}] \del{\mu}[u]+(dx^\mu)_b[\mathfrak{B}][\underline{\Gamma}]_\mu(t,[u])
	\end{align*}
	where the elements of the matrix $[\underline{\Gamma}]_\mu$ take the form, by denoting $\underline{\Gamma}$ the Christoffel symbols of $\nb$ in each tensor bundle $T^{m_k}_{n_k}\Sigma$, of
	\begin{align*}
		\sum^{m_k}_{r=1}(u_{(k)})^{i_1\cdots p\cdots i_{m_k}}_{j_1\cdots j_{n_k}} \underline{\Gamma}^{i_r}_{\mu p}-\sum^{n_k}_{r=1}(u_{(k)})^{i_1\cdots i_{m_k}}_{j_1\cdots p\cdots j_{n_k}}\underline{\Gamma}^p_{\mu j_r}. 
	\end{align*}
	Since $u_{(k)}=\widetilde{\Pbb}_{(k)} u$ (recall \eqref{e:Pphi1}--\eqref{e:Pphi2}), there is a linear map from the column matrix $[u]$ to its components $(u_{(k)})^{i_1\cdots i_{m_k}}_{j_1\cdots j_{n_k}}$.

	Consequently, the model equation \eqref{e:modeq} can be rewritten in terms of the vector $[u]\in \Rbb^{N}$ where we recall \eqref{def-N} for $N$, i.e., under the ordered basis $[\mathfrak{B}]$, the equation \eqref{e:modeq} becomes 
	\begin{equation}\label{e:modmtr1}
		[\mathbf{A}^0](t,x,[u])\del{t} [u]+[\mathbf{A}^i](t,x,[u])\del{i} [u]= [\tilde{G}](t,x,[u]),  
	\end{equation}
	where $[\tilde{G}]=-[\mathbf{A}^i][\underline{\Gamma}]_i-[\mathbf{A}^0][\underline{\Gamma}]_0+[G]$ has included the Christoffel symbol terms of the connection $\nb$. Since $\ulh$ is positive definite, then so is the matrix $[[\ulh]]$ on $[T_0,0]\times \mathcal{U}$. Multiplying $[[\ulh]]$ on both sides of \eqref{e:modmtr1}, we have
	\begin{align}\label{e:modmtr2}
		[[\ulh]][\mathbf{A}^0](t,x,[u])\del{t} [u]+[[\ulh]][\mathbf{A}^i](t,x,[u])\del{i} [u]= [[\ulh]][\tilde{G}](t,x,[u]).	
	\end{align}
	
	We \textit{claim} that \eqref{e:modmtr2} is a \textit{symmetric hyperbolic} system of the vector $[u]$ on $[t_0, + \infty)\times \mathcal{U}\subset [t_0, +\infty)\times \Rbb^N$ which is covered by the coordinate system $\{x^i\}$ since $[[\ulh]][\mathbf{A}^0]$ and $[[\ulh]][\mathbf{A}]$ are symmetric and $[[\ulh]][\mathbf{A}^0]\gtrsim \mathds{1}_{N\times N}>0$, where $\mathds{1}_{N\times N}$ is an identity matrix, due to the fact that $\mathbf{A}^0$ and $\mathbf{A}$ are symmetric and \eqref{e:coefcp}. Let us prove this claim.  
	Note firstly $\la v,u \ra_{\ulh}=\la [v], [u]\ra_{[[\ulh]]}=[v]^T[[\ulh]][u]$ by \eqref{e:inprod} and $[[\ulh]]$ is symmetric. If $\mathbf{A}$ is symmetric, then $\la v, \mathbf{A} u\ra_{\ulh}=\la \mathbf{A} v, u \ra_{\ulh}$ implies 
	\begin{equation*}
		[v]^T[[\ulh]][\mathbf{A}][u]=\la [v], [\mathbf{A}][u]\ra_{[[\ulh]]}=\la [\mathbf{A}] [v], [u] \ra_{[[\ulh]]}=[v]^T[\mathbf{A}]^T[[\ulh]][u]=[v]^T([[\ulh]][\mathbf{A}])^T[u], 
	\end{equation*}
	for all $[v], [u]\in \Rbb^N$, then $[[\ulh]][\mathbf{A}]=([[\ulh]][\mathbf{A}])^T$, that is, $[[\ulh]][\mathbf{A}]$ is symmetric. The \eqref{e:coefcp}, and the positive definiteness of $[[\ulh]]$ imply that, for any $[u]\in \Rbb^N$, there is a small constant $\tilde{\kappa}$, such that   
	\begin{equation*}
		\frac{\tilde{\kappa}}{\gamma_1} [u]^T [u]\leq \frac{1}{\gamma_1} [u]^T[[\ulh]][u]=\frac{1}{\gamma_1} \la u,u \ra_{\ulh} \leq \la u, \mathbf{A}^0 u \ra_{\ulh} =[u]^T [[\ulh]] [\mathbf{A}^0] [u],
	\end{equation*}
	that is, the matrix $[[\ulh]] [\mathbf{A}^0]$ is positive definite. Therefore, the system \eqref{e:modmtr2} is a symmetric hyperbolic equation of vector $[u]$ defined on $[t_0, + \infty) \times \mathcal{U}\subset [t_0, + \infty)\times \Rbb^N$. 
	
	Let $\psi\in C^\infty_0(\mathcal{U})$ be a cut-off function satisfying $\psi(q)=1$ for every $q\in\overline{\mathcal{V}}$ where $\mathcal{V}$ is an open subset obeying that $\overline{\mathcal{V}}\subset \mathcal{U}$ is compact.  Modifying the initial data $[u_0]$ (see \eqref{e:moddt}), the column matrix form of the initial data $u_0$, by multiplying all its components with $\psi$, we obtain the localized initial data 
	\begin{equation}\label{e:moddt2}
		[u]=\psi [u_0], \quad \text{in } \{t_0\}\times \mathcal{U}, 
	\end{equation}
	
	Gathering \eqref{e:modmtr2} and \eqref{e:moddt2} together, and using the theory of the standard hyperbolic equations (see, for instance,  \cite{Taylor2010,Majda2012,Gavage2007a}), by \textit{trivially}\footnote{That means the extended solution becomes trivial outside a larger domain $\mathcal{V}^\prime\supset\mathcal{V}$ where $\overline{\mathcal{V}^\prime}\subset \Rbb^n$ is compact. } extending the initial data \eqref{e:moddt2} to $\{t_0\} \times \Rbb^{n}$,  there is a constant $t_1>t_0$ and a \textit{unique} solution $[u]$ solves the resulting system \eqref{e:modmtr2}--\eqref{e:moddt2} satisfying 
	\begin{equation*}
		[u]\in \bigcap^{1}_{\ell=0}C^\ell([t_0, t_1), H^{s-\ell}(\Rbb^n; \Rbb^N)).
	\end{equation*}
	In addition, there is a lens-like domain (the \textit{domain of determination of $\mathcal{V}$}, see the local energy estimates, for example, in \cite[\S $3.1$]{Liu2021} and \cite[\S $2.3$]{Majda2012}) $\mathcal{W}\subset[t_0, t_1] \times \mathcal{V}$ where the solution $[u]$ defined on $\mathcal{W}$ is determined only by the data $[u_0]$ on $\{t_0\}\times \mathcal{V}$.

	\underline{Step $2$. Existence on $[t_0, t_\star) \times\Sigma$:} 
	As we stated above, for every point $p\in \Sigma$, there are open subsets $\mathcal{V}$ and $\mathcal{U}$ satisfying the above requirements ($\overline{\mathcal{V}}\subset \mathcal{U}$) and let us denote them by $\mathcal{V}_p$ and $\mathcal{U}_p$, in this step, to distinguish different points $p \in \Sigma$. Since $\Sigma$ is a compact manifold, there are a finite number (denoted by $m$) of subsets $\mathcal{V}_p$ to cover $\Sigma$, i.e., $\Sigma\subset \cup^m_{l=1}\mathcal{V}_{p_l}$. From Step $1$, for every open subset $\mathcal{V}_{p_l}$, there is a time $t_l<0$ and a unique solution $[u]\in \bigcap^{1}_{\ell=0}C^\ell([t_0, t_l),H^{s-\ell}(\mathcal{V}_{p_l}; \Rbb^N))$. Correspondingly, there are domains of determination $\mathcal{W}_{p_l}$ of $\mathcal{V}_{p_l}$ for every $p_l$. By taking $t_\star<\min_{1\leq l\leq m} \{t_l\}$ small enough, it ensures for every point $q\in [t_0, t_\star) \times \Sigma$, that there is a point $p_l\in\mathcal{V}_{p_l}$ with the local coordinate $\{x^i\}$, such that $q\in \mathcal{W}_{p_l}$. Then, we have constructed a solution $u$ in  $[t_0, t_\star) \times \Sigma$ to the equations \eqref{e:modeq}--\eqref{e:moddt}, i.e., there is a tensorial function $u\in V$ defined in  $[t_0, t_\star) \times \Sigma$, such that  \eqref{e:modeq}--\eqref{e:moddt} hold. This is because for every point $q \in [t_0, t_\star) \times \Sigma$, there is a point $p_l\in\mathcal{V}_{p_l}$ with the arbitrary local coordinate $\{x^i\}$, such that $q\in \mathcal{W}_{p_l}$, and in this coordinate $\{x^\mu\}$, $[u]$ solves \eqref{e:modmtr2}, further \eqref{e:modmtr1}, i.e., the column matrix $[u]$ in any  coordinate $\{x^\mu\}$ ensures that the matrix equation \eqref{e:modmtr1} holds. 
	
	\underline{Step $3$. Uniqueness on $[t_0, t_\star) \times \Sigma$:} The uniqueness can be concluded by contradictions. We briefly state as below. If for every point $q\in[t_0, t_\star) \times \Sigma$, there are two solutions $u_1,u_2\in V$ and $u_1\neq u_2$ in a neighborhood of $q$, such that  \eqref{e:modeq} holds under the same initial data \eqref{e:moddt}. Then evaluating $u_1$ and $u_2$ at their components under a local chart near $q$, we obtain two different column matrices $[u_1]\neq [u_2]$ under the same ordered coordinate basis, which contradicts with the fact that there is a unique solution of \eqref{e:modmtr1}. In addition, similarly, we claim that if $\mathcal{W}_{p}\cap \mathcal{W}_{p^\prime}\neq \emptyset$, then the two solutions located in $\mathcal{W}_{p}$ and $\mathcal{W}_{p^\prime}$ respectively agree with each other on the overlap $\mathcal{W}_{p}\cap \mathcal{W}_{p^\prime}$. Let us denote the coordinate in $\mathcal{W}_p$ by $\{x^\mu\}$, and the one in $\mathcal{W}_{p^\prime}$ by $\{y^\mu\}$. The components of $u$ aligned in the column matrix in $\{x^\mu\}$ are denoted by $[u]_x$, and $[u]_y$ in $\{y^\mu\}$, respectively. We conclude that if we write $[u]_y$ in the coordinate $\{x^\mu\}$ after a coordinate transform, it becomes $[u]_x$, otherwise, it contradicts with the uniqueness of the solution of \eqref{e:modmtr1} as well.

	\underline{Step $4$. Continuation principle:} 
	Let $u$ be the local solution to \eqref{e:modeq}--\eqref{e:moddt} on $[t_0,t_\star) \times \Sigma$ that, by the assumption of this theorem, admits a uniform bound $\|u\|_{\Li ([t_0,t_\star), W^{1,\infty}(\Sigma;V))}<\infty$. Then, as the proofs of the standard continuation principle, for any $t \in (t_0, t_\star)$ close to $t_\star$, we can construct a unique solution as Step $1$--$3$ that exists at least on $[t, t + \epsilon) \times \Sigma$ for a constant $\epsilon$ independent of $t$. Letting $t_\star-\epsilon^\prime$ be the new initial time with $\epsilon^\prime < \epsilon$ small enough, we can continue the solution to a larger time interval $[t_0, t^\star),\, t^\star > t_\star$. We omit the details and then complete the proof of this theorem. 
\end{proof}


\section{Proofs of the Main Theorem \ref{t:mainthm}}\label{s:mainpf}
This section is dedicated to the proof of the main Theorem \ref{t:mainthm}. Let the state space\footnote{See \cite{Majda2012} and it is the natural domain of the physical quantities. } of the solution of this system satisfy the condition $| \mathbf{U}|<C_0$ ($C_0>0$ is a given constant) such that the conformal metric $g^{ab}$ remains non-degenerate. 
The proof follows immediately as a consequence of \textit{Theorem \ref{t:lcexuqct}} and \textit{Theorem \ref{t:depctrnt}}:  

$(1)$ Verify that the main equation \eqref{e:EYMmain} is a tensorial symmetric hyperbolic system, which will be deferred for the moment;   

$(2)$ Apply the local existence Theorem \ref{t:lcexuqct} of the model equation to obtain the local existence, uniqueness and the continuation principle of the solution to the main equation \eqref{e:EYMmain};

$(3)$ By Theorem \ref{t:depctrnt}, the equivalence between the hyperbolic system \eqref{e:EYMmain} and the EYM system, we conclude the local existence and the continuation principle of the EYM equations, i.e., we prove the main Theorem \ref{t:mainthm}.

In the end, let us show that the main equation \eqref{e:EYMmain} is a tensorial symmetric hyperbolic system to confirm the Step $(1)$ above.

	\underline{$(a)$ The symmetric  coefficients $\mathbf{A}^{c}$:}		
	According to Definition \ref{t:trps} and \eqref{e:Asym}, it suffices to check that each block matrix of $\mathbf{A}^{ c}$ is symmetric. We only list some cases here, and the others can be verified in the same way. 
	\begin{gather*}
		(\check{\mathbf{A}}^c_{(12)})^{\he} \ulh_{\he\bar{e}} = -g^{dc}\ulh_{d \hc}=(\check{\mathbf{A}}^c_{(21)})_{\hc},  \\
		\tensor{\bigl( \tilde{\mathbf{A}}^{ c}_{(22)}\bigr) }{^{\check{d}}_f }  \ulh_{\check{d} \hc} \ulh^{f \he}   = - T^c h^{\hd d}\ulh_{d f} \tensor{\ulh}{^{\check{d}}_{\hd}} \ulh_{\check{d} \hc} \ulh^{f \he} = - T^{ c} h^{\hd d} \tensor{\ulh}{^{\he}_{d}} \ulh_{\hd\hc} = \tensor{\bigl( \tilde{\mathbf{A}}^{ c}_{(2 2)}\bigr) }{^{ \he}_{\hc} }, \\ 
		\tensor{\bigl( \tilde{\mathbf{A}}^{ c}_{(31)}\bigr) }{_{ \bar a \bar b \check{e}} } \ulh^{\check{a} \bar a} \ulh^{\check{b} \bar b} \ulh^{b \check{e}}  = h^{a c} \tensor{h}{^b_{e}} \ulh_{a \bar a} \ulh_{b \bar b} \tensor{\ulh}{^{e}_{\check{e}}}\ulh^{\check{a} \bar a} \ulh^{\check{b} \bar b} \ulh^{b \check{e}}=   h^{a c} \tensor{h}{^b_{e}} \tensor{\ulh}{^{\check{a}}_a} \tensor{\ulh}{^{\check{b}}_b}   \ulh^{b e} = \tensor{\bigl( \tilde{\mathbf{A}}^{ c}_{(1 3)}\bigr) }{^{\check{a}\check{b}b} }, \\
		\tensor{\bigl( \tilde{\mathbf{A}}^{ c}_{(32)}\bigr) }{^{\check{d}}_{ \bar a \bar b} } \ulh^{\ha \bar a} \ulh^{\hb \bar b} \ulh_{\check{d} f}  = h^{a \hd} h^{b c}\ulh_{a \bar a} \ulh_{b \bar b} \tensor{\ulh}{^{\check{d}}_{\hd}}  \ulh^{\ha \bar a} \ulh^{\hb \bar b} \ulh_{\check{d} f} =  h^{a \hd} h^{b c} \tensor{\ulh}{^{\ha}_a} \tensor{\ulh}{^{\hb}_b}   \ulh_{\hd f} 
		= \tensor{\bigl( \tilde{\mathbf{A}}^{ c}_{(2 3)}\bigr) }{^{\ha \hb}_f }.
	\end{gather*}

	\underline{$(b)$ The positive definiteness  of $\mathbf{A}^0$:}
	Due to  
	$T^c\nu_c=-(-\lambda)^{ \frac{1}{2}} $ and $\tensor{h}{^c_a} \nu_c=0$, it is direct, by calculations, to verify that $\check{\mathbf{A}}^0=\diag\{-\nu_e g^{ec} \nu_c, \, \ulh_{e\hc} g^{ed} \tensor{\ulh}{^{\he}_d}, \, 1\}$ and 
	\begin{align*}
		\tilde{\mathbf{A}}^c\nu_c=\p{(-\lambda)^{ \frac{1}{2}} h_{\he e} \ulh^{b\he} \tensor{\ulh}{^e_{\check{e}}}  & 0 & 0 & 0 \\
			0 & (-\lambda)^{ \frac{1}{2}} h^{\hd d} \ulh_{df} \tensor{\ulh}{^d_{\check{d}}}  & 0 & 0 \\
			0 & 0 & (-\lambda)^{ \frac{1}{2}}h^{\ha a} h^{\hb b} \ulh_{a\bar{a}}\ulh_{b\bar{b}} \tensor{\ulh}{^{\check{a}}_{\ha}} \tensor{\ulh}{^{\check{b}}_{\hb}} & 0 \\
			0 & 0 & 0 & (-\lambda)^{ \frac{1}{2}}h^{r\hc} \ulh_{\check{c}r} \tensor{\ulh}{^{\check{s}}_{\hc}}
		} 
	\end{align*}
	are both positive definite\footnote{They are \textit{positive definite} by viewing that the \textit{domain} of the operator $\mathbf{A}^0$ is a subspace of spacetime tensors $\bigoplus^{\ell}_{k=1}T^{m_k}_{n_k}\mathcal{M}$  which is naturally isometric to the spatial tensors  $\bigoplus^{\ell}_{k=1}T^{m_k}_{n_k}\Sigma$. }.


	\appendix	
	
	\section{Some calculations}\label{s:App1}
	\subsection{Useful calculations} In this appendix, we collect plenty calculations that are used throughout the paper to simplify the expositions.
	
	\begin{lemma}\label{t:Rdop}
		The Ricci tensors can be expressed in terms of $\udl{\nabla}_a$ and $\tensor{\udl{R}}{_{cde}^a}$,
		\al{RICCI}{
			R^{ab}=\frac{1}{2} g^{cd}\udn{c}\udn{d}g^{ab}+\nabla^{(a}X^{b)}+\udl{R}^{ab}+P^{ab}(g^{-1})+Q^{ab}(g^{-1}, \udn{} g^{-1}),
		}
		where $X^a$ is given by \eqref{def-X}, and
		\als{
			P^{ab}(g^{-1})
			=& -\frac{1}{2} (g^{ac}-\ulg^{ac})\ulg^{de}\tensor{\udl{R}}{_{cde}^b} -\frac{1}{2} \ulg^{ac}(g^{de}-\ulg^{de})\tensor{\udl{R}}{_{cde}^b} -\frac{1}{2} (g^{ac}-\ulg^{ac})(g^{de}-\ulg^{de})\tensor{\udl{R}}{_{cde}^b}   \\
			&-\frac{1}{2}(g^{bc}-\ulg^{bc})\ulg^{de}\tensor{\udl{R}}{_{cde}^a}-\frac{1}{2}\ulg^{bc}(g^{de}-\ulg^{de})\tensor{\udl{R}}{_{cde}^a}-\frac{1}{2}(g^{bc}-\ulg^{bc})(g^{de}-\ulg^{de})\tensor{\udl{R}}{_{cde}^a},
		}
		and
		\als{
			& Q^{ab}(g^{-1}, \udn{} g^{-1})  \\
			={}& -\frac{1}{4}\bigl(g^{ad}g^{bf}\udn{d}g_{ce}\udn{f}g^{ce}+g^{ad}g^{bf}\udn{f}g_{ce}\udn{d}g^{ce}+g_{ef}g^{ac}\udn{c}g^{bd}\udn{d}g^{ef}  +g_{ef}g^{bd}\udn{d}g^{ac}\udn{c}g^{ef}\bigr) \\
			& +\frac{1}{2}\big(g^{ae}g_{fc}\udn{e} g^{bd} \udn{d}g^{fc}+g^{ae} g^{bd} \udn{e} g_{fc} \udn{d} g^{fc}-\udn{c}g^{ae}\udn{e}g^{cb}-\udn{e}g^{bc}\udn{c} g^{ea}\big) -g^{ac}\tensor{X}{^b_{cd}}X^d  \\
			& +g^{af}g^{bd}\tensor{X}{^{c}_{fd}}\tensor{X}{^e _{ce}}-g^{af}g^{bd}\tensor{X}{^{c}_{ed}}\tensor{X}{^e _{cf}}+\tensor{X}{^e_{ed}} g^{af} \udn{f} g^{bd} -\tensor{X}{^e_{fd}}g^{af}\udn{e} g^{bd}-\tensor{X}{^e_{fd}}g^{bd}\udn{e} g^{af}.
		}
	\end{lemma}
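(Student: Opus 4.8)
The plan is to establish the identity \eqref{E:RICCI} of Lemma \ref{t:Rdop} as a purely kinematic fact, independent of any gauge, by comparing the Levi-Civita connection of $g_{ab}$ with the fixed reference connection $\nb$ and substituting the explicit expression \eqref{E:X2} for the difference tensor $\tensor{X}{^c_{ab}}$ of \S\ref{s:conx} (itself obtained from metric compatibility $0=\nabla_a g_{bc}=\nb_a g_{bc}-\tensor{X}{^d_{ab}}g_{dc}-\tensor{X}{^d_{ac}}g_{bd}$ by the usual Koszul manipulation). Concretely: since $\nabla=\nb+X$ with $\tensor{X}{^c_{ab}}=\tensor{X}{^c_{ba}}$, the standard formula relating the two curvatures gives, after the Ricci contraction,
\[
R_{bd}=\udl{R}_{bd}+\nb_a\tensor{X}{^a_{bd}}-\nb_d\tensor{X}{^a_{ab}}+\tensor{X}{^a_{ae}}\tensor{X}{^e_{bd}}-\tensor{X}{^a_{de}}\tensor{X}{^e_{ab}}.
\]
Raising both free indices with $g^{-1}$ and using the symmetry of $R^{ab}$, it then remains to match the resulting groups of terms with the five terms appearing on the right-hand side of \eqref{E:RICCI}.

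\textbf{Extracting the principal part.} I substitute \eqref{E:X2} into $\nb_a\tensor{X}{^a_{bd}}$ and $\nb_d\tensor{X}{^a_{ab}}$. Differentiating produces, besides the genuine second-derivative terms $\nb\nb g^{-1}$, several terms in which the Leibniz rule falls on a metric factor; these are quadratic in $\nb g^{-1}$, using $\nb_c g_{ab}=-g_{ae}g_{bf}\nb_c g^{ef}$. After raising with $g^{ac}g^{bd}$ and symmetrizing, the pure second-order part collapses: invoking $X^a=g^{bc}\tensor{X}{^a_{bc}}$ one checks that it equals $\frac12 g^{cd}\nb_c\nb_d g^{ab}+\nb^{(a}X^{b)}$, up to terms in which a commutator $[\nb,\nb]$ has produced a factor of the reference curvature $\udl{R}$ (these join the curvature terms treated below); converting $\nb^{(a}X^{b)}$ into $\nabla^{(a}X^{b)}$ through $\nabla=\nb+X$ only contributes further $X\!\cdot\! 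X$ terms. This produces the first two terms of \eqref{E:RICCI} and leaves a remainder that is quadratic in $\nb g^{-1}$, together with the two $X\!\cdot\! X$ contributions already present in the curvature-comparison formula.

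\textbf{Curvature terms and the collection of $Q^{ab}$.} Writing the curvature contribution $g^{ac}g^{bd}\udl{R}_{cd}$ through the reference Riemann tensor $\tensor{\udl{R}}{_{cde}^{f}}$ exactly as in the statement, and expanding each metric factor as $g^{-1}=\ulg^{-1}+(g^{-1}-\ulg^{-1})$, the leading piece is $\udl{R}^{ab}$ and the six corrections --- linear and quadratic in $g^{-1}-\ulg^{-1}$ --- are precisely $P^{ab}(g^{-1})$ as listed. It then remains to gather all the quadratic-in-$\nb g^{-1}$ terms: those from the Leibniz remainders above, the ones from the conversion $\nb^{(a}X^{b)}\to\nabla^{(a}X^{b)}$, and the two $X\!\cdot\! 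X$ terms of the curvature-comparison formula re-expressed via \eqref{E:X2}; relabelling dummy indices and using $\tensor{X}{^c_{ab}}=\tensor{X}{^c_{ba}}$, one checks that they sum exactly to the displayed expression for $Q^{ab}(g^{-1},\nb g^{-1})$.

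\textbf{Expected main obstacle.} There is no conceptual difficulty here; the entire content is the bookkeeping in the last step --- verifying that the quadratic first-derivative terms arising from those three distinct sources recombine \emph{exactly}, with the correct signs and symmetrizations, into the stated $Q^{ab}$. I would control this by clearing everything into $\nb g^{-1}$ (never $\nb g$), fixing a single index convention at the outset, and then matching term by term.
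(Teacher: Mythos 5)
Your plan is the standard calculation and matches what the paper does (the paper's proof is just "straightforward calculation" plus the two identities $g^{af}g^{bd}\udn{c}g_{fd}=-\udn{c}g^{ab}$ and $\udl R^{ab}=-\tfrac12(\ulg^{ac}\ulg^{de}\tensor{\udl R}{_{cde}^b}+\ulg^{bc}\ulg^{de}\tensor{\udl R}{_{cde}^a})$, with the lengthy details omitted — precisely the two facts your sketch invokes). One small imprecision in your bookkeeping: the object whose $g^{-1}=\ulg^{-1}+(g^{-1}-\ulg^{-1})$ expansion yields $\udl R^{ab}+P^{ab}$ is not the raised background Ricci $g^{ac}g^{bd}\udl R_{cd}$ by itself, but the combination $-\tfrac12\bigl(g^{ac}g^{de}\tensor{\udl R}{_{cde}^b}+g^{bc}g^{de}\tensor{\udl R}{_{cde}^a}\bigr)$ that emerges once you also fold in the $[\nb,\nb]g^{-1}$ commutator terms you flagged — you acknowledge these "join the curvature terms," so the structure is right, but the parenthetical naming of the curvature contribution as $g^{ac}g^{bd}\udl R_{cd}$ alone would not expand to the stated $P^{ab}$.
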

	
	\begin{remark}
		With the above expressions of $P^{ab}$ and $Q^{ab}$, it is obvious that $P^{ab}$ depends on $g^{-1}- \ulg^{-1}$  (up to quadratic) and $Q^{ab}$ is quadratic in $\udn{}g^{-1}$.
	\end{remark}
	
	\begin{proof}
		The proof of \eqref{E:RICCI} follows by a straightforward calculation and noticing the following identities
		\als{
			g^{af}g^{bd}\udn{c}g_{fd}=-\udn{c}g^{ab} \AND -\frac{1}{2}(\ulg^{ac}\ulg^{de}\tensor{\udl{R}}{_{cde}^b}+\ulg^{bc}\ulg^{de}\tensor{\udl{R}}{_{cde}^a}) \equiv \udl{R}^{ab}.
		}
		We omit the lengthy details.
	\end{proof}

	Direct calculations yield the next proposition. 
	\begin{lemma}\label{lem-identity}
		Suppose $Q^{edc}$ is the symmetrizing tensor given by \eqref{def-Q}, the projection $\tensor{h}{^a_b}$, induced metric $h_{ab}$ and $T^a$ are defined in \S\ref{s:geohysf},
		then the following identities hold,
		\begin{align*}
			Q^{edc} T_c ={}&   h^{e d} + T^e T^d,  \nnb \\
			Q^{edc} \tensor{h}{^a_d} ={}&  T^e h^{c a} - T^c h^{e a} \nnb\\
			Q^{edc} T_d ={}&  - h^{e c} + T^c T^{e}, 
		\end{align*}
		and $Q^{edc} T_d T_c =  - T^e$, $Q^{edc} \tensor{h}{^a_d} T_c =  h^{e a}$, $Q^{edc}T_d \tensor{h}{^a_c} =  - h^{a e}$, and
		\gat{
			T_e Q^{edc}T_d =  - T^c,  \quad 
			\tensor{h}{^b_e} Q^{edc} \tensor{h}{^a_d} =   - T^c h^{a b} \quad
			T_e Q^{edc} \tensor{h}{^a_d} =  - h^{a c}, \nnb \\ 
			T_e Q^{edc} T_d T_c= 1,  \quad  T_e Q^{edc} T_d \tensor{h}{^a_c} = 0, \quad  
			T_e Q^{edc} \tensor{h}{^a_d} T_c=0, \nnb \\  T_e Q^{edc} \tensor{h}{^a_d} \tensor{h}{^b_c} = - h^{a b}, \quad
			\tensor{h}{^a_e} Q^{edc} \tensor{h}{^b_d} T_c = h^{a b}, \quad \tensor{h}{^a_e} Q^{edc} \tensor{h}{^b_d} \tensor{h}{^f_c} = 0. \nnb
		}
		In addition, the symmetry of $Q^{e d c}$ leads to
		\gat{
			\tensor{h}{^a_e} Q^{edc}T_d = - h^{a c}, \quad \tensor{h}{^{b}_e} Q^{edc} T_d \tensor{h}{^a_c} =  - h^{a b}, \quad  \tensor{h}{^{b}_e} Q^{edc} T_d T_c = 0. \nnb
		}
		Analogous identities hold for $\underline{Q}^{edc}$.	
	\end{lemma}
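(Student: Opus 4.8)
The plan is to reduce every identity to three structural facts about the triple $(T^a,\tensor{h}{^a_b},g_{ab})$ from \S\ref{s:geohysf}, together with the symmetry $Q^{edc}=Q^{dec}$, and then simply to contract. The three facts are: (i) the unit--timelike normalization $T^aT_a=g_{ab}T^aT^b=-1$; (ii) that $\tensor{h}{^a_b}$ is the $g$-orthogonal projection onto $\Sigma_t$, so $\tensor{h}{^a_b}T^b=0=T_a\tensor{h}{^a_b}$, $\tensor{h}{^a_b}\tensor{h}{^b_c}=\tensor{h}{^a_c}$, and $\tensor{h}{^a_b}=\tensor{\delta}{^a_b}+T^aT_b$; and (iii) that raising or lowering a $\Sigma$-tangent index with $g$ or with $h$ agree, which follows from $h^{ab}=g^{ab}+T^aT^b$, $h_{ab}=g_{ab}+T_aT_b$ and gives $g^{bc}\tensor{h}{^a_b}=h^{ac}$, $g_{bc}h^{ca}=\tensor{h}{^a_b}$ and $\tensor{h}{^b_c}h^{ca}=h^{ab}$. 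The symmetry $Q^{edc}=Q^{dec}$ is immediate from $g^{ed}=g^{de}$.

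First I would prove the three ``seed'' identities of the first display by substituting $Q^{edc}=T^eg^{dc}+T^dg^{ec}-T^cg^{ed}$ and contracting once: $Q^{edc}T_c=2T^eT^d+g^{ed}=h^{ed}+T^eT^d$ by (i); $Q^{edc}\tensor{h}{^a_d}=T^eh^{ca}+0-T^ch^{ea}$, the middle term vanishing by (ii) and the other two rewritten by (iii); and $Q^{edc}T_d=T^eT^c-g^{ec}-T^cT^e=-g^{ec}=-h^{ec}+T^cT^e$ by (i)--(ii). Every remaining contraction in the statement is then obtained from one of these three by contracting with one further factor from $\{T_\bullet,T^\bullet,\tensor{h}{^\bullet_\bullet}\}$ on a still-free slot and simplifying once more with (i)--(iii): for instance $Q^{edc}T_dT_c=(-h^{ec}+T^cT^e)T_c=-T^e$, $\tensor{h}{^b_e}Q^{edc}\tensor{h}{^a_d}=\tensor{h}{^b_e}(T^eh^{ca}-T^ch^{ea})=-T^ch^{ab}$, $T_eQ^{edc}\tensor{h}{^a_d}=(T_eT^e)h^{ca}-T^c(T_eh^{ea})=-h^{ac}$, and $T_eQ^{edc}T_dT_c=T_e(-T^e)=1$. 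Finally, the identities grouped under ``the symmetry of $Q^{edc}$ leads to'' follow from those already obtained by using $Q^{edc}=Q^{dec}$ to interchange the first two slots and relabelling dummy indices, e.g. $\tensor{h}{^a_e}Q^{edc}T_d=\tensor{h}{^a_e}Q^{dec}T_d=\tensor{h}{^a_d}Q^{edc}T_e=T_eQ^{edc}\tensor{h}{^a_d}=-h^{ac}$.

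For $\underline{Q}^{edc}$ the same mechanical contractions apply, now keeping track of two normalizations: from the choice $\ulg_{ab}=-(dt)_a(dt)_b+\ulh_{ab}$ the background normal and projection satisfy $\nu^a\nu_a=\ulg^{ab}\nu_a\nu_b=-1$, $\tensor{\ulh}{^a_b}\nu^b=0=\nu_a\tensor{\ulh}{^a_b}$ and $\tensor{\ulh}{^a_b}=\tensor{\delta}{^a_b}+\nu^a\nu_b$, which are the exact analogues of (i)--(iii) relative to $\ulg$; and, wherever the physical inverse metric inside $\underline{Q}$ meets a factor of $\nu$, one uses the colinearity $\nu_a=(-\lambda)^{1/2}T_a$, equivalently $g^{dc}\nu_c=(-\lambda)^{1/2}T^d$, to reduce the expression back to the $T$-identities above.

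I do not expect a genuine obstacle here: the content is purely algebraic, with no analytic input whatsoever. The only point that needs care is bookkeeping --- tracking which metric raises or lowers each index, and ordering the list so that each identity is a single further contraction of an earlier one --- which is why I would write it up in the order ``seed identities first, then everything else by one more contraction, then the symmetry block.''
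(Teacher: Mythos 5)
Your proof is correct and is exactly the ``direct calculations'' the paper appeals to: substitute $Q^{edc}=T^e g^{dc}+T^d g^{ec}-T^c g^{ed}$, use $T^aT_a=-1$, $\tensor{h}{^a_b}T^b=0$, $h^{ab}=g^{ab}+T^aT^b$, and contract; the paper gives no further detail. Your remark about $\underline{Q}^{edc}$ is a sensible flag, since $\underline{Q}$ mixes the physical $g^{ab}$ with the background $\nu$, so the ``analogous'' identities either involve background contractions (with $\nu_c$, $\tensor{\ulh}{^a_b}$) or pick up $(-\lambda)^{1/2}$ factors via $\nu_a=(-\lambda)^{1/2}T_a$ — a point the paper leaves implicit.
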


	The proofs of Lemma \ref{t:rdein} and Theorem \ref{thm-hyperb-gravity} are presented as below.
	
	\subsection{Proof of Lemma \ref{t:rdein}}	
	\begin{proof}
		Firstly, by taking the trace of the Einstein equations $R^{ab}-\frac{1}{2}R g^{ab}+\Lambda g^{ab}= \mathcal{T}^{ab}$, we obtain $R=\frac{2}{n-1}[(n+1)\Lambda- \mathcal{T}]$ with $\mathcal{T} := g^{a b} \mathcal{T}_{a b}$. It, inserting $R$ into the Einstein equations,  further yields
		\begin{equation*}
			R^{ab}-\frac{2}{n-1}\Lambda g^{ab}= \mathcal{T}^{ab}-\frac{1}{n-1} \mathcal{T} g^{ab}.
		\end{equation*}
		The Ricci tensors, by Lemma \ref{t:Rdop}, can be expressed 
		\als{
			R^{ab}=\frac{1}{2} g^{cd}\udn{c}\udn{d}g^{ab}+\nabla^{(a}X^{b)}+\udl{R}^{ab}+P^{ab}(g^{-1})+Q^{ab}(g^{-1}, \udn{} g^{-1}),
		}
		Substitute the wave gauge \eqref{E:CONSTR1} into the above formula and then we arrive at
		\als{
			& g^{cd}\udn{c}\udn{d} g^{ab}+2\nabla^{(a} f^{b)}+2\ulR^{ab}+2P^{ab}+2Q^{ab} -\frac{4\Lambda}{n-1}  g^{ab} =2 \mathcal{T}^{ab}-\frac{2}{n-1} \mathcal{T} g^{ab} \notag  \\
			&\hspace{0.5cm} = {}  2 g^{bd}F^{ac}\cdot F_{dc}-\frac{1}{n-1}g^{ab}F^{cd}\cdot F_{cd} \nnb \\
			&\hspace{0.5cm}= {}  2 g^{bd} (-E^a E_d + H^{a b} H_{d b} +  T_d H^{a b} E_b + T^a E^b H_{d b} + T^a T^d |E|^2 ) -\frac{1}{n-1} g^{ab} (-2 |E|^2 + |H|^2).
		} 
		This completes the proof.  
	\end{proof}

	\subsection{Proof of Theorem \ref{thm-hyperb-gravity}}

	\begin{proof}
		For the reduced Einstein equations \eqref{eq-redu-Einstein}, by denoting $\tensor{g}{^{ab}_d}:=\nb_d g^{ab}$, we note the identities
		\begin{gather*}
			\underline{Q}^{edc}\nb_c\nb_d g^{ab} = \nu^e g^{dc}\nb_c \tensor{g}{^{ab}_d}+\nu^d g^{ec}(\nb_c\nb_d-\nb_d\nb_c) g^{ab}= \nu^e g^{dc}\nb_c \tensor{g}{^{ab}_d}+\nu^d g^{ec}(\tensor{\underline{R}}{_{cdf}^a} g^{fb}+\tensor{\underline{R}}{_{cdf}^b}g^{fa})\\
			\intertext{and}
			\nu^c\nb_c g^{ab}= \tensor{g}{^{ab}_c} \nu^c. 
		\end{gather*}
		Direct calculations conclude this theorem.  
	\end{proof}

	\subsection{Leibniz rule for Lie derivative over principle bundle}\label{s:lieleb}	
	\begin{lemma}\label{t:lieleb}
		Suppose $G$ is a Lie group and $\mathcal{G}$ is the associated Lie algebra, and there is a $G$-principle bundle over the base manifold $\mathcal{M}$. Let $V$ be any vector field on $\mathcal{M}$, $A_a$ a $\mathcal{G}$-valued one form on $\mathcal{M}$. Then
		\begin{equation*}
			\lie_{V} [A_a, A_b]=[\lie_{V} A_a,A_b]+[A_a,\lie_V A_b].
		\end{equation*}
	\end{lemma}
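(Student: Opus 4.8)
The plan is to reduce the identity to the ordinary Leibniz rule for Lie derivatives acting on tensor products, by trivializing the Lie algebra fiber. Since the statement places $A_a$ as a $\mathcal{G}$-valued one-form on the base $\mathcal{M}$ and $V$ as a vector field on $\mathcal{M}$, the bracket $[A_a,A_b]$ is the $\mathcal{G}$-valued $(0,2)$-tensor obtained by applying the (point-independent) Lie bracket of $\mathcal{G}$ fiberwise. So I would fix a basis $\{e_I\}$ of the finite-dimensional Lie algebra $\mathcal{G}$, with structure constants $c_{IJ}^K$ defined by $[e_I,e_J]=c_{IJ}^K e_K$, and write $A_a=\omega^I_a e_I$ with each $\omega^I$ an ordinary scalar-valued one-form on $\mathcal{M}$. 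Then $[A_a,A_b]=c_{IJ}^K\,\omega^I_a\omega^J_b\,e_K$, and the whole point is that $\lie_V$ sees only the tensorial factor $\omega^I_a\omega^J_b$, since $c_{IJ}^K$ are numbers and the basis vectors $e_K$ are constant.

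First I would record the two elementary facts that carry the argument: (i) for any $\mathcal{G}$-valued tensor expanded in the fixed basis, $\lie_V(\Phi^I e_I)=(\lie_V\Phi^I)e_I$, because $\lie_V e_I=0$; and (ii) for ordinary tensors, $\lie_V$ satisfies the Leibniz rule over the tensor product, so that $\bigl(\lie_V(\omega^I\otimes\omega^J)\bigr)_{ab}=(\lie_V\omega^I)_a\,\omega^J_b+\omega^I_a\,(\lie_V\omega^J)_b$. Combining (i) and (ii) gives
\[
\lie_V[A_a,A_b]=c_{IJ}^K\bigl((\lie_V\omega^I)_a\,\omega^J_b+\omega^I_a\,(\lie_V\omega^J)_b\bigr)e_K.
\]
Using (i) once more in the form $(\lie_V A)_a=(\lie_V\omega^I)_a\,e_I$, the first bracketed group equals $[(\lie_V A)_a,A_b]$ and the second equals $[A_a,(\lie_V A)_b]$, which is exactly the asserted identity.

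As an alternative, basis-free derivation I would mention the flow argument: letting $\phi_t$ be the local flow of $V$, the pullback $\phi_t^\ast$ acts only on the cotangent slots and commutes with the pointwise Lie bracket, so $\phi_t^\ast[A_a,A_b]=[(\phi_t^\ast A)_a,(\phi_t^\ast A)_b]$; differentiating at $t=0$ and using bilinearity (and smoothness) of the bracket yields the Leibniz rule directly from $\lie_V=\frac{d}{dt}\big|_{t=0}\phi_t^\ast$.

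I do not expect any genuine obstacle here; this is a routine lemma. The only point requiring care is the conceptual one that $\lie_V$ differentiates the spacetime (form) indices but leaves the Lie-algebra fiber untouched, which is precisely why the structure constants and basis vectors pass through the derivative. Both the basis computation and the flow computation make this explicit, and I would present the basis version as the main proof with the flow version noted as a remark.
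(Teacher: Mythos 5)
Your main argument is essentially the paper's proof: the paper likewise expands $A$ in a fixed Lie-algebra basis (phrased via a faithful matrix representation $\{\theta_\mu\}$, whereas you use an abstract basis $\{e_I\}$ with structure constants, a harmless and slightly cleaner variant) and then applies the scalar Leibniz rule, noting that the structure constants and basis elements pass through $\lie_V$. The flow-pullback argument you sketch as an alternative is a nice basis-free bonus but is not what the paper does.
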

	\begin{proof}
		Let $\{\theta_\mu\}$ be a basis of a  faithful real matrix representation of the Lie algebra $\mathcal{G}$,
		\begin{align*}
			A_a=A_a^\mu\theta_\mu \AND A_b=A_b^\nu\theta_\nu.
		\end{align*}
		Then, since $[\theta_\mu,\theta_\nu]$ are determined by structural constants of $\mathcal{G}$, we calculate $\lie_{V} [A_a, A_b]$, with the help of the linearity and the Leibniz's rule of Lie derivative,
		\begin{align*}
			&\lie_{V} [A_a, A_b]=\lie_V([\theta_\mu,\theta_\nu] A^\mu_a  A^\nu_b)=(\lie_V A^\mu_a)  A^\mu_b [\theta_\mu,\theta_\nu]+A^\mu_a  \lie_V A^\nu_b [\theta_\mu,\theta_\nu] \notag  \\
			&\hspace{1cm}= [\lie_V A^\mu_a \theta_\mu, A^\nu_b \theta_\nu] + [ A^\mu_a \theta_\mu, \lie_V A^\nu_b \theta_\nu] =[\lie_V A_a , A_b ] + [ A_a , \lie_V A_b ].
		\end{align*}
		We then complete the proof.
	\end{proof}

	\section*{Acknowledgement}
	C.L. is supported by the Fundamental Research Funds for the Central Universities, HUST: $5003011036$. 
	J.W. is supported by NSFC (Grant No. 11701482).


\begin{thebibliography}{10}

\bibitem{Alinhac2009}
Serge Alinhac, \emph{Hyperbolic partial differential equations},
  Springer-Verlag GmbH, June 2009.

\bibitem{Gavage2007a}
Sylvie Benzoni-Gavage and Denis Serre, \emph{Multidimensional hyperbolic
  partial differential equations: first-order systems and applications},
  Clarendon Press, Oxford, 2007.

\bibitem{Choquetbruhat2021}
Yvonne Choquet-Bruhat, \emph{Th\'eor\`eme d'existence pour certains syst\`emes
  d'\'equations aux d\'eriv\'ees partialles non lin\'eaires}, Acta Math.
  \textbf{88} (1952), 141--225.

\bibitem{ChoquetBruhat1991}
\bysame, \emph{{Yang-Mills-Higgs fields in three space time dimensions}},
  Analyse globale et physique math{\'e}matique (Colloque {\`a} la m{\'e}moire
  d'Edmond Combet), M{\'e}moires de la Soci{\'e}t{\'e} Math{\'e}matique de
  France \textbf{1} (1991), 73--97.

\bibitem{ChoquetBruhat1992}
\bysame, \emph{Cosmological {Yang}{\textendash}{Mills} hydrodynamics}, Journal
  of Mathematical Physics \textbf{33} (1992), no.~5, 1782--1785.

\bibitem{Choquet-Bruhat2009}
\bysame, \emph{{General relativity and the Einstein equations}}, Oxford
  University Press, 2009.

\bibitem{ChoquetBruhat1982a}
Yvonne Choquet-Bruhat and Irving Segal, \emph{{Solution globale des
  {\'e}quations de Yang-Mills sur l'univers d'Einstein}}, Comptes Rendus de
  l'Acad{\'e}mie des Sciences. S{\'e}rie I (1982).

\bibitem{Eardley1982}
Douglas~M. Eardley and Vincent Moncrief, \emph{{The global existence of
  Yang-Mills-Higgs fields in 4-dimensional Minkowski space}}, Communications in
  Mathematical Physics \textbf{83} (1982), no.~2, 171--191.

\bibitem{Friedrich-hyperbolicity}
Helmut Friedrich, \emph{{On the hyperbolicity of Einstein's and other gauge
  field equations}}, Communications in Mathematical Physics \textbf{100}
  (1985), 525--543.

\bibitem{Friedrich1991}
\bysame, \emph{On the global existence and the asymptotic behavior of solutions
  to the {Einstein-Maxwell-Yang-Mills} equations}, Journal of Differential
  Geometry \textbf{34} (1991), 275--345.

\bibitem{Ghanem2014}
Sari Ghanem, \emph{{Maxwell and Yang-Mills} equations on curved black hole
  space-times}, Ph.D. thesis, Universit\'{e} Paris Diderot - Paris 7, Institut
  de Math\'{e}matiques de Jussieu, 2014.

\bibitem{Ginibre1981}
J.~Ginibre and G.~Velo, \emph{{The Cauchy problem for coupled Yang-Mills and
  scalar fields in the temporal gauge}}, Communications in Mathematical Physics
  \textbf{82} (1981), no.~1, 1--28.

\bibitem{Ginibre1982}
\bysame, \emph{{The Cauchy problem for coupled Yang-Mills and scalar fields in
  the Lorentz gauge}}, Annales de l'I.H.P. Physique th{\'e}orique \textbf{36}
  (1982), no.~1, 59--78 (eng).

\bibitem{Gourgoulhon2012}
Eric Gourgoulhon, \emph{3+1 formalism in general relativity}, Springer Berlin
  Heidelberg, 2012.

\bibitem{Kerner1974}
Richard Kerner, \emph{{On the Cauchy problem for the Yang-Mills field
  equations}}, Annales de l'I.H.P. Physique th\'eorique \textbf{20} (1974),
  no.~3, 279--283.

\bibitem{Lindblad2005b}
Hans Lindblad and Igor Rodnianski, \emph{Global existence for the {Einstein}
  vacuum equations in wave coordinates}, Communications in Mathematical Physics
  \textbf{256} (2005), no.~1, 43--110.

\bibitem{Lindblad2010}
\bysame, \emph{The global stability of {Minkowski} space-time in harmonic
  gauge}, Annals of Mathematics \textbf{171} (2010), no.~3, 1401--1477.

\bibitem{Liu2021}
Chao Liu, \emph{Blowups and long-time developments of irregularly-shaped
  {Euler-Poisson} dominated molecular clouds}, https://arxiv.org/abs/2102.11550
  (2021).

\bibitem{Liu2018b}
Chao Liu and Todd~A. Oliynyk, \emph{Cosmological newtonian limits on large
  spacetime scales}, Communications in Mathematical Physics \textbf{364}
  (2018), no.~3, 1195--1304.

\bibitem{Liu2018}
\bysame, \emph{Newtonian limits of isolated cosmological systems on long time
  scales}, Annales Henri Poincar{\'{e}} \textbf{19} (2018), no.~7, 2157--2243.

\bibitem{LW2021a}
Chao Liu, Todd~A. Oliynyk, and Jinhua Wang, \emph{{The global existence and
  stability of de Sitter-like solutions to the arbitrary dimensional
  Einstein-Yang-Mills system}}, Preprint (2021).

\bibitem{Liu2018a}
Chao Liu and Changhua Wei, \emph{Future stability of the {FLRW} spacetime for a
  large class of perfect fluids}, Annales Henri Poincar{\'{e}} (2021).

\bibitem{Majda2012}
Andrew Majda, \emph{Compressible fluid flow and systems of conservation laws in
  several space variables}, Applied Mathematical Sciences, Springer New York,
  2012.

\bibitem{Munkres2015}
James Munkres, \emph{Topology}, Pearson India Education Services Pvt. Ltd,
  Noida, Uttar Pradesh, India, 2015.

\bibitem{Racke2015}
Reinhard Racke, \emph{Lectures on nonlinear evolution equations}, Springer
  International Publishing, 2015.

\bibitem{Ringstroem2008}
Hans Ringstr{\"o}m, \emph{{Future stability of the Einstein-non-linear scalar
  field system}}, Inventiones mathematicae \textbf{173} (2008), 123.

\bibitem{Ringstroem2009}
\bysame, \emph{The {Cauchy} problem in general relativity}, ESI lectures in
  mathematics and physics, European Mathematical Society, 2009.

\bibitem{Segal1979}
Irving Segal, \emph{{The Cauchy problem for the Yang-Mills equations}}, Journal
  of Functional Analysis \textbf{33} (1979), no.~2, 175--194.

\bibitem{Taylor2010}
Michael~E. Taylor, \emph{Partial differential equations {III}: Nonlinear
  equations}, second ed., Applied Mathematical Sciences, Springer New York,
  2010.

\bibitem{Wald2010}
Robert M~. Wald, \emph{General relativity}, University of Chicago Press, 2010.

\bibitem{Wang2019}
Jinhua Wang, \emph{Future stability of the $1+3$ {M}ilne model for the
  {E}instein-{K}lein-{G}ordon system}, Classical and Quantum Gravity
  \textbf{36} (2019), no.~22, 225010.

\end{thebibliography}
\end{document}